\theoremstyle{plain}
\newtheorem*{theorem*}{Theorem}
\newtheorem{theorem}{Theorem}[section] 
\newtheorem{proposition}[theorem]{Proposition}
\newtheorem{corollary}[theorem]{Corollary}
\theoremstyle{definition}
\newtheorem{remark}[theorem]{Remark}
\tikzset{middlearrow/.style={
			decoration={markings,
				mark= at position 0.6 with {\arrow{#1}} ,
			},
			postaction={decorate}
		}
	}
\tikzset{->-/.style={decoration={
				markings,
				mark=at position #1 with {\arrow{latex}}},postaction={decorate}}}
	\tikzset{-<-/.style={decoration={
				markings,
				mark=at position #1 with {\arrowreversed{latex}}},postaction={decorate}}}
\numberwithin{equation}{section}
\def\bigO{{\mathcal O}}
\tikzset{
	master/.style={
		execute at end picture={
			\coordinate (lower right) at (current bounding box.south east);
			\coordinate (upper left) at (current bounding box.north west);
		}
	},
	slave/.style={
		execute at end picture={
			\pgfresetboundingbox
			\path (upper left) rectangle (lower right);
		}
	}
}
\tikzset{middlearrow/.style={
		decoration={markings,
			mark= at position 0.6 with {\arrow{#1}} ,
		},
		postaction={decorate}
	}
}
\newcommand{\re}{\text{\upshape Re\,}}
\newcommand{\im}{\text{\upshape Im\,}}
\newcommand{\diag}{\text{\upshape diag\,}}
\let\oldbibliography\thebibliography
\renewcommand{\thebibliography}[1]{\oldbibliography{#1}
\setlength{\itemsep}{-3pt}}
\title{On the generating function of the Pearcey process}
\date{}
\author{Christophe Charlier and Philippe Moreillon}
\date{\small Department of Mathematics, KTH Royal Institute of Technology, \\
\small 100 44 Stockholm, Sweden. \\ \smallskip
\small E-mail: cchar@kth.se, phmoreil@kth.se}
\begin{document}
\maketitle
		%\vspace*{-25pt}
\begin{abstract}
The Pearcey process is a universal point process in random matrix theory. In this paper, we study the generating function of the Pearcey process on any number $m$ of intervals. We derive an integral representation for it in terms of a Hamiltonian that is related to a system of $6m+2$ coupled nonlinear equations. We also obtain asymptotics for the generating function as the size of the intervals get large, up to and including the constant term. This work generalizes some recent results of Dai, Xu and Zhang, which correspond to $m=1$.
\end{abstract}

\medskip

\noindent
{\small{\sc AMS Subject Classification (2020)}: 41A60, 60B20, 30E25.}

\noindent
{\small{\sc Keywords}: Pearcey point process, Generating function asymptotics, Hamiltonian, Riemann--Hilbert problems.}

\section{Introduction and statement of results}
%A point process is a probability measure over all locally finite point configurations. Determinantal point processes form a special class of point processes with a rich algebraic structure. Each such process is associated to a kernel. 

% For a more precise measure theoretical background, such as the construction of the appropriate $\sigma$-algebra, we refer to \cite{BorodinPoint}.

%A remarkable feature of random matrix eigenvalues is their universal nature: their behavior on microscopic scales is similar for a large class 

In random matrix theory, the universality conjecture  asserts that the microscopic behavior of the eigenvalues of large random matrices is similar for many different models. More precisely, it is expected that the local eigenvalue statistics only depend on the symmetry class of the matrix ensemble and on the nature of the point around which these statistics are considered \cite{ErdosYauUniversality, ForresterBook, KuijlaarsUniversality}. The Pearcey process is one of the canonical point processes from the theory of random matrices: it models the asymptotic behavior of the eigenvalues near the points of the spectrum where the density of states admits a cusp-like singularity. This process appears in Gaussian random matrices with an external source \cite{BreHik1, BreHik2, BleherKuijlaarsIII}, Hermitian random matrices with independent, not necessarily identically distributed entries \cite{ErdosKrugerSchroder}, and in general Wishart matrices with correlated entries \cite{HHN2}. The Pearcey process is also universal in a sense that goes beyond random matrix theory: it appears in certain models of skew plane partitions \cite{OkounkovReshetikhin} and of Brownian motions \cite{AdlerOrantinMoerbeke, AdlerMoerbeke, BleherKuijlaarsIII, TradWidPearcey, GeuZhang}.

%A fundamental object in the study of any given point processes is the the joint moments of the counting function.

\medskip The Pearcey process is the determinantal point process on $\mathbb{R}$ associated with the kernel
\begin{align}
K^{\mathrm{Pe}}_{\rho}(x,y) & = \frac{\mathcal{P}(x)\mathcal{Q}''(y)-\mathcal{P}'(x)\mathcal{Q}'(y) + \mathcal{P}''(x)\mathcal{Q}(y) - \rho \mathcal{P}(x) \mathcal{Q}(y)}{x-y}, \label{def of the pearcey Kernel}
\end{align}
where $\rho \in \mathbb{R}$,
\begin{align*}
\mathcal{P}(x) = \frac{1}{2\pi} \int_{-\infty}^{\infty}e^{-\frac{1}{4}t^{4}-\frac{\rho}{2}t^{2}+itx}dt, \qquad \mathcal{Q}(y) = \frac{1}{2\pi} \int_{\Sigma}e^{\frac{1}{4}t^{4}+\frac{\rho}{2}t^{2}+ity}dt,
\end{align*}
and $\Sigma = (e^{\frac{\pi i}{4}}\infty,0)\cup(0,e^{\frac{3\pi i}{4}}\infty) \cup (e^{-\frac{3\pi i}{4}}\infty,0)\cup(0,e^{-\frac{\pi i}{4}}\infty)$. As $\rho$ increases, the point configurations with few points near $0$ are increasingly likely to occur, and when $\rho$ tends to $+\infty$, the Pearcey process ``factorizes" (in the sense of gap probabilities) into two independent Airy processes \cite{BertolaCafasso}.  Important progress on the large gap asymptotics for any fixed $\rho \in \mathbb{R}$ have only recently been obtained in \cite{DXZ2020}. %, in which the authors analyzed a deformed Pearcey determinant.

\medskip This paper is inspired by the work \cite{DXZ2020 thinning} of Dai, Xu and Zhang, and is concerned with the moment generating function of the Pearcey process. Let $X$ be a locally finite random point configuration distributed according to the Pearcey process, and let $N(x) := \#\{\xi \in X: \xi \in (-x,x)\}$ be the associated counting function. We are interested in the $m$-point generating function
\begin{align}\label{F as expectation intro}
F(r\vec{x},\vec{u}) := \mathbb{E}\bigg[ \prod_{j=1}^{m}e^{u_{j}N(rx_{j})} \bigg],
\end{align}
where 
\begin{align*}
r>0, \qquad m \in \mathbb{N}_{>0}, \qquad \vec{u}=(u_{1},\ldots,u_{m}) \in \mathbb{R}^{m}, \qquad  \vec{x}=(x_{1},\ldots,x_{m}) \in \mathbb{R}^{+,m}_{\mathrm{ord}},
\end{align*}
and $\mathbb{R}^{+,m}_{\mathrm{ord}}:=\{(x_{1},\ldots,x_{m}):0<x_{1}<\ldots<x_{m}<+\infty\}$. %We will also refer to \eqref{F as expectation intro} as the $m$-point generating function

\medskip Our main results are stated in Theorems \ref{thm: hamiltonian representation of the Pearcey determinant} and \ref{thm: asymp of fredholm determinant} below and can be summarized as follows:
\begin{itemize}
\item Theorem \ref{thm: hamiltonian representation of the Pearcey determinant} establishes an integral representation for $F(r\vec{x},\vec{u})$ in terms of a Hamiltonian related a system of $6m+2$ coupled differential equations. This system of equations admits at least one solution, which we derive from the Lax pair of a Riemann-Hilbert (RH) problem. The asymptotic properties of this solution are stated in Theorem \ref{thm:asymp of pkj as r to inf}.
\item Theorem \ref{thm: asymp of fredholm determinant} is concerned with the asymptotic properties of the generating function as the size of the intervals get large. Specifically, Theorem \ref{thm: asymp of fredholm determinant} gives a precise asymptotic formula, up to and including the constant term, for $F(r\vec{x},\vec{u})$ as $r \to +\infty$.
\end{itemize}
For $m=1$, Theorems \ref{thm:asymp of pkj as r to inf}, \ref{thm: hamiltonian representation of the Pearcey determinant} and \ref{thm: asymp of fredholm determinant} have previously been obtained in \cite{DXZ2020 thinning}. The general case $m \geq 2$ allows to capture the correlation structure of the Pearcey process, see Corollary \ref{coro:correlation structure}, and to analyze the joint fluctuations of the counting function, see Corollary \ref{coro: central limit theorem}. We also expect the case $m=2$ of Theorem \ref{thm: asymp of fredholm determinant} to play an important role in future studies of the rigidity of the Pearcey process (we comment more on that at the end of this section).

%\subsection{Integral representation of $F(r\vec{x},\vec{s})$}

\medskip The relevant system of $6m+2$ coupled differential equations depends on unknown functions which are denoted 
\begin{align}\label{unknown functions intro}
p_{0}(r), \; q_{0}(r), \; p_{j,k}(r), \; q_{j,k}(r), \qquad k=1,2,3, \quad j=1,\ldots,m,
\end{align} 
and is as follows:
\begin{align}\label{big system of ODEs}
& \begin{cases}
p_{0}'(r) = -\sqrt{2}\sum_{j=1}^{m}x_{j}p_{j,3}(r)q_{j,2}(r),\\
q_{0}'(r) = \sqrt{2}\sum_{j=1}^{m}x_{j}p_{j,2}(r)q_{j,1}(r),\\
q_{j,1}'(r) = \frac{2}{r}S_{11}(r)q_{j,1}(r) + x_{j}q_{j,2}(r) + \frac{2}{r}S_{31}(r)q_{j,3}(r), \\
q_{j,2}'(r) =  \sqrt{2}p_{0}(r)x_{j}q_{j,1}(r) + \frac{2}{r}S_{22}(r)q_{j,2}(r) + x_{j}q_{j,3}(r), \\
q_{j,3}'(r) = (rx_{j}^{2} + \frac{2}{r}S_{13}(r))q_{j,1}(r) + \sqrt{2}q_{0}(r)x_{j}q_{j,2}(r) + \frac{2}{r}S_{33}(r)q_{j,3}(r), \\
p_{j,1}'(r) = -\frac{2}{r}S_{11}(r)p_{j,1}(r) - \sqrt{2}p_{0}(r)x_{j}p_{j,2}(r) - (rx_{j}^{2} + \frac{2}{r}S_{13}(r))p_{j,3}(r), \\
p_{j,2}'(r) = -x_{j}p_{j,1}(r) - \frac{2}{r}S_{22}(r)p_{j,2}(r) - \sqrt{2}q_{0}(r)x_{j}p_{j,3}(r), \\
p_{j,3}'(r) = -\frac{2}{r}S_{31}(r)p_{j,1}(r) - x_{j}p_{j,2}(r) - \frac{2}{r}S_{33}(r)p_{j,3}(r),
\end{cases}
\end{align}
where $j=1,\ldots,m$, and
\begin{align*}
S_{kl}(r) = \sum_{j=1}^{m} p_{j,k}(r)q_{j,l}(r), \qquad k,l=1,2,3.
\end{align*}
Furthermore, we require the functions \eqref{unknown functions intro} to satisfy the following $m$ relations
\begin{align}\label{sum relation between qj and pj}
\sum_{k=1}^{3} p_{j,k}(r)q_{j,k}(r) = 0, \qquad j=1,\ldots,m.
\end{align}
Let 
\begin{align*}
H(r)=H(r;p_{0},q_{0},\{p_{j,1},q_{j,1},p_{j,2},q_{j,2},p_{j,3},q_{j,3}\}_{j=1}^{m})
\end{align*}
be defined by
\begin{align}
H(r) = & \; \sqrt{2}p_{0}(r)\sum_{j=1}^{m}x_{j}p_{j,2}(r)q_{j,1}(r) + \sqrt{2}q_{0}(r)\sum_{j=1}^{m}x_{j}p_{j,3}(r)q_{j,2}(r) \nonumber \\
& + \sum_{j=1}^{m} x_{j}p_{j,1}(r)q_{j,2}(r) + \sum_{j=1}^{m} x_{j}p_{j,2}(r)q_{j,3}(r) +\sum_{j=1}^{m}rx_{j}^{2}p_{j,3}(r)q_{j,1}(r) \nonumber \\
& + \frac{1}{2r}\bigg(\Big( S_{11}(r) - S_{22}(r) + S_{33}(r) \Big)^{2} \nonumber \\
& - 2 \sum_{k=1}^{m}\sum_{\ell=1}^{m}\Big(p_{k,1}(r)p_{\ell,3}(r)-p_{k,3}(r)p_{\ell,1}(r)\Big)\Big(q_{k,1}(r)q_{\ell,3}(r)-q_{k,3}(r)q_{\ell,1}(r)\Big) \bigg). \label{def of Hamiltonian}
\end{align}
It is readily checked that $q_{0}'(r)=\frac{\partial H}{\partial p_{0}}(r)$, $p_{0}'(r)=-\frac{\partial H}{\partial q_{0}}(r)$ and
\begin{align}\label{Hamiltonian relation}
q_{j,k}'(r) = \frac{\partial H}{\partial p_{j,k}}(r), \qquad p_{j,k}'(r) = -\frac{\partial H}{\partial q_{j,k}}(r), \qquad j=1,\ldots,m, \; k=1,2,3,
\end{align}
and therefore $H$ is a Hamiltonian for the system \eqref{big system of ODEs}--\eqref{sum relation between qj and pj}.
\begin{theorem}\label{thm:asymp of pkj as r to inf}
Let $\rho \in \mathbb{R}$,
\begin{align*}
r>0, \qquad m \in \mathbb{N}_{>0}, \qquad \vec{u}=(u_{1},\ldots,u_{m}) \in \mathbb{R}^{m}, \quad \mbox{ and } \quad  \vec{x}=(x_{1},\ldots,x_{m}) \in \mathbb{R}^{+,m}_{\mathrm{ord}}.
\end{align*}
There exists at least one solution $(p_{0},q_{0},\{p_{j,1},q_{j,1},p_{j,2},q_{j,2},p_{j,3},q_{j,3}\}_{j=1}^{m})$ to the system of equations \eqref{big system of ODEs} and \eqref{sum relation between qj and pj} satisfying the following asymptotics. As $r \to + \infty$, we have
\begin{subequations}\label{all asymp as r to inf}
\begin{align}
& p_{0}(r) = \frac{\sqrt{3}}{2\sqrt{2}\pi}\sum_{\ell=1}^{m}u_{\ell}x_{\ell}^{\frac{2}{3}}r^{\frac{2}{3}} + \frac{1}{\sqrt{2}}\bigg( \frac{\rho^{3}}{54}+\frac{\rho}{2} \bigg) + \bigO(r^{-\frac{2}{3}}), \label{p0 large r in thm} \\
& p_{j,1}(r) = -\frac{1}{3\pi i} e^{\frac{1}{2}\theta_{3}(rx_{j})}  (rx_{j})^{\frac{1}{3}}\frac{u_{j}}{\mathcal{A}_{j}} \nonumber \\
& \hspace{1.5cm} \times \bigg( \cos(\vartheta_{j}(r)-\tfrac{\pi}{3}) + \frac{\sqrt{3}}{2\pi} \sum_{\ell=1}^{m} u_{\ell}\frac{x_{\ell}^{2/3}}{x_{j}^{2/3}}\cos(\vartheta_{j}(r)+\tfrac{\pi}{3}) \bigg)(1+\bigO(r^{-\frac{2}{3}})) \label{p1 large r in thm} \\
& p_{j,2}(r) = \frac{1}{3\pi i} e^{\frac{1}{2}\theta_{3}(rx_{j})}  \frac{u_{j}}{\mathcal{A}_{j}}  \cos(\vartheta_{j}(r))  (1+\bigO(r^{-\frac{2}{3}})) \label{p2 large r in thm} \\
& p_{j,3}(r) = -\frac{1}{3\pi i} e^{\frac{1}{2}\theta_{3}(rx_{j})}  (rx_{j})^{-\frac{1}{3}}\frac{u_{j}}{\mathcal{A}_{j}}  \cos(\vartheta_{j}(r)+\tfrac{\pi}{3}) (1+\bigO(r^{-\frac{2}{3}})) \label{p3 large r in thm} \\
& q_{0}(r) = -\frac{\sqrt{3}}{2\sqrt{2}\pi} \sum_{\ell=1}^{m}u_{\ell}x_{\ell}^{\frac{2}{3}}r^{\frac{2}{3}} + \frac{1}{\sqrt{2}}\bigg( -\frac{\rho^{3}}{54}+\frac{\rho}{2} \bigg) + \bigO(r^{-\frac{2}{3}}), \label{q0 large r in thm} \\
& q_{j,1}(r) = 2ie^{-\frac{1}{2}\theta_{3}(r x_{j})}(r x_{j})^{-\frac{1}{3}} \mathcal{A}_{j}\sin(\vartheta_{j}(r)-\tfrac{\pi}{3})(1+\bigO(r^{-\frac{2}{3}})), \label{q1 large r in thm} \\
& q_{j,2}(r) = -2ie^{-\frac{1}{2}\theta_{3}(r x_{j})} \mathcal{A}_{j}\sin(\vartheta_{j}(r))(1+\bigO(r^{-\frac{2}{3}})), \label{q2 large r in thm} \\
& q_{j,3}(r) = 2ie^{-\frac{1}{2}\theta_{3}(r x_{j})}(r x_{j})^{\frac{1}{3}} \mathcal{A}_{j} \nonumber\\
& \hspace{1.5cm}\times \bigg(\sin(\vartheta_{j}(r)+\tfrac{\pi}{3})-\frac{\sqrt{3}}{2\pi}\sum_{\ell=1}^{m}u_{\ell} \frac{x_{\ell}^{2/3}}{x_{j}^{2/3}}\sin(\vartheta_{j}(r)-\tfrac{\pi}{3})\bigg)(1+\bigO(r^{-\frac{2}{3}})), \label{q3 large r in thm}
\end{align}
\end{subequations}
where $\theta_{3}(r)=\frac{3}{4}r^{\frac{4}{3}}+\frac{\rho}{2}r^{\frac{2}{3}}$, 
\begin{align}
& \mathcal{A}_{j} = |\Gamma(1-\tfrac{u_{j}}{2\pi i})|\exp \bigg( -\frac{u_{j}}{3}-\sum_{k=j+1}^{m} \frac{u_{k}}{2}-\sum_{\substack{k=1 \\ k \neq j}}^{m} \frac{u_{k}}{2 \pi}\arctan \frac{\sqrt{3}x_{k}^{2/3}}{x_{k}^{2/3}+2x_{j}^{2/3}} \bigg), \label{def of mathcal A j} \\
& \vartheta_{j}(r) = -\frac{3\sqrt{3}}{8}(rx_{j})^{\frac{4}{3}}+\frac{\sqrt{3}\rho}{4}(rx_{j})^{\frac{2}{3}} + \arg \Gamma(1-\tfrac{u_{j}}{2\pi i}) \nonumber \\
& \hspace{1.2cm} - \frac{u_{j}}{2\pi} \bigg( \frac{4}{3}\log (rx_{j}) + \log \frac{9}{2} \bigg) - \sum_{\substack{k=1 \\ k \neq j}}^{m} \frac{u_{k}}{2\pi}\log \frac{|x_{j}^{2/3}-\omega x_{k}^{2/3}|}{|x_{j}^{2/3}- x_{k}^{2/3}|}, \label{def of vartheta j}
\end{align}
$\omega:=e^{\frac{2\pi i}{3}}$ and $\Gamma$ is Euler's Gamma function. As $r \to 0$, we have
\begin{subequations}\label{all asymp as r to 0}
\begin{align}
& p_{0} = \frac{1}{\sqrt{2}}\bigg( \frac{\rho^{3}}{54}+\frac{\rho}{2} \bigg) + \bigO(r), \qquad q_{0} = \frac{1}{\sqrt{2}}\bigg( -\frac{\rho^{3}}{54}+\frac{\rho}{2} \bigg) + \bigO(r) \label{asymp of p0 and q0 as r to 0} \\
& p_{j,1}(r) = \bigO(r), \qquad p_{j,2}(r) = \bigO(1), \qquad p_{j,3}(r) = \bigO(r), \qquad j=1,\ldots,m, \label{asymp of pjk as r to 0} \\
& q_{j,1}(r) = \bigO(1), \qquad \hspace{0.017cm} q_{j,2}(r) = \bigO(r), \qquad \hspace{0.017cm} q_{j,3}(r) = \bigO(1), \qquad \hspace{0.017cm} j=1,\ldots,m. \label{asymp of qjk as r to 0}
\end{align}
\end{subequations}
\end{theorem}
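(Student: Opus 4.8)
The plan is to construct an explicit solution from a Riemann--Hilbert (RH) problem and then to extract the two sets of asymptotics by a Deift--Zhou steepest descent analysis. First I would introduce the $3\times 3$ RH problem $\Phi = \Phi(z;r,\vec x,\vec u)$ associated with the Fredholm determinant representation of $F(r\vec x,\vec u)$: its jump contour is the union of the Pearcey rays with the $2m$ segments having endpoints $\pm rx_j$, and its jump matrices carry the weights $e^{u_j}$ on the appropriate pieces. Solvability of this RH problem for all $r>0$ follows from $F(r\vec x,\vec u)\neq 0$ together with a vanishing lemma. One then defines the functions in \eqref{unknown functions intro} as prescribed entries of the coefficient matrices in the expansion of $\Phi$ at $z=\infty$ (for $p_0,q_0$) and of the local data at the regular singular points $\pm rx_j$ (for $p_{j,k},q_{j,k}$, $k=1,2,3$). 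A direct computation shows that $\Phi$ obeys a Lax pair $\partial_z\Phi = A\Phi$, $\partial_r\Phi = B\Phi$ with rational coefficient matrices, whose zero-curvature condition $\partial_r A - \partial_z B + [A,B]=0$ reproduces exactly the system \eqref{big system of ODEs}; the $m$ constraints \eqref{sum relation between qj and pj} follow from the structural relations among the entries of $\Phi$ and its inverse at the points $\pm rx_j$. This already establishes the existence claim.

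For the asymptotics as $r\to+\infty$, I would carry out a steepest descent analysis on $\Phi$: normalize the behavior at infinity with a $g$-function (governed by the phase $\theta_3$), open lenses around the bands, and construct a global parametrix together with a Pearcey parametrix at the cusp $z=0$ and confluent hypergeometric parametrices at each of the $2m$ endpoints $\pm rx_j$. The endpoint parametrices produce the factor $\Gamma(1-\tfrac{u_j}{2\pi i})$ and the $\tfrac43\log(rx_j)$ contributions, while the global parametrix transports the effect of the remaining endpoints into $\mathcal{A}_j$ and $\vartheta_j$ --- namely the $\arctan$ sum in \eqref{def of mathcal A j} and the $\sum_{k\neq j}\frac{u_k}{2\pi}\log\frac{|x_j^{2/3}-\omega x_k^{2/3}|}{|x_j^{2/3}-x_k^{2/3}|}$ sum in \eqref{def of vartheta j}. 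A small-norm estimate for the error RH problem, whose jump matrices are uniformly $\bigO(r^{-2/3})$-close to the identity, then yields \eqref{all asymp as r to inf} after reading off the relevant entries.

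For $r\to 0$, the segments $[-rx_j,rx_j]$ collapse to the origin and $\Phi$ degenerates to an $r$-independent model RH problem, for which $p_0$ and $q_0$ take the frozen values $\tfrac{1}{\sqrt2}(\tfrac{\rho^3}{54}\pm\tfrac{\rho}{2})$; a perturbative expansion of $\Phi$ about that model gives the $\bigO(r)$ and $\bigO(1)$ bounds in \eqref{all asymp as r to 0}.

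The main obstacle is the $r\to+\infty$ steepest descent analysis in the regime $m\geq 2$. In contrast to the $m=1$ case treated in \cite{DXZ2020 thinning}, one must control how the $2m$ confluent hypergeometric parametrices interact through the global parametrix --- this interaction is precisely what produces the cross terms in $\mathcal{A}_j$ and $\vartheta_j$ --- and carefully manage the branch choices for the many $(rx_j)^{\pm1/3}$-type factors and the matching on the overlapping circles around the endpoints and the cusp. Keeping this bookkeeping consistent across all $j$ is the technically demanding part.
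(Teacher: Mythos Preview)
Your outline is correct and follows essentially the same route as the paper: the functions $(p_{0},q_{0},\{p_{j,k},q_{j,k}\})$ are extracted from the RH problem $\Phi$ exactly as you describe (Proposition~\ref{prop:solution of the system}), the constraints \eqref{sum relation between qj and pj} come from $\mbox{Tr}\,A_{j}=0$, and the large-$r$ and small-$r$ asymptotics are obtained by the steepest descent analyses of Sections~\ref{section: steepest descent for large r} and~\ref{section: steepest descent for small r}, with confluent hypergeometric parametrices at $\pm x_{j}$, the Pearcey parametrix at $0$, and the global parametrix carrying the cross terms into $\mathcal{A}_{j}$ and $\vartheta_{j}$. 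Your identification of the main technical issue --- tracking the interaction of the $2m$ endpoint parametrices through the global parametrix and the associated branch bookkeeping --- is precisely what the paper singles out (see the discussion in Subsection~\ref{subsubsection: asymptotics of global param near xj}).
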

\begin{remark}
Interestingly, for $m=1$, the functions $p_{0}$ and $q_{0}$ satisfy a system of coupled differential equations, see \cite[equations (2.19)--(2.20)]{DXZ2020 thinning} (see also \cite[equations (3.25)--(3.26)]{BreHik1} for $\rho=0$). However, it is unclear to us if this result admits a natural analogue for $m\geq 2$.
\end{remark}
It is well-known since the works of Jimbo, Miwa, M\^{o}ri and Sato \cite{JMMS1980} and of Tracy and Widom \cite{TraWidAiry, TraWidBessel} that the $1$-point generating functions of the universal sine, Airy and Bessel point processes are naturally related to the Painlev\'{e} theory. We also refer to \cite{HarnardTracyWidom, AGZ2010} for some Hamiltonian structures associated with the $m$-point functions of the sine and Airy processes, and to \cite{ClDoe, ChDoe} for some representations of the $m$-point functions of the Airy and Bessel processes in terms of the solution to a system of $m$ coupled Painlev\'{e} equations. It was shown in \cite[Theorem 2.2]{DXZ2020 thinning} that the $1$-point function of the Pearcey process admits an elegant representation in terms of a Hamiltonian associated to a system of 8 coupled differential equations. Our first main result generalizes \cite[Theorem 2.2]{DXZ2020 thinning} to an arbitrary $m$.
\begin{theorem}\label{thm: hamiltonian representation of the Pearcey determinant}
Let $\rho \in \mathbb{R}$,
\begin{align*}
r>0, \qquad m \in \mathbb{N}_{>0}, \qquad \vec{u}=(u_{1},\ldots,u_{m}) \in \mathbb{R}^{m}, \quad \mbox{ and } \quad  \vec{x}=(x_{1},\ldots,x_{m}) \in \mathbb{R}^{+,m}_{\mathrm{ord}}.
\end{align*}
The following relation holds
\begin{align}\label{integral representation of F}
F(r\vec{x},\vec{u}) = \exp \bigg( 2 \int_{0}^{r}H(\tau)d \tau \bigg),
\end{align}
with $H$ given by \eqref{def of Hamiltonian}, and where $(p_{0},q_{0},\{p_{j,1},q_{j,1},p_{j,2},q_{j,2},p_{j,3},q_{j,3}\}_{j=1}^{m})$ is a solution to the system of equations \eqref{big system of ODEs} and \eqref{sum relation between qj and pj} which satisfies the asymptotic formulas \eqref{all asymp as r to inf} and \eqref{all asymp as r to 0}. Furthermore,
\begin{align}\label{asymp of H as r to 0 in thm}
H(r) = \bigO(1), \qquad \mbox{as } r \to 0,
\end{align}
and as $r \to +\infty$, 
\begin{align}\label{asymp of H as r to inf in thm}
& H(r) = \sum_{j=1}^{m} \bigg( \frac{\sqrt{3}}{2\pi}u_{j}x_{j}^{\frac{4}{3}}r^{\frac{1}{3}} - \frac{\rho}{2\sqrt{3}\pi} u_{j} x_{j}^{\frac{2}{3}}r^{-\frac{1}{3}} + \frac{u_{j}^{2}}{3\pi^{2}r} - \frac{u_{j}}{3\sqrt{3}\pi r}\cos (2\vartheta_{j}(r)) \bigg) +\bigO(r^{-\frac{5}{3}}),
\end{align}
where $\vartheta_{j}(r)$ is defined in \eqref{def of vartheta j}.
\end{theorem}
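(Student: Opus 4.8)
The plan is to realise $F(r\vec{x},\vec{u})$ as a Fredholm determinant and to find a closed differential identity for $\frac{d}{dr}\log F$. Since the Pearcey process is determinantal with correlation kernel $K^{\mathrm{Pe}}_{\rho}$, and since $\prod_{j=1}^{m}e^{u_{j}N(rx_{j})}=\prod_{\xi\in X}w_{r}(\xi)$ where $w_{r}$ is the step function equal to $e^{u_{j}+\cdots+u_{m}}$ on $(-rx_{j},rx_{j})\setminus(-rx_{j-1},rx_{j-1})$, one has
\[
F(r\vec{x},\vec{u})=\det\big(I+(w_{r}-1)K^{\mathrm{Pe}}_{\rho}\big)
\]
as an operator on $L^{2}(-rx_{m},rx_{m})$; this is a ``thinned'' Pearcey determinant. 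Following the strategy of \cite{DXZ2020,DXZ2020 thinning}, I would then attach to this determinant the $3\times 3$ RH problem $\Psi(\cdot;r)$ whose jumps encode $K^{\mathrm{Pe}}_{\rho}$ and the constants $e^{u_{j}}$ --- this is exactly the RH problem used in the proof of Theorem \ref{thm:asymp of pkj as r to inf} --- and identify the unknowns \eqref{unknown functions intro} with coefficients in the expansions of $\Psi$ at $z=\infty$ (giving $p_{0},q_{0}$) and at the jump discontinuities $z=\pm rx_{j}$ (giving $p_{j,k},q_{j,k}$). The system \eqref{big system of ODEs}--\eqref{sum relation between qj and pj} is then the zero-curvature (compatibility) condition of the Lax pair of $\Psi$, and $H$ in \eqref{def of Hamiltonian} is the associated isomonodromic Hamiltonian --- consistent with the relations \eqref{Hamiltonian relation} already checked above.

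The analytic heart of the argument is the differential identity $\frac{d}{dr}\log F(r\vec{x},\vec{u})=2H(r)$, obtained by differentiating $\log\det(I+(w_{r}-1)K^{\mathrm{Pe}}_{\rho})$ in $r$ (note that $r$ enters only through the endpoints $\pm rx_{j}$ of the support of $w_{r}$), rewriting the resulting expression as a contour integral of $\operatorname{tr}(\Psi^{-1}\Psi'\,\partial_{r}J_{\Psi})$ along the jump contour, and evaluating it by residues via the expansions of $\Psi$ at $\infty$ and $\pm rx_{j}$. Granting this and the fact that $N(0)=0$ gives $F(0\cdot\vec{x},\vec{u})=1$, integration from $0$ to $r$ yields \eqref{integral representation of F}, provided $\int_{0}^{r}H(\tau)\,d\tau$ converges at the origin. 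To obtain \eqref{asymp of H as r to 0 in thm} I would substitute the small-$r$ asymptotics \eqref{all asymp as r to 0} into \eqref{def of Hamiltonian}: the bilinear terms $p_{0}x_{j}p_{j,2}q_{j,1}$, $q_{0}x_{j}p_{j,3}q_{j,2}$, $x_{j}p_{j,1}q_{j,2}$, $x_{j}p_{j,2}q_{j,3}$, $rx_{j}^{2}p_{j,3}q_{j,1}$ are manifestly $\bigO(1)$, and in the $\frac{1}{2r}(\cdots)$ block one has $S_{11},S_{22},S_{33}=\bigO(r)$ and $p_{k,1}p_{\ell,3}-p_{k,3}p_{\ell,1}=\bigO(r^{2})$, so that $(S_{11}-S_{22}+S_{33})^{2}$ and the double sum are $\bigO(r^{2})$ and dividing by $2r$ leaves them $\bigO(r)$; hence $H(r)=\bigO(1)$ as $r\to 0$.

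For the large-$r$ expansion \eqref{asymp of H as r to inf in thm} I would substitute \eqref{all asymp as r to inf} into \eqref{def of Hamiltonian}, using that in every term the exponential factors $e^{\frac{1}{2}\theta_{3}(rx_{j})}$ in the $p_{j,k}$ cancel against the $e^{-\frac{1}{2}\theta_{3}(rx_{j})}$ in the $q_{j,\ell}$, that $p_{0},q_{0}=\pm\frac{\sqrt{3}}{2\sqrt{2}\pi}\sum_{\ell}u_{\ell}x_{\ell}^{2/3}r^{2/3}+\bigO(1)$, and simplifying products of trigonometric functions of $\vartheta_{j}(r)$ via $2\cos\alpha\sin\beta=\sin(\alpha+\beta)-\sin(\alpha-\beta)$ and $2\cos^{2}\alpha-1=\cos 2\alpha$; the $\frac{1}{2r}(\cdots)$ block then contributes only at order $r^{-1}$. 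A convenient cross-check is that along a solution Hamilton's equations force $\frac{d}{dr}H=\frac{\partial H}{\partial r}=\sum_{j=1}^{m}x_{j}^{2}p_{j,3}q_{j,1}-\frac{1}{2r^{2}}\big((S_{11}-S_{22}+S_{33})^{2}-2\sum_{k}\sum_{\ell}(p_{k,1}p_{\ell,3}-p_{k,3}p_{\ell,1})(q_{k,1}q_{\ell,3}-q_{k,3}q_{\ell,1})\big)$, so that the non-oscillatory part of $x_{j}^{2}p_{j,3}q_{j,1}$ integrates to the $r^{1/3}$ and $r^{-1/3}$ terms while the rapidly oscillating part (phase $\asymp(rx_{j})^{4/3}$) integrates by non-stationary phase into the $\cos(2\vartheta_{j}(r))/r$ contribution. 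The main obstacle is precision: the asymptotics \eqref{all asymp as r to inf} carry a multiplicative error $1+\bigO(r^{-2/3})$, which only pins $H(r)$ down to $\bigO(r^{-1/3})$, whereas \eqref{asymp of H as r to inf in thm} is claimed through $\bigO(r^{-5/3})$. I would therefore push the Deift--Zhou steepest-descent analysis of $\Psi$ underlying Theorem \ref{thm:asymp of pkj as r to inf} to the next few orders --- one needs the subleading ($\rho$-dependent) non-oscillatory correction and the leading oscillatory correction to the relevant entries of $\Psi$ at $\infty$ and at $\pm rx_{j}$ --- feed those sharper expansions into \eqref{def of Hamiltonian}, and carefully track the cancellations, keeping all error terms uniformly $\bigO(r^{-5/3})$. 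Pinning down the coefficient $1/(3\pi^{2})$ of $u_{j}^{2}/r$ and the amplitude $1/(3\sqrt{3}\pi r)$ of the oscillatory term, which come from delicate combinations of the first subleading corrections together with the $\frac{1}{2r}(\cdots)$ block, is the most laborious point of the computation.
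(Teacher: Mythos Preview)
Your overall architecture is the same as the paper's: Fredholm determinant, integrable-kernel RH problem, Lax pair giving the system \eqref{big system of ODEs}--\eqref{sum relation between qj and pj}, a differential identity $\partial_{r}\log F=2H$, integration from $0$ using $F(0\cdot\vec{x},\vec{u})=1$, and substitution of \eqref{all asymp as r to 0} into \eqref{def of Hamiltonian} to get $H(r)=\bigO(1)$ as $r\to 0$. Your small-$r$ bookkeeping for the $\frac{1}{2r}(\cdots)$ block is correct.

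The one substantive divergence is how you propose to get \eqref{asymp of H as r to inf in thm}. You correctly diagnose that plugging \eqref{all asymp as r to inf} into \eqref{def of Hamiltonian} only determines $H$ to $\bigO(r^{-1/3})$, and your remedy is to push the steepest descent for the $p_{j,k},q_{j,k}$ two orders further and then fight through the bilinear cancellations in \eqref{def of Hamiltonian}. The paper avoids this entirely: along the way to $\partial_{r}\log F=2H$ it first establishes the trace identity
\[
H(r)=-\sum_{j=1}^{m}\mathfrak{s}_{j}x_{j}\Big[(\Phi_{j}^{(1)}(r))_{21}+(\Phi_{j}^{(1)}(r))_{31}\Big],
\]
where $\Phi_{j}^{(1)}$ is the subleading coefficient of the RH solution $\Phi$ at $rx_{j}$. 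This expresses $H$ linearly in the RH data rather than quadratically in the $p,q$'s, so one can read off $\Phi_{j}^{(1)}$ directly from the local parametrix (built from $\Phi_{\mathrm{HG}}$) and the small-norm matrix $R$. The resulting formula has three pieces: an $R^{-1}R'$ piece that is immediately $\bigO(r^{-5/3})$, an $E_{x_{j}}^{-1}E_{x_{j}}'$ piece that produces the $u_{j}^{2}/(3\pi^{2}r)$ and $\cos(2\vartheta_{j}(r))/r$ terms (after a neat simplification $\sum_{j}2i\beta_{j}x_{j}\,\mathrm{Im}\,d_{1,x_{j}}^{(1)}=\sum_{j}\beta_{j}^{2}$), and an $r^{4/3}f_{x_{j}}'(x_{j})\Upsilon_{j,21}^{(1)}$ piece that gives the $r^{1/3}$ and $r^{-1/3}$ terms. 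No higher-order expansion of the $p_{j,k},q_{j,k}$ is needed. Your route would eventually arrive at the same answer, but the trace formula is the key shortcut that makes the $\bigO(r^{-5/3})$ error accessible with the steepest-descent precision already in hand.
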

Since the functions $(p_{0},q_{0},\{p_{j,1},q_{j,1},p_{j,2},q_{j,2},p_{j,3},q_{j,3}\}_{j=1}^{m})$ appearing in the integral representation \eqref{integral representation of F} are rather complicated objects, it is natural to try to approximate $F(r\vec{x},\vec{u})$ for small and large values of $r$ with some explicit asymptotic formulas. Because $F(r\vec{x},\vec{u})$ is a Fredholm determinant (see \eqref{fredholm det piecewise constant thinning} below), the asymptotics of $F(r\vec{x},\vec{u})$ as $r \to 0$ can be easily obtained from an analysis of the kernel $K_{\rho}^{\mathrm{Pe}}(x,y)$ near $(x,y)=(0,0)$. A much more complicated question is to approximate $F(r\vec{x},\vec{u})$ for large values of $r$.  In the case of the sine, Airy and Bessel processes, the asymptotics for the $m$-point generating functions are known, see \cite{BW1983, ChSine} for sine, \cite{BothnerBuckingham, ChCl3} for Airy, \cite{BIP2019, ChBessel} for Bessel and \cite{ChLen Bessel} for the transition between Bessel and Airy.  The asymptotics for the $1$-point generating function of the Pearcey process, up to and including the notoriously difficult constant term, have recently been established in \cite[Theorem 2.3]{DXZ2020 thinning}. We provide here the generalization of \cite[Theorem 2.3]{DXZ2020 thinning} to an arbitrary $m$.
\begin{theorem}\label{thm: asymp of fredholm determinant}
Let 
\begin{align*}
\rho \in \mathbb{R}, \qquad m \in \mathbb{N}_{>0}, \qquad \vec{u}=(u_{1},\ldots,u_{m}) \in \mathbb{R}^{m}, \quad \mbox{ and } \quad  \vec{x}=(x_{1},\ldots,x_{m}) \in \mathbb{R}^{+,m}_{\mathrm{ord}}.
\end{align*}
As $r \to +\infty$, we have
\begin{multline}
F(r\vec{x},\vec{u})=\exp\bigg(   \sum_{j=1}^{m}u_{j}\mu_{\rho}(rx_{j}) + \sum_{j=1}^{m}\frac{u_{j}^{2}}{2}\sigma^{2}(rx_{j}) + \sum_{1\leq j < k \leq m} u_{j}u_{k}\Sigma(x_{k},x_{j})  \\
 + \sum_{j=1}^{m} 2 \log \big( G(1-\tfrac{u_{j}}{2\pi i})G(1+\tfrac{u_{j}}{2\pi i}) \big) + \bigO (r^{-\frac{2}{3}}) \bigg), \label{asymptotics in main thm}
\end{multline}
where $G$ is Barnes' $G$-function, and $\mu_{\rho}$, $\sigma^{2}$ and $\Sigma$ are given by
\begin{align*}
& \mu_{\rho}(x) = \frac{3\sqrt{3}}{4\pi}x^{\frac{4}{3}}- \frac{\sqrt{3}\rho}{2\pi} x^{\frac{2}{3}}, \\
& \sigma^{2}(x) = \frac{4}{3\pi^{2}}  \log x+\frac{1}{\pi^{2}} \log \frac{9}{2}, \\
& \Sigma(x_{k},x_{j}) = \frac{1}{\pi^{2}} \log  \frac{|x_{j}^{2/3}-\omega x_{k}^{2/3}|}{|x_{j}^{2/3}- x_{k}^{2/3}|},
\end{align*}
where $\omega =e^{\frac{2\pi i}{3}}$. Furthermore, \eqref{asymptotics in main thm} holds uniformly for $\rho$ in compact subsets of $\mathbb{R}$, for $\vec{u}$ in compact subsets of $\mathbb{R}^{m}$, and for $\vec{x}$ in compact subsets of $\mathbb{R}^{+,m}_{\mathrm{ord}}$. The asymptotic formula \eqref{asymptotics in main thm} can also be differentiated any number of times with respect to $u_{1},\ldots,u_{m}$ at the expense of increasing the error term in the following way. Let $\widetilde{F}(r\vec{x},\vec{u})$ be the right-hand side of \eqref{asymptotics in main thm} without the error term, and denote the error term by $\mathcal{E}=\log F(r\vec{x},\vec{u}) - \log \widetilde{F}(r\vec{x},\vec{u})$. For any $k_{1},\ldots,k_{m} \in \mathbb{N}_{\geq 0}$, we have
\begin{align}\label{derivative of error term}
\partial_{u_{1}}^{k_{1}}\ldots \partial_{u_{m}}^{k_{m}} \mathcal{E} = \bigO \bigg( \frac{(\log r)^{k_{1}+\ldots+k_{m}}}{r^{2/3}} \bigg), \qquad \mbox{as } r \to + \infty.
\end{align}
\end{theorem}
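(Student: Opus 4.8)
### Proof Proposal for Theorem \ref{thm: asymp of fredholm determinant}

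The plan is to combine the integral representation \eqref{integral representation of F} of Theorem \ref{thm: hamiltonian representation of the Pearcey determinant} with the asymptotics of the Hamiltonian. The starting point is the identity $\log F(r\vec{x},\vec{u}) = 2\int_0^r H(\tau)\, d\tau$. I would split this as $2\int_0^{r_0} H(\tau)\,d\tau + 2\int_{r_0}^r H(\tau)\,d\tau$ for some fixed but arbitrary $r_0>0$; the first integral is a finite constant that depends on $\rho, \vec{x}, \vec{u}$ (and $r_0$), and is bounded thanks to \eqref{asymp of H as r to 0 in thm}. For the second integral I would insert the large-$r$ expansion \eqref{asymp of H as r to inf in thm} and integrate term by term. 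The terms $\tfrac{\sqrt3}{2\pi}u_j x_j^{4/3}\tau^{1/3}$ and $-\tfrac{\rho}{2\sqrt3\pi}u_j x_j^{2/3}\tau^{-1/3}$ integrate to the power-law pieces $u_j\mu_\rho(rx_j)$ plus constants; the term $\tfrac{u_j^2}{3\pi^2\tau}$ integrates to $\tfrac{u_j^2}{3\pi^2}\log r + \text{const}$, which must be reorganized into $\tfrac{u_j^2}{2}\sigma^2(rx_j)$ plus a constant; the oscillatory term $-\tfrac{u_j}{3\sqrt3\pi\tau}\cos(2\vartheta_j(\tau))$ integrates to $\bigO(r^{-2/3})$ after one integration by parts, using that $\vartheta_j'(\tau) \sim -\tfrac{\sqrt3}{2}x_j^{4/3}\tau^{1/3}$ grows like $\tau^{1/3}$ (so $\int_{r_0}^r \tau^{-1}\cos(2\vartheta_j)\,d\tau = \bigO(r^{-4/3})$ from the boundary terms at $r$, and the $r_0$-boundary contribution is absorbed into the constant); and the error $\bigO(\tau^{-5/3})$ integrates to $\bigO(r^{-2/3})$. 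Collecting everything, $\log F(r\vec{x},\vec{u})$ equals the explicit power-law and logarithmic expressions appearing in \eqref{asymptotics in main thm}, plus an $r$-independent constant $C(\rho,\vec{x},\vec{u})$, plus $\bigO(r^{-2/3})$.

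The crux of the argument, and the main obstacle, is identifying the constant $C(\rho,\vec{x},\vec{u})$ with the Barnes-$G$ and $\Sigma$ contributions, i.e. with $\sum_j 2\log(G(1-\tfrac{u_j}{2\pi i})G(1+\tfrac{u_j}{2\pi i})) + \sum_{j<k} u_j u_k \Sigma(x_k,x_j)$. The integral representation alone only shows $C$ is some finite quantity; it does not pin down its value. To evaluate it I would use a separate, independent asymptotic analysis — a Riemann--Hilbert / Deift--Zhou steepest descent computation for the Fredholm determinant $F(r\vec{x},\vec{u})$ itself (this is presumably carried out elsewhere in the paper in the course of constructing the solution of Theorem \ref{thm:asymp of pkj as r to inf}). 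The structure of the proof here is: (i) use the integral representation to get the \emph{$r$-dependence} cheaply and in a form valid for all compact parameter ranges, and then (ii) quote or invoke the RH analysis to get the \emph{constant term}. The $\Sigma$-term, which encodes the cross-correlation between intervals and is the genuinely new feature for $m\geq 2$, should emerge from the interaction of the local parametrices at the different endpoints $\pm rx_j$ in the RH analysis; in the $m=1$ case this reduces exactly to \cite[Theorem 2.3]{DXZ2020 thinning}, and the consistency with that result serves as a check.

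For the uniformity statement, I would note that all the asymptotic formulas in Theorems \ref{thm:asymp of pkj as r to inf} and \ref{thm: hamiltonian representation of the Pearcey determinant} are themselves uniform for $\rho$, $\vec{u}$, $\vec{x}$ in the stated compact sets — this is built into the RH construction, where the error terms come from $L^2$/$L^\infty$ bounds on small-norm RH problems whose jump matrices depend continuously on the parameters, so the implied constants can be taken uniform on compacta. The constant $C(\rho,\vec{x},\vec{u})$ is a continuous (indeed real-analytic) function of the parameters, matching the stated closed form, which is manifestly continuous on $\mathbb{R}^{+,m}_{\mathrm{ord}}$ (the $\Sigma$ terms stay bounded because $x_j\neq x_k$ on this set). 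Finally, the differentiated estimate \eqref{derivative of error term} follows by differentiating \eqref{asymp of H as r to inf in thm} with respect to the $u_\ell$ under the integral sign: each $u$-derivative either leaves the power of $r$ unchanged or, when it hits $\vartheta_j(r)$ inside the oscillatory term, brings down a factor $\partial_{u_\ell}\vartheta_j(r) = \bigO(\log r)$ (from \eqref{def of vartheta j}), so that after $k_1+\dots+k_m$ differentiations the error term for $\mathcal{E}$ picks up at most $(\log r)^{k_1+\dots+k_m}$, and one more integration by parts in the oscillatory contribution keeps the overall size at $\bigO((\log r)^{k_1+\dots+k_m} r^{-2/3})$. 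Justifying the interchange of $\partial_{u_\ell}$ with $\int_0^r$ requires that the asymptotic expansion of $\partial_{u_\ell} H(\tau)$ be uniform in $\tau$ and in the parameters, which again is part of the RH output.
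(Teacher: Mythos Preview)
Your outline correctly recovers the $r$-dependent structure and the error term by integrating the expansion \eqref{asymp of H as r to inf in thm} of $H$, and indeed the paper justifies the sharp $\bigO(r^{-2/3})$ error exactly this way. But the step you flag as ``the crux'' --- identifying the constant $C(\rho,\vec{x},\vec{u})$ --- is a genuine gap, and your proposed fix (``quote or invoke the RH analysis'') does not work as stated. The steepest-descent analysis of Section~\ref{section: steepest descent for large r} yields large-$r$ asymptotics of $\Phi$ and hence of the $p_{j,k},q_{j,k}$ and $H$, but it does \emph{not} give $F$ directly: $F$ is only linked to $\Phi$ through $\partial_r\log F = 2H$, so one is forced back to integrating in $r$, and the constant of integration remains undetermined.

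The paper resolves this by a different mechanism. Proposition~\ref{prop: diff identity for Hamiltonian} provides two identities: \eqref{magical identity}, which rewrites $H(r)$ as an ``action density'' $p_0q_0'+\sum p_{j,k}q_{j,k}'-H$ plus a total $r$-derivative, and \eqref{differential identity for big sum}, which says that the $\beta_\ell$-derivative of this action density is itself a total $r$-derivative. Combining these and integrating first in $r$ and then telescoping in $\beta_1,\dots,\beta_m$ (using that everything vanishes at $\vec\beta=\vec 0$) converts $\int_0^r H(\tau)\,d\tau$ into a sum of boundary terms at $\tau=r$ and $\tau=0$ plus integrals of the form $\int_0^{\beta_\ell}\big(\sum p_{j,k}\partial_{\beta_\ell'}q_{j,k}+p_0\partial_{\beta_\ell'}q_0\big)\,d\beta_\ell'$, evaluated at the fixed large $r$; see \eqref{lol9}. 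These can all be computed explicitly from the asymptotics \eqref{all asymp as r to inf}, and it is precisely this $\beta$-integration that produces the Barnes $G$-functions (via \eqref{lol26}) and the cross term $\Sigma(x_k,x_j)$ (via \eqref{lol27}). In short, the constant is obtained by deforming in $\vec u$ rather than in $r$, exploiting the Hamiltonian structure; this is the missing idea in your proposal.
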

We end this section by providing several new applications of Theorem \ref{thm: asymp of fredholm determinant}, and we also discuss its relevance in future studies of the rigidity of the Pearcey process.
\paragraph{Applications of Theorem \ref{thm: asymp of fredholm determinant}.} Using \eqref{asymptotics in main thm} with $m=1$, \eqref{derivative of error term},
\begin{align*}
\partial_{u} \log F(r,u)|_{u=0} = \mathbb{E}[N(r)] \quad \mbox{ and } \quad \partial_{u}^{2} \log F(r,u)|_{u=0} = \mbox{Var}[N(r)],
\end{align*}
it readily follows that
\begin{align}
& \mathbb{E}[N(r)] = \mu_{\rho}(r) + \bigO\bigg( \frac{\log r}{r^{2/3}} \bigg), & & \mbox{as } r \to + \infty, \label{asymp for expectation} \\
& \mbox{Var}[N(r)] = \sigma^{2}(r) + \frac{1+\gamma_{\mathrm{E}}}{\pi^{2}} + \bigO\bigg( \frac{(\log r)^{2}}{r^{2/3}} \bigg), & & \mbox{as } r \to + \infty, \label{asymp for variance}
\end{align}
where $\gamma_{\mathrm{E}}$ is Euler's gamma constant. These asymptotic formulas for the expectation and variance of $N(r)$ are not new and were already obtained in \cite[equations (2.30) and (2.31)]{DXZ2020 thinning}. We see from \eqref{asymp for variance} that the large $r$ asymptotics of $\mbox{Var}[N(r)]$ are of the form $c_{1}\log r + c_{2} + o(1)$ for some explicit constants $c_{1}$ and $c_{2}$. The asymptotics for the variance of the counting functions of other classical point processes such as the sine, Airy and Bessel point processes are also of the same form \cite{SoshnikovSineAiryBessel, ChSine, ChCl3, ChBessel} (with different values for $c_{1}$ and $c_{2}$). This phenomena is expected to be universal, in the sense that it is expected to hold for many point processes in random matrix theory and other related fields, see the very general predictions \cite{SDMS2020, SDMS2021}. We emphasize that the proof of \eqref{asymp for expectation} and \eqref{asymp for variance} only relies on Theorem \ref{thm: asymp of fredholm determinant} with $m=1$. Using Theorem \ref{thm: asymp of fredholm determinant} with $m=2$ allows to obtain new results on the correlation structure of the Pearcey process. More precisely, using \eqref{asymptotics in main thm} with $m=2$, \eqref{derivative of error term}, and
\begin{align*}
\partial_{u}^{2} \log \bigg( \frac{F((x_{1},x_{2}),(u,u))}{F(x_{1},u)F(x_{2},u)} \bigg) \bigg|_{u=0} = 2 \, \mbox{Cov}\big(N(x_{1}),N(x_{2})\big),
\end{align*}
we directly obtain the following result.
\begin{corollary}\label{coro:correlation structure}
Let $x_{2}>x_{1}>0$ be fixed. As $r \to + \infty$, 
\begin{align*}
\mathrm{Cov}\big(N(rx_{1}),N(rx_{2})\big) = \Sigma(x_{k},x_{j}) + \bigO \bigg( \frac{(\log r)^{2}}{r^{2/3}} \bigg).
\end{align*}
\end{corollary}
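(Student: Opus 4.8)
The plan is to deduce the statement directly from Theorem~\ref{thm: asymp of fredholm determinant} applied with $m=1$ and with $m=2$, together with the covariance identity recalled just before the corollary. First I would write down the three instances of \eqref{asymptotics in main thm} that enter: the cases $m=1$, $\vec{x}=(x_{1})$, $\vec{u}=(u)$ and $m=1$, $\vec{x}=(x_{2})$, $\vec{u}=(u)$, and the case $m=2$, $\vec{x}=(x_{1},x_{2})$, $\vec{u}=(u,u)$. Taking the difference $\log F(r(x_{1},x_{2}),(u,u))-\log F(rx_{1},u)-\log F(rx_{2},u)$, the single-interval contributions $u\mu_{\rho}(rx_{i})$ and $\tfrac{u^{2}}{2}\sigma^{2}(rx_{i})$ cancel, the Barnes $G$ terms combine as $4\log(\cdots)-2\log(\cdots)-2\log(\cdots)=0$, and one is left with
\begin{align*}
\log\bigg(\frac{F(r(x_{1},x_{2}),(u,u))}{F(rx_{1},u)F(rx_{2},u)}\bigg)=u^{2}\,\Sigma(x_{2},x_{1})+\mathcal{E}(r,u),
\end{align*}
where $\mathcal{E}(r,u)$ collects the three error terms in \eqref{asymptotics in main thm}.

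Next I would differentiate this identity twice in $u$ and evaluate at $u=0$. By the identity stated just above the corollary, the left-hand side produces $2\,\mathrm{Cov}(N(rx_{1}),N(rx_{2}))$, the explicit term produces $2\,\Sigma(x_{2},x_{1})$, and it remains to bound $\partial_{u}^{2}\mathcal{E}(r,u)|_{u=0}$. This is exactly where the differentiated error estimate \eqref{derivative of error term} is used: for the two $m=1$ contributions it gives a bound $\bigO((\log r)^{2}/r^{2/3})$ directly, while for the $m=2$ contribution one writes $\partial_{u}^{2}=(\partial_{u_{1}}+\partial_{u_{2}})^{2}=\partial_{u_{1}}^{2}+2\partial_{u_{1}}\partial_{u_{2}}+\partial_{u_{2}}^{2}$ restricted to the diagonal $u_{1}=u_{2}=u$, and each of the three resulting terms is $\bigO((\log r)^{2}/r^{2/3})$ by \eqref{derivative of error term} with $k_{1}+k_{2}=2$. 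Dividing by $2$ gives the asserted asymptotics. The uniformity of \eqref{asymptotics in main thm} for $\vec{u}$ in compact neighborhoods of $0$ and for $\vec{x}$ in compact subsets of $\mathbb{R}^{+,2}_{\mathrm{ord}}$ (which contains the fixed point $(x_{1},x_{2})$ since $0<x_{1}<x_{2}$) is what legitimizes taking these derivatives at $u=0$.

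Since all of the analytic work is already contained in Theorem~\ref{thm: asymp of fredholm determinant}, there is no real obstacle here; the only points requiring attention are the algebraic bookkeeping of the cancellations between numerator and denominator and the observation that the error term in \eqref{asymptotics in main thm} has to be differentiated twice in $u$, which is precisely why the refined estimate \eqref{derivative of error term} is needed and why a factor $(\log r)^{2}$ appears in the final error bound.
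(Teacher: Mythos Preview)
Your proposal is correct and follows exactly the approach the paper indicates: the paragraph immediately preceding the corollary already states that one combines \eqref{asymptotics in main thm} for $m=2$, the differentiated error bound \eqref{derivative of error term}, and the covariance identity $\partial_u^2\log\big(F((x_1,x_2),(u,u))/F(x_1,u)F(x_2,u)\big)\big|_{u=0}=2\,\mathrm{Cov}(N(x_1),N(x_2))$ to obtain the result directly. Your write-up simply makes explicit the cancellations and the chain-rule expansion $\partial_u^2=(\partial_{u_1}+\partial_{u_2})^2$ that the paper leaves implicit.
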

In \cite[Corollary 2.4]{DXZ2020 thinning}, the authors also proved that the random variable $(N(r)-\mu_{\rho}(r))/\sqrt{\sigma^{2}(r)}$ converges in distribution as $r \to + \infty$ to a normal random variable with mean $0$ and variance $1$. Using Theorem \ref{thm: asymp of fredholm determinant}, we obtain the following generalization of this result.
\begin{corollary}\label{coro: central limit theorem}
Let $0<x_{1}<\ldots<x_{m}<+\infty$ be fixed and consider the random variables $N_{j}^{(r)}$ defined by
\begin{align*}
N_{j}^{(r)} = \frac{N(rx_{j})-\mu_{\rho}(rx_{j})}{\sqrt{\sigma^{2}(rx_{j})}}, \qquad j=1,\ldots,m.
\end{align*}
As $r \to +\infty$, we have
\begin{align}\label{convergence in distribution 1}
\big( N_{1}^{(r)},N_{2}^{(r)},\ldots,N_{m}^{(r)}\big) \quad \overset{d}{\longrightarrow} \quad \mathcal{N}(\vec{0},I_{m}),
\end{align}
where $I_{m}$ is the $m \times m$ identity matrix, and $\mathcal{N}(\vec{0},I_{m})$ is a multivariate normal random variable of mean $\vec{0}=(0,\ldots,0)$ and covariance matrix $I_{m}$.
\end{corollary}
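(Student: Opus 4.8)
The plan is to deduce the multivariate central limit theorem \eqref{convergence in distribution 1} from convergence of moment generating functions, together with the observation that $\mathcal{N}(\vec{0},I_m)$ is uniquely determined by its moments. The key point is that, by \eqref{F as expectation intro}, $F(r\vec{x},\vec{u})$ \emph{is} the moment generating function of the vector $\big(N(rx_1),\ldots,N(rx_m)\big)$, so Theorem \ref{thm: asymp of fredholm determinant} directly controls it. Fix $\vec{t}=(t_1,\ldots,t_m)\in\mathbb{R}^m$ and set
\begin{align*}
u_j=u_j(r):=\frac{t_j}{\sqrt{\sigma^2(rx_j)}}, \qquad j=1,\ldots,m.
\end{align*}
Since $\sigma^2(rx_j)=\frac{4}{3\pi^2}\log(rx_j)+\frac{1}{\pi^2}\log\frac{9}{2}\to+\infty$ as $r\to+\infty$, we have $u_j(r)\to 0$, so $\vec{u}(r)$ eventually lies in any prescribed neighbourhood of the origin, in particular in a fixed compact subset of $\mathbb{R}^m$. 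From the definition of $F$,
\begin{align*}
\mathbb{E}\bigg[\exp\bigg(\sum_{j=1}^m t_j N_j^{(r)}\bigg)\bigg] = \exp\bigg(-\sum_{j=1}^m u_j(r)\,\mu_\rho(rx_j)\bigg)\, F\big(r\vec{x},\vec{u}(r)\big).
\end{align*}

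Next I would apply Theorem \ref{thm: asymp of fredholm determinant} at the $r$-dependent parameter $\vec{u}(r)$. This is legitimate because \eqref{asymptotics in main thm} holds uniformly for $\vec{u}$ in compact subsets of $\mathbb{R}^m$ and uniformly for $\vec{x}$ in compact subsets of $\mathbb{R}^{+,m}_{\mathrm{ord}}$ (and the single fixed point $\vec{x}$ is such a compact subset), so the $\bigO(r^{-2/3})$ error survives the substitution. Inserting the asymptotics, the linear term $\sum_j u_j(r)\mu_\rho(rx_j)$ cancels exactly against the centring factor; the diagonal Gaussian term becomes $\sum_j\frac{u_j(r)^2}{2}\sigma^2(rx_j)=\sum_j\frac{t_j^2}{2}$; the cross terms $\sum_{1\le j<k\le m}u_j(r)u_k(r)\Sigma(x_k,x_j)$ tend to $0$ since each $\Sigma(x_k,x_j)$ is a fixed finite constant (here one uses $x_j\neq x_k$) while $u_j(r)u_k(r)\to 0$; and the Barnes-$G$ terms $\sum_j 2\log\big(G(1-\tfrac{u_j(r)}{2\pi i})G(1+\tfrac{u_j(r)}{2\pi i})\big)\to 0$ by continuity of $G$ at $1$ and $G(1)=1$. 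Hence
\begin{align*}
\mathbb{E}\bigg[\exp\bigg(\sum_{j=1}^m t_j N_j^{(r)}\bigg)\bigg] \longrightarrow \exp\bigg(\frac{1}{2}\sum_{j=1}^m t_j^2\bigg), \qquad r\to+\infty,
\end{align*}
which is precisely the moment generating function of $\mathcal{N}(\vec{0},I_m)$.

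Finally, since this limiting moment generating function is finite on a neighbourhood of the origin (in fact on all of $\mathbb{R}^m$) and the Gaussian law is determined by its moments, convergence of moment generating functions on a neighbourhood of $\vec{0}$ upgrades to convergence in distribution, establishing \eqref{convergence in distribution 1}. I do not expect a substantive obstacle: the only step requiring care is justifying the use of Theorem \ref{thm: asymp of fredholm determinant} along the $r$-dependent family $\vec{u}(r)$, which is handled by the stated uniformity together with the fact that $\vec{u}(r)$ remains in a fixed compact set; the remainder is straightforward bookkeeping of the four terms in \eqref{asymptotics in main thm}.
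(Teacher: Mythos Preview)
Your proposal is correct and follows essentially the same approach as the paper: both arguments substitute $r$-dependent parameters $u_j\to 0$ into the uniform asymptotics of Theorem~\ref{thm: asymp of fredholm determinant}, verify that the moment generating function of $(N_1^{(r)},\ldots,N_m^{(r)})$ converges to $\exp\big(\tfrac12\sum_j t_j^2\big)$, and then invoke the standard implication from moment generating function convergence to convergence in distribution. The only cosmetic difference is that the paper takes $u_j=\frac{\sqrt{3}\pi}{2}\frac{a_j}{\sqrt{\log r}}$ (the leading-order approximation to your exact choice $u_j=t_j/\sqrt{\sigma^2(rx_j)}$), which makes your bookkeeping of the quadratic term slightly cleaner.
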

\begin{proof}
Let $a_{1},\ldots,a_{m} \in \mathbb{R}$ be arbitrary and fixed (i.e. independent of $r$). It directly follows from \eqref{F as expectation intro} and \eqref{asymptotics in main thm} with $u_{j}=\frac{\sqrt{3}\pi}{2}\frac{a_{j}}{\sqrt{\log r}}$, $j=1,\ldots,m$, that
\begin{align*}
\mathbb{E}\bigg[ \prod_{j=1}^{m}e^{a_{j}N_{j}^{(r)}} \bigg] = \exp \bigg( \sum_{j=1}^{m} \frac{a_{j}^{2}}{2} + \bigO\bigg( \frac{1}{\sqrt{\log r}} \bigg) \bigg), \qquad \mbox{as } r \to + \infty.
\end{align*}
In other words, the moment generating function of $\big( N_{1}^{(r)},N_{2}^{(r)},\ldots,N_{m}^{(r)}\big)$ converges as $r \to + \infty$ pointwise in $\mathbb{R}^{m}$ to the moment generating function of $\mathcal{N}(\vec{0},I_{m})$. This implies the convergence in distribution \eqref{convergence in distribution 1} by standard probability theorems, see e.g. \cite[Corollary of Theorem 25.10]{Billingsley}.
\end{proof}
\paragraph{Possible future applications of Theorem \ref{thm: asymp of fredholm determinant}.} Corollary \ref{coro: central limit theorem} gives information about the joint fluctuations of the counting function at $m$ well-separated points $rx_{1},\ldots,rx_{m}$. A more difficult question is to understand the \textit{global rigidity} of the Pearcey process, that is, to understand the \textit{maximum} fluctuation of the counting function. In recent years in random matrix theory, there has been a lot of progress in the study of ridigity of various point processes, see \cite{ErdosYauYin, ArguinBeliusBourgade} for important early works, \cite{HolcombPaquette} for the sine process, and \cite{ChCl4} for the Airy and Bessel point processes. Of particular interest for us is the following result on the rigidity of the Pearcey process, which was proved in \cite{CharlierRigidityPearcey} (by combining results from \cite{ChCl4} and \cite{DXZ2020 thinning}): for any $\epsilon > 0$, the probability that
\begin{align}\label{rewriting of thm 1}
\mu_{\rho}(x)- \bigg( \frac{4\sqrt{2}}{3\pi}+\epsilon \bigg) \log x \leq N(x) \leq \mu_{\rho}(x) + \bigg( \frac{4\sqrt{2}}{3\pi}+\epsilon \bigg) \log x \qquad \mbox{for all }x>r
\end{align}
tends to $1$ as $r \to + \infty$. Roughly speaking, this means that with high probability and for all large $x$, $N(x)$ lies in a tube centered at $\mu_{\rho}(x)$ and of width $\big( \frac{8\sqrt{2}}{3\pi}+2\epsilon \big) \log x$, see Figure \ref{fig:rigidity of the counting function} (left). Equivalently, \eqref{rewriting of thm 1} can be rewritten for the normalized counting function as follows
\begin{align}
& \lim_{r \to \infty}\mathbb P\left(\sup_{x> r}\left|\frac{N(x)- \mu_{\rho}(x)}{\log x}\right| \leq  \frac{4\sqrt{2}}{3\pi} + \epsilon \right) = 1, \label{upper bound rigidity 1}
\end{align}
see Figure \ref{fig:rigidity of the counting function} (right). It has also been conjectured in \cite{CharlierRigidityPearcey} that the upper bound \eqref{upper bound rigidity 1} is sharp, in the sense that the following complementary lower bound is expected to hold: for any $\epsilon > 0$,
\begin{align}\label{completementary lower bound}
& \lim_{r \to \infty}\mathbb P\left(\sup_{x> r}\left|\frac{N(x)- \mu_{\rho}(x)}{\log x}\right| \geq  \frac{4\sqrt{2}}{3\pi} - \epsilon \right) = 1,
\end{align}
and this is also supported by Figure \ref{fig:rigidity of the counting function} (right). 
\begin{figure}[h]
\begin{center}
\begin{tikzpicture}
\node at (0,0) {\includegraphics[scale=0.3]{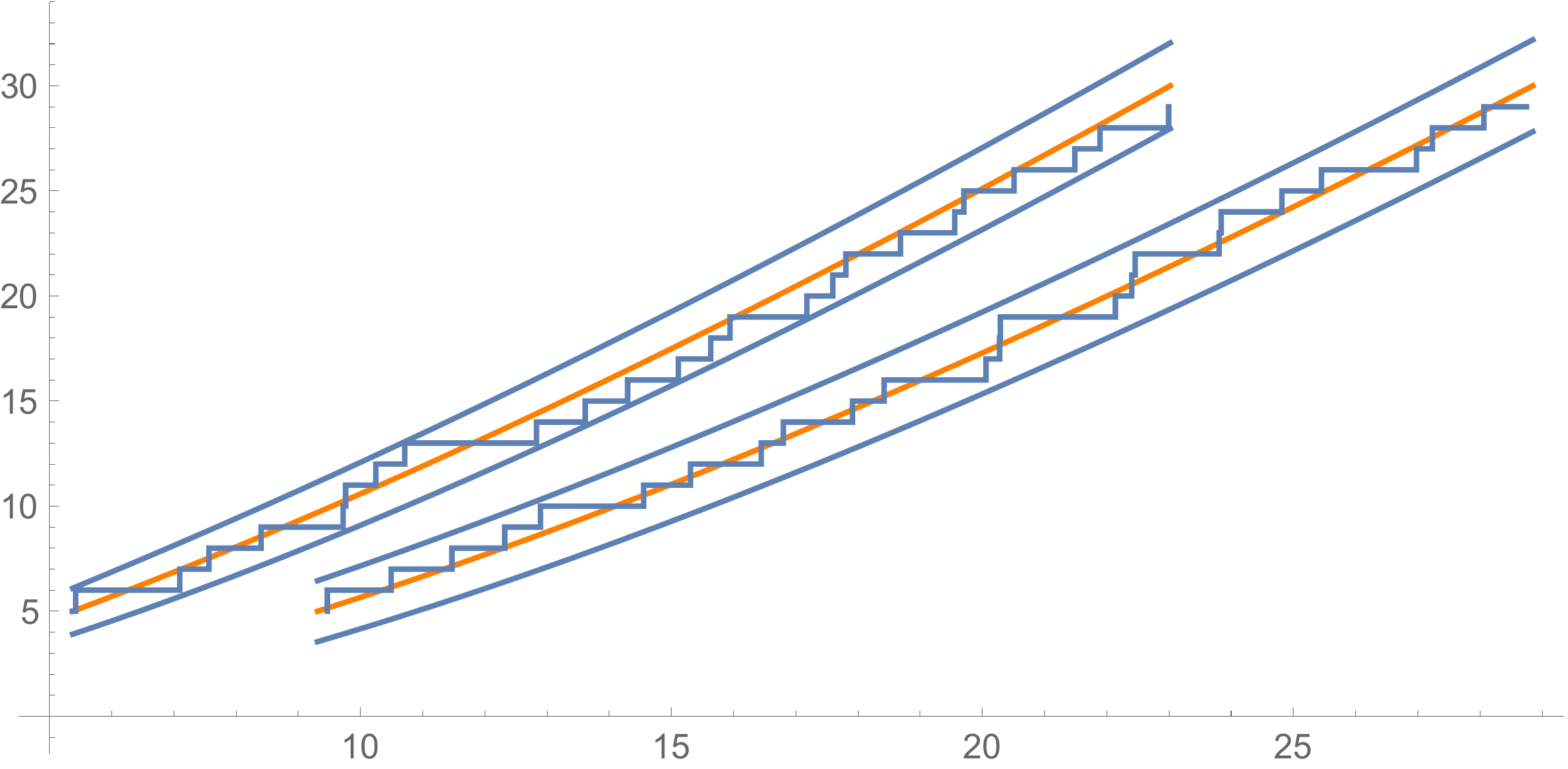}};
%\node at (0,-2.5) {Wright's generalized Bessel};
\node at (0.2,1.3) {\small $\rho=-1.31$};
\node at (2.3,0.2) {\small $\rho=2.54$};
\end{tikzpicture}
\begin{tikzpicture}
\node at (0,0) {\includegraphics[scale=0.3]{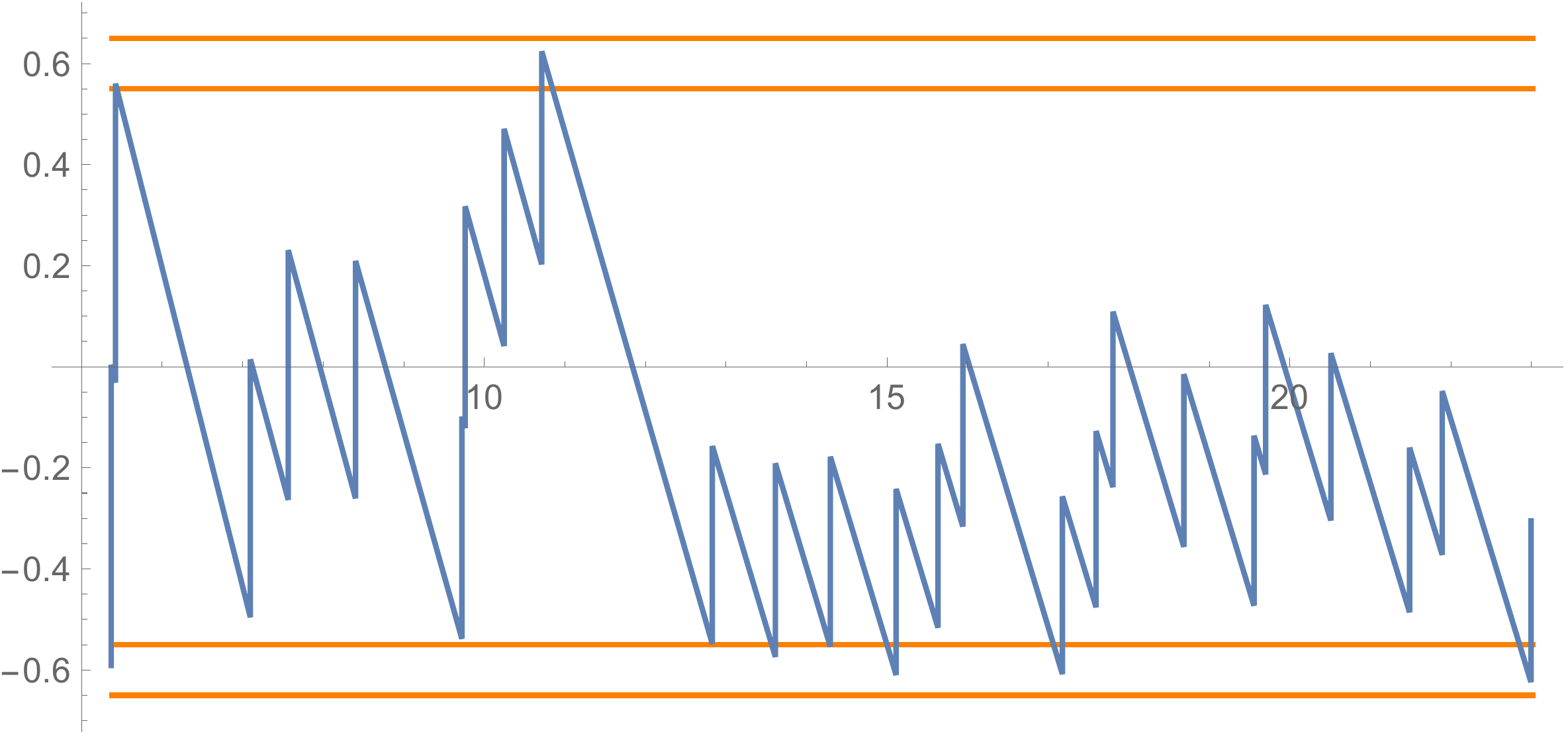}};
\node at (2.5,1) {\small $\rho=-1.31$};
%\node at (2.6,-0.4) {$r=1$};
%\node at (0,-2.5) {Meijer-$G$};
\end{tikzpicture}
\end{center}
\vspace{-0.5cm}\caption{\label{fig:rigidity of the counting function}Rigidity of the Pearcey process (the pictures are taken from \cite{CharlierRigidityPearcey}). Left: the smooth blue lines correspond to the upper and lower bounds in \eqref{rewriting of thm 1} with $\epsilon=0.05$, and $N(x)$ is the discontinuous blue line. Right: the blue line is $\frac{N(x)- \mu_{\rho}(x)}{\log x}$, and the four orange lines correspond to the constants $\pm \frac{4\sqrt{2}}{3\pi} + \epsilon$, $\pm\frac{4\sqrt{2}}{3\pi} - \epsilon$ with $\epsilon=0.05$.} 
\end{figure}
Such lower bounds are notoriously difficult to prove. 

By analogy with the method developed in \cite{CFLW}, we expect that Theorem \ref{thm: asymp of fredholm determinant} with $m=2$ will be useful to establish \eqref{completementary lower bound}. However, we also expect that proving \eqref{completementary lower bound} will also require other estimates that are not provided in this paper, such as the large $r$ asymptotics for $\mathbb{E}[e^{u_{1}N(rx_{1})+u_{2}N(rx_{2})}]$ when simultaneously $|x_{1}-x_{2}|\to 0$. This regime requires a completely different analysis than the one of Theorem \ref{thm: asymp of fredholm determinant}, and we shall not pursue this here.

%Our strategy to prove Theorems \ref{thm:asymp of pkj as r to inf}, \ref{thm: hamiltonian representation of the Pearcey determinant} and \ref{thm: asymp of fredholm determinant} is mainly inspired from \cite{DXZ2020 thinning}.
\paragraph{Outline.} Relying on the fact that the Pearcey kernel $K_{\rho}^{\mathrm{Pe}}$ is known to be integrable (of size $3$) in the sense of \cite{IIKS}, we use in Section \ref{section: differential identity} the general method of \cite{DIZ} to express $\partial_{r}\log F(r\vec{x},\vec{u})$ in terms of the solution $\Phi$ to a $3\times 3$ RH problem. In Section \ref{section: Lax pair}, we derive the system of equations \eqref{big system of ODEs} and establish the formula $\partial_{r}\log F(r\vec{x},\vec{u})=2H(r)$ by analyzing a natural Lax pair associated to $\Phi$. In Section \ref{section: steepest descent for large r}, we use the Deift--Zhou \cite{DeiftZhou} steepest descent method to obtain the large $r$ asymptotics of $\Phi$. A main technical challenge here is to analyze the behavior of the global parametrix at certain points, which becomes particularly delicate for $m \geq 2$, see e.g. the asymptotic formulas of Subsection \ref{subsubsection: asymptotics of global param near xj}. The steepest descent analysis of $\Phi$ for small $r$ is simpler and is performed in Section \ref{section: steepest descent for small r}. In Section \ref{section: proof of main results}, we use the small and large $r$ asymptotics of $\Phi$ together with some remarkable identities for the Hamiltonian to complete the proofs of Theorems \ref{thm:asymp of pkj as r to inf}, \ref{thm: hamiltonian representation of the Pearcey determinant} and \ref{thm: asymp of fredholm determinant}.

%\medskip establish the integral representation of $F(r\vec{x},\vec{u})$ given in \eqref{integral representation of F} 

\section{Differential identity}\label{section: differential identity}
The main result of this section is a differential identity which expresses $\partial_{r}\log F(r\vec{x},\vec{u})$ in terms of the solution $\Phi$ to a $3\times 3$ RH problem.

\medskip It is well-known, see e.g. \cite[Theorem 2]{Soshnikov}, that the moment generating function \eqref{F as expectation intro} is equal to the Fredholm determinant
\begin{align}\label{fredholm det piecewise constant thinning}
F(r\vec{x},\vec{u}) = \det \big(1-\widetilde{\mathcal{K}}^{\mathrm{Pe}}_{\rho}\big), \qquad \widetilde{\mathcal{K}}^{\mathrm{Pe}}_{\rho} := \sum_{j=1}^{m}(1-s_{j})\mathcal{K}^{\mathrm{Pe}}_{\rho}|_{rA_{j}},
\end{align}
where $s_{j} := e^{u_{j}+\ldots+u_{m}} \in (0,+\infty)$, $j=1,\ldots,m$,
\begin{align*}
A_{1} = (-x_{1},x_{1}), \qquad A_{j} = (-x_{j},-x_{j-1})\cup (x_{j-1},x_{j}), \quad j=2,\ldots,m,
\end{align*}
and $\mathcal{K}^{\mathrm{Pe}}_{\rho}|_{rA_{j}}$ is the trace-class operator acting on $L^{2}(rA_{j})$ whose kernel is $K^{\mathrm{Pe}}_{\rho}$. We now recall a formula from \cite{BleherKuijlaarsIII} which expresses $K_{\rho}^{\mathrm{Pe}}$ in terms of the solution $\Psi$ of a $3\times 3$ RH problem.

\subsection{Background from \cite{BleherKuijlaarsIII}}\label{subsection: model RH problem for Psi}
%We first recall a model RH problem from \cite{BleherKuijlaarsIII}, whose solution is denoted $\Psi$.
\subsubsection*{RH problem for $\Psi$}
\begin{itemize}
\item[(a)] $\Psi: \mathbb{C}\setminus \{\cup_{j=0}^{5}\Sigma_{j}\cup\{0\}\}\to \mathbb{C}^{3\times 3}$ is analytic, where
\begin{align}
& \Sigma_{0} = (0,+\infty), & & \Sigma_{1} = e^{\frac{\pi i}{4}}(0,+\infty), & & \Sigma_{2} = e^{\frac{3\pi i}{4}}(+\infty,0), \nonumber \\
& \Sigma_{3} = (-\infty,0), & & \Sigma_{4} = e^{-\frac{3\pi i}{4}}(+\infty,0), & & \Sigma_{5} = e^{-\frac{\pi i}{4}}(0,+\infty).  \label{contour of Psi}
\end{align}
\item[(b)] For $z \in \cup_{j=0}^{5}\Sigma_{j}$, we denote $\Psi_{+}(z)$ (resp. $\Psi_{-}(z)$) for the limit of $\Psi(s)$ as $s \to z$ from the left (resp. right) of  $\cup_{j=0}^{5}\Sigma_{j}$ (here ``left" and ``right" refer to the orientation of $\cup_{j=0}^{5}\Sigma_{j}$ as indicated in \eqref{contour of Psi}). For $z \in \Sigma_{j}$, we have $\Psi_{+}(z) = \Psi_{-}(z)J_{j}$, $j=0,\ldots,5$, where $J_{0},J_{1},J_{2},J_{3},J_{4}$ and $J_{5}$ are respectively given by
\begin{align}\label{def of Jj}
& \begin{pmatrix}
0 & 1 & 0 \\
-1 & 0 & 0 \\
0 & 0 & 1
\end{pmatrix}, \begin{pmatrix}
1 & 0 & 0 \\
1 & 1 & 1 \\
0 & 0 & 1
\end{pmatrix}, \begin{pmatrix}
1 & 0 & 0 \\
0 & 1 & 0 \\
1 & 1 & 1
\end{pmatrix}, \begin{pmatrix}
0 & 0 & 1 \\
0 & 1 & 0 \\
-1 & 0 & 0
\end{pmatrix}, \begin{pmatrix}
1 & 0 & 0 \\
0 & 1 & 0 \\
1 & -1 & 1
\end{pmatrix}, \begin{pmatrix}
1 & 0 & 0 \\
1 & 1 & -1 \\
0 & 0 & 1
\end{pmatrix}.
\end{align}
\item[(c)] As $z \to \infty$, $\pm \im z > 0$, 
\begin{align}\label{asymp of Psi at infty}
\Psi(z) = \sqrt{\frac{2\pi}{3}}e^{\frac{\rho^{2}}{6}}i \Psi_{0} \bigg( I + \frac{\Psi_{1}}{z} + \bigO(z^{-2}) \bigg)\diag(z^{-\frac{1}{3}},1,z^{\frac{1}{3}})L_{\pm}e^{\Theta(z)},
\end{align}
where
\begin{align}
& \Psi_{0} = \begin{pmatrix}
1 & 0 & 0 \\
0 & 1 & 0 \\
\kappa_{3}(\rho)+\frac{2\rho}{3} & 0 & 1
\end{pmatrix}, & & \Psi_{1} = \begin{pmatrix}
0 & \kappa_{3}(\rho) & 0 \\
\widetilde{\kappa}_{6}(\rho) & 0 & \kappa_{3}(\rho)+\frac{\rho}{3} \\
0 & \widehat{\kappa}_{6}(\rho) & 0
\end{pmatrix}, \label{def of Psi0 and Psi1} \\
& \kappa_{3}(\rho) = \frac{\rho^{3}}{54}-\frac{\rho}{6}, & & \kappa_{6}(\rho) = \frac{\rho^{6}}{5832} - \frac{\rho^{4}}{162} - \frac{\rho^{2}}{72} + \frac{7}{32}, \nonumber \\
& \widetilde{\kappa}_{6}(\rho) = \kappa_{6}(\rho) + \frac{\rho}{3}\kappa_{3}(\rho) - \frac{1}{3}, & & \widehat{\kappa}_{6}(\rho) = \kappa_{6}(\rho)-\kappa_{3}(\rho)^{2}+\frac{\rho^{2}}{9}-\frac{1}{3}, \nonumber \\
& L_{+} = \begin{pmatrix}
-\omega & \omega^{2} & 1 \\
-1 & 1 & 1 \\
-\omega^{2} & \omega & 1
\end{pmatrix}, & & L_{-} = \begin{pmatrix}
\omega^{2} & \omega & 1 \\
1 & 1 & 1 \\
\omega & \omega^{2} & 1
\end{pmatrix}, \nonumber \\
& \Theta(z) = \begin{cases}
\diag(\theta_{1}(z),\theta_{2}(z),\theta_{3}(z)), & \im z > 0, \\
\diag(\theta_{2}(z),\theta_{1}(z),\theta_{3}(z)), & \im z < 0,
\end{cases} & & \theta_{k}(z) = \frac{3}{4}\omega^{2k}z^{\frac{4}{3}}+\frac{\rho}{2}\omega^{k}z^{\frac{2}{3}}, \quad k=1,2,3, \label{def of theta}
\end{align}
and $\omega = e^{\frac{2\pi i}{3}}$.
\item[(d)] $\Psi(z)$ remains bounded as $z \to 0$.
\end{itemize}
Consider the following functions
\begin{align}\label{def of mathcal Pj}
\mathcal{P}_{j}(z) = \int_{\Gamma_{j}}e^{-\frac{1}{4}t^{4}-\frac{\rho}{2}t^{2}+itz}dt, \qquad j=0,1,\ldots,5,
\end{align}
where
\begin{align*}
& \Gamma_{0}=(-\infty,+\infty), & & \Gamma_{1} = (i\infty,0]\cup[0,\infty), & & \Gamma_{2} = (i\infty,0]\cup[0,-\infty), \\
& \Gamma_{3} = (-i\infty,0]\cup [0,-\infty), & & \Gamma_{4} = (-i\infty,0]\cup [0,+\infty), & & \Gamma_{5} = (-i\infty,i\infty),
\end{align*}
and define
\begin{align}\label{def of Psi tilde}
\widetilde{\Psi}(z) = \begin{pmatrix}
\mathcal{P}_{0}(z) & \mathcal{P}_{1}(z) & \mathcal{P}_{4}(z) \\
\mathcal{P}_{0}'(z) & \mathcal{P}_{1}'(z) & \mathcal{P}_{4}'(z) \\
\mathcal{P}_{0}''(z) & \mathcal{P}_{1}''(z) & \mathcal{P}_{4}''(z)
\end{pmatrix}, \qquad z \in \mathbb{C}.
\end{align}
It was shown in \cite[Section 8.1]{BleherKuijlaarsIII} that the RH problem for $\Psi$ admits a unique solution which can be explicitly written in terms of $\mathcal{P}_{j}$, $j=0,\ldots,5$. For example, for $\arg z \in (\frac{\pi}{4},\frac{3\pi}{4})$, we have $\Psi(z) = \widetilde{\Psi}(z)$. The explicit expression of $\Psi$ in the other sectors is not needed for us so we do not write it down, but we refer the interested reader to \cite[equations (8.12)--(8.17)]{BleherKuijlaarsIII}. The Pearcey kernel can be written as follows (see \cite[equation (10.19)]{BleherKuijlaarsIII}):
\begin{align}\label{Pearcey kernel in terms of widetilde Psi}
K_{\rho}^{\mathrm{Pe}}(x,y) = \frac{1}{2\pi i(x-y)}\begin{pmatrix}
0 & 1 & 1
\end{pmatrix}\widetilde{\Psi}(y)^{-1}\widetilde{\Psi}(x) \begin{pmatrix}
1 & 0 & 0
\end{pmatrix}^{t}, \qquad x,y \in \mathbb{R},
\end{align}
where $(\cdot)^t$ denotes the transpose operation. Let $\widetilde{K}^{\mathrm{Pe}}_{\rho}$ be the kernel of the operator $\widetilde{\mathcal{K}}^{\mathrm{Pe}}_{\rho}$ appearing in \eqref{fredholm det piecewise constant thinning}:
\begin{align*}
\widetilde{K}^{\mathrm{Pe}}_{\rho}(x,y):=\sum_{j=1}^{m}(1-s_{j})K_{\rho}^{\mathrm{Pe}}(x,y) \chi_{rA_{j}}(y),
\end{align*}
where $\chi_{A_{j}}$ denotes the characteristic function of $A_{j}$, i.e. $\chi_{A_{j}}(x)=1$ if $x \in A_{j}$ and $0$ otherwise. From \eqref{Pearcey kernel in terms of widetilde Psi}, it is easy to see that $\widetilde{K}^{\mathrm{Pe}}_{\rho}$ can be written as
\begin{align}\label{integrable kernel}
\widetilde{K}^{\mathrm{Pe}}_{\rho}(x,y) = \frac{\mathbf{f}(x)^{t} \mathbf{h}(y)}{x-y},
\end{align}
where
\begin{align}\label{def of f and h}
\mathbf{f}(x) = \widetilde{\Psi}(x) \begin{pmatrix}
1 \\ 0 \\ 0
\end{pmatrix}, \quad \mathbf{h}(y) = \frac{\sum_{j=1}^{m}(1-s_{j})\chi_{rA_{j}}(y)}{2\pi i}\widetilde{\Psi}(y)^{-t} \begin{pmatrix}
0 \\ 1 \\ 1
\end{pmatrix}= \sum_{j=1}^{m}\mathfrak{s}_{j}\chi_{rB_{j}}(y)\widetilde{\Psi}(y)^{-t} \begin{pmatrix}
0 \\ 1 \\ 1
\end{pmatrix}.
\end{align}
with $\mathfrak{s}_{j}:=\frac{s_{j+1}-s_{j}}{2\pi i}$ and $B_{j}=(-x_{j},x_{j})$, $j=1,\ldots,m$. Since $\vec{u}\in \mathbb{R}^{m}$, it follows from \eqref{F as expectation intro} that $F(r\vec{x},\vec{u})\in (0,+\infty)$. Thus, by \eqref{fredholm det piecewise constant thinning}, we have $\det (1-\widetilde{\mathcal{K}}^{\mathrm{Pe}}_{\rho})>0$ and in particular $1-\widetilde{\mathcal{K}}^{\mathrm{Pe}}_{\rho}$ is invertible. Using now standard identities for trace-class operators, we obtain
\begin{align}
&  \partial_{r} \log F(r\vec{x},\vec{u}) = \partial_{r} \log \det (1-\widetilde{\mathcal{K}}^{\mathrm{Pe}}_{\rho}) = -\mbox{Tr} \Big( (1-\widetilde{\mathcal{K}}^{\mathrm{Pe}}_{\rho})^{-1} \partial_{r}\widetilde{\mathcal{K}}^{\mathrm{Pe}}_{\rho} \Big) \nonumber \\
& = - \sum_{j=1}^{m} x_{j} \bigg( \lim_{v \nearrow rx_{j}}\mathrm{R}(v,v)+\lim_{v \searrow -rx_{j}}\mathrm{R}(v,v) \bigg) + \sum_{j=1}^{m-1} x_{j} \bigg( \lim_{v \searrow rx_{j}}\mathrm{R}(v,v)+\lim_{v \nearrow -rx_{j}}\mathrm{R}(v,v) \bigg), \label{lol10}
\end{align}
where $\mathrm{R}$ is the kernel of the resolvent operator $(1-\widetilde{\mathcal{K}}^{\mathrm{Pe}}_{\rho})^{-1} \widetilde{\mathcal{K}}^{\mathrm{Pe}}_{\rho}$. Formula \eqref{integrable kernel} shows in particular that $\widetilde{K}^{\mathrm{Pe}}_{\rho}$ is integrable (of size 3) in the sense of \cite{IIKS}. Hence, by \cite[Lemma 2.12]{DIZ}, we have
\begin{align}\label{resolvent in terms of F and H}
\mathrm{R}(u,v) = \frac{\mathbf{F}(u)^{t}\mathbf{H}(v)}{u-v}, \qquad u,v  \in \mathbb{R}, 
\end{align}
where
\begin{align}\label{def of F and H}
\mathbf{F}(u) = (1-\widetilde{\mathcal{K}}^{\mathrm{Pe}}_{\rho})^{-1}\mathbf{f}(u) = Y_{+}(u)\mathbf{f}(u), \qquad \mathbf{H}(v) = Y_{+}^{-t}(v)\mathbf{h}(v), \qquad u,v  \in \mathbb{R}, 
\end{align}
and $Y$ is given by
\begin{align}\label{def of Y}
Y(z) = I - \int_{-rx_{m}}^{rx_{m}}\frac{\mathbf{F}(w)\mathbf{h}(w)^{t}}{w-z}dw.
\end{align}
Furthermore, $Y$ is the unique solution to the following RH problem.
\subsubsection*{RH problem for $Y$}
\begin{itemize}
\item[(a)] $Y : \mathbb{C}\setminus [-rx_{m},rx_{m}] \to \mathbb{C}^{3\times 3}$ is analytic.
\item[(b)] $Y$ satisfies the jumps
\begin{align}\label{jumps of Y}
Y_{+}(x) = Y_{-}(x) (I-2\pi i \mathbf{f}(x)\mathbf{h}(x)^{t}), \qquad x \in (-rx_{m},rx_{m})\setminus \cup_{j=1}^{m-1} \{-rx_{j},rx_{j}\}.
\end{align}
\item[(c)] As $z \to \infty$, $Y(z) = I + \frac{Y_{1}}{z} + \bigO(z^{-2}).$
\item[(d)] As $z \to z_{*}\in \cup_{j=1}^{m-1}\{-rx_{j},rx_{j}\}$, we have $Y(z) = \bigO(\log (z-z_{*}))$.
\end{itemize}
\begin{figure}
\centering
\begin{tikzpicture}
%\draw[fill] (0,0) circle (0.05);
\node at (3,0) {};
\draw (0,0) -- (11,0);
\draw (3,0) -- ($(3,0)+(135:3)$);
\draw (3,0) -- ($(3,0)+(-135:3)$);
\draw (8,0) -- ($(8,0)+(45:3)$);
\draw (8,0) -- ($(8,0)+(-45:3)$);

\draw[fill] (3,0) circle (0.05);
\draw[fill] (8,0) circle (0.05);

\node at (3,-0.3) {$-rx_{m}$};
\node at (8,-0.3) {$rx_{m}$};

\node at ($(8,0)+(22.5:2.5)$) {$\mathrm{I}$};
\node at ($(8,0)+(-22.5:2.5)$) {$\mathrm{VI}$};
\node at ($(5.5,0)+(90:1.5)$) {$\mathrm{II}$};
\node at ($(5.5,0)+(-90:1.5)$) {$\mathrm{V}$};
\node at ($(3,0)+(180-22.5:2.5)$) {$\mathrm{III}$};
\node at ($(3,0)+(-180+22.5:2.5)$) {$\mathrm{IV}$};

\draw[black,arrows={-Triangle[length=0.18cm,width=0.12cm]}]
($(3,0)+(135:1.5)$) --  ++(-45:0.001);
\draw[black,arrows={-Triangle[length=0.18cm,width=0.12cm]}]
($(3,0)+(-135:1.5)$) --  ++(45:0.001);
\draw[black,arrows={-Triangle[length=0.18cm,width=0.12cm]}]
(1.5,0) --  ++(0:0.001);

\draw[black,arrows={-Triangle[length=0.18cm,width=0.12cm]}]
(5.5,0) --  ++(0:0.001);

\draw[black,arrows={-Triangle[length=0.18cm,width=0.12cm]}]
($(8,0)+(45:1.5)$) --  ++(45:0.001);
\draw[black,arrows={-Triangle[length=0.18cm,width=0.12cm]}]
($(8,0)+(-45:1.5)$) --  ++(-45:0.001);
\draw[black,arrows={-Triangle[length=0.18cm,width=0.12cm]}]
(9.5,0) --  ++(0:0.001);

\draw[black,dashed,line width=0.15 mm] ([shift=(0:1cm)]8,0) arc (0:45:1cm);
\node at ($(8,0)+(22.5:1.25)$) {$\frac{\pi}{4}$};
\end{tikzpicture}
\caption{Jump contours $\Sigma_{k}^{(r)}$, $k=0,1,\ldots,6$.}
\label{fig:contour for Phi Sine}
\end{figure}
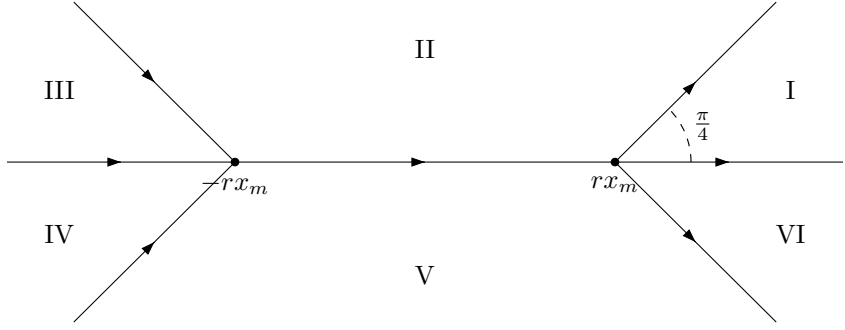
Now, we apply a transformation which changes $Y$ into another function $\Phi$ whose jump matrices are piecewise constant. Following \cite[eq (3.24)]{DXZ2020 thinning}, we define $\Phi(z) = \Phi(z;r)$ as 
\begin{align}\label{Phi in terms of Y and Psi}
\Phi(z) = \frac{\Psi_{0}^{-1}}{\sqrt{\frac{2\pi}{3}}e^{\frac{\rho^{2}}{6}}i} \begin{cases}
Y(z)\Psi(z), & z \in \mathrm{I} \cup \mathrm{III} \cup \mathrm{IV} \cup \mathrm{VI}, \\
Y(z)\widetilde{\Psi}(z), & z \in \mathrm{II}, \\
Y(z) \widetilde{\Psi}(z) \begin{pmatrix}
1 & -1 & -1 \\
0 & 1 & 0 \\
0 & 0 & 1
\end{pmatrix}, & z \in \mathrm{V},
\end{cases}
\end{align}
where the regions $\mathrm{I}$, $\mathrm{II}$ $\mathrm{III}$, $\mathrm{IV}$, $\mathrm{V}$ and $\mathrm{VI}$ are shown in Figure \ref{fig:contour for Phi Sine}. It is easily verified from the RH problems for $Y$ and $\Psi$ that $\Phi$ satisfies the following RH problem.
\subsubsection*{RH problem for $\Phi$}
\begin{itemize}
\item[(a)] $\Phi: \mathbb{C}\setminus \{\cup_{j=0}^{6}\Sigma_{j}^{(r)}\cup\{-rx_{m},rx_{m}\}\}\to \mathbb{C}^{3\times 3}$ is analytic, where
\begin{align}
& \Sigma_{0}^{(r)} = (rx_{m},+\infty), & & \Sigma_{1}^{(r)} = rx_{m}+e^{\frac{\pi i}{4}}(0,+\infty), & & \Sigma_{2}^{(r)} = -rx_{m}+e^{\frac{3\pi i}{4}}(+\infty,0), \nonumber \\
& \Sigma_{3}^{(r)} = (-\infty,-rx_{m}), & & \Sigma_{4}^{(r)} = -rx_{m}+e^{-\frac{3\pi i}{4}}(+\infty,0), & & \Sigma_{5}^{(r)} = rx_{m}+e^{-\frac{\pi i}{4}}(0,+\infty), \label{contour of Phi}
\end{align}
and $\Sigma_{6}^{(r)}=(-rx_{m},rx_{m})$, see also Figure \ref{fig:contour for Phi Sine}.
\item[(b)] For $z \in \Sigma_{j}^{(r)}$, we have $\Phi_{+}(z) = \Phi_{-}(z)J_{j}$, $j=0,\ldots,5$, where $J_{0},J_{1},J_{2},J_{3},J_{4}$ and $J_{5}$ are given by \eqref{def of Jj}. For $z \in (-rx_{m},rx_{m})\setminus \cup_{j=1}^{m-1}\{-rx_{j},rx_{j}\}$, we have $\Phi_{+}(z) = \Phi_{-}(z)J_{6}(z)$, where
\begin{align*}
J_{6}(z)=\begin{pmatrix}
1 & s_{j} & s_{j} \\
0 & 1 & 0 \\
0 & 0 & 1
\end{pmatrix}, \qquad z \in rA_{j}, \qquad j=1,\ldots,m.
\end{align*}
\item[(c)] As $z \to \infty$, $\pm \im z > 0$, 
\begin{align}\label{asymp of Phi at infty}
\Phi(z) = \bigg( I+\frac{\Phi_{1}}{z}+\frac{\Phi_{2}}{z^{2}} +\bigO(z^{-3}) \bigg) \diag\Big( z^{-\frac{1}{3}},1,z^{\frac{1}{3}} \Big)L_{\pm}e^{\Theta(z)},
\end{align}
where $\Phi_{1}$,$\Phi_{2}$ are independent of $z$ and $\Phi_{1}=\Psi_{1}+\Psi_{0}^{-1}Y_{1}\Psi_{0}$.
\item[(d)] As $z \to rx_{j}$, $j=1,\ldots,m$, we have
\begin{align}\label{asymp of Phi near rxj 1}
\hspace{-0.35cm} \Phi(z) = \widehat{\Phi}_{j}(z) \begin{pmatrix}
1 & - \mathfrak{s}_{j}\log(z-rx_{j}) & - \mathfrak{s}_{j}\log(z-rx_{j}) \\
0 & 1 & 0 \\
0 & 0 & 1
\end{pmatrix} \begin{cases} 
I, & \hspace{-0.15cm} z \in \mathrm{II}, \\
\begin{pmatrix}
1 & -s_{j+1} & -s_{j+1} \\
0 & 1 & 0 \\
0 & 0 & 1
\end{pmatrix}, & \hspace{-0.15cm} z \in \mathrm{V},
\end{cases}
\end{align}
where we recall that $\mathfrak{s}_{j}:=\frac{s_{j+1}-s_{j}}{2\pi i}$. The matrix $\widehat{\Phi}_{j}$ is analytic at $rx_{j}$ and satisfies
\begin{align}\label{asymp of Phi near rxj}
\widehat{\Phi}_{j}(z) = \Phi_{j}^{(0)}(r) \Big( I + \Phi_{j}^{(1)}(r)(z-rx_{j}) + \bigO((z-rx_{j})^{2}) \Big), \qquad z \to rx_{j},
\end{align}
for some matrices $\Phi_{j}^{(0)}(r)$ and $\Phi_{j}^{(1)}(r)$. 
\item[(e)] $\Phi$ satisfies the symmetry 
\begin{align}\label{symmetry of Phi}
\Phi(z) = -\diag(1,-1,1)\Phi(-z)\mathcal{B}, \qquad \mathcal{B} = \begin{pmatrix}
-1 & 0 & 0 \\
0 & 0 & 1 \\
0 & 1 & 0
\end{pmatrix}.
\end{align}
\end{itemize}

\begin{proposition}
We have
\begin{align}\label{der of log F diff identity}
\partial_{r} \log F(r\vec{x},\vec{u}) = -\sum_{j=1}^{m} 2\, \mathfrak{s}_{j} x_{j} \Big[ \Big( \Phi_{j}^{(1)}(r)  \Big)_{21} + \Big( \Phi_{j}^{(1)}(r) \Big)_{31} \Big].
\end{align}
\end{proposition}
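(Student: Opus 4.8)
The plan is to evaluate each of the one-sided limits of the resolvent kernel $\mathrm R(v,v)$ appearing on the right-hand side of the differential identity \eqref{lol10} by localizing $\Phi$ near the endpoints $\pm rx_1,\dots,\pm rx_m$. The starting point is the observation that $\mathrm R(v,v)=\mathbf F'(v)^{t}\mathbf H(v)$: multiplying \eqref{resolvent in terms of F and H} by $u-v$ and letting $u\to v$ gives $\mathbf F(v)^{t}\mathbf H(v)=0$, so differentiating $\mathbf F(u)^{t}\mathbf H(v)=(u-v)\mathrm R(u,v)$ in $u$ at $u=v$ yields the claim.

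Next I would re-express $\mathbf F$ and $\mathbf H$ near $rx_j$ in terms of $\Phi$. On the upper side of $(-rx_m,rx_m)$, \eqref{Phi in terms of Y and Psi} reads $Y_+\widetilde\Psi=c\,\Psi_0\,\Phi_+$ with $c=\sqrt{2\pi/3}\,e^{\rho^{2}/6}i$, so by \eqref{def of F and H} and \eqref{def of f and h},
\begin{align*}
\mathbf F(v)=c\,\Psi_0\,\Phi_+(v)\begin{pmatrix}1\\0\\0\end{pmatrix},\qquad \mathbf H(v)=\sigma(v)\,c^{-1}\Psi_0^{-t}\,\Phi_+(v)^{-t}\begin{pmatrix}0\\1\\1\end{pmatrix},\qquad \sigma(v):=\sum_{k:\,v\in rB_k}\mathfrak{s}_k.
\end{align*}
By \eqref{asymp of Phi near rxj 1}, near $rx_j$ one has $\Phi_+=\widehat\Phi_j M$ with $M$ the upper unitriangular matrix carrying the logarithm, and the identities $M(1,0,0)^{t}=(1,0,0)^{t}$ and $M^{-t}(0,1,1)^{t}=(0,1,1)^{t}$ eliminate the logarithm from $\mathbf F$ and $\mathbf H$, respectively; hence $\mathbf F$ and $\mathbf H$ are, up to the piecewise constant scalar $\sigma$, analytic at $rx_j$, with $\Phi_+$ replaced by $\widehat\Phi_j$. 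Substituting, the scalar $c$ and the matrix $\Psi_0$ cancel and
\begin{align*}
\mathrm R(v,v)=\sigma(v)\begin{pmatrix}1&0&0\end{pmatrix}\bigl(\widehat\Phi_j(v)^{-1}\widehat\Phi_j'(v)\bigr)^{t}\begin{pmatrix}0\\1\\1\end{pmatrix}=\sigma(v)\Bigl[\bigl(\widehat\Phi_j^{-1}\widehat\Phi_j'\bigr)_{21}+\bigl(\widehat\Phi_j^{-1}\widehat\Phi_j'\bigr)_{31}\Bigr](v),
\end{align*}
while \eqref{asymp of Phi near rxj} gives $\widehat\Phi_j^{-1}\widehat\Phi_j'\to\Phi_j^{(1)}(r)$ as $v\to rx_j$. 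Thus the two one-sided limits of $\mathrm R(v,v)$ at $rx_j$ equal $\sigma(rx_j^{\mp})\bigl[(\Phi_j^{(1)}(r))_{21}+(\Phi_j^{(1)}(r))_{31}\bigr]$.

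The endpoints $-rx_j$ are handled in the same way after conjugation by the symmetry \eqref{symmetry of Phi}: using $\mathcal B(1,0,0)^{t}=-(1,0,0)^{t}$ and $\mathcal B(0,1,1)^{t}=(0,1,1)^{t}$, together with the fact that the first column of $\Phi$ and the vector $\Phi^{-t}(0,1,1)^{t}$ are continuous across the real axis, one finds that near $-rx_j$ the resolvent diagonal equals $\sigma(v)\bigl[(\widehat\Phi_j^{-1}\widehat\Phi_j')_{21}+(\widehat\Phi_j^{-1}\widehat\Phi_j')_{31}\bigr](-v)$, so the $-rx_j$ contribution coincides with the $rx_j$ one. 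It then remains to read off the value of $\sigma$ for each one-sided approach: approaching $rx_j$ or $-rx_j$ from within $rA_j$ gives $\sigma=\sum_{k\ge j}\mathfrak{s}_k=\tfrac{1-s_j}{2\pi i}$, and from within $rA_{j+1}$ (only when $j\le m-1$) gives $\sigma=\sum_{k\ge j+1}\mathfrak{s}_k=\tfrac{1-s_{j+1}}{2\pi i}$, both by telescoping with the convention $s_{m+1}=1$. Inserting these into \eqref{lol10}, the coefficient of $(\Phi_j^{(1)}(r))_{21}+(\Phi_j^{(1)}(r))_{31}$ reduces to $x_j\bigl(-2\tfrac{1-s_j}{2\pi i}+2\tfrac{1-s_{j+1}}{2\pi i}\bigr)=-2\mathfrak{s}_j x_j$ for $j\le m-1$ and to $-2\tfrac{1-s_m}{2\pi i}\,x_m=-2\mathfrak{s}_m x_m$ for $j=m$, which is precisely \eqref{der of log F diff identity}.

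The substance here is bookkeeping rather than a new idea. The points requiring care are: checking that the branch of $\log(z-rx_j)$ in \eqref{asymp of Phi near rxj 1} genuinely drops out of $\mathbf F$ and $\mathbf H$, which happens only because of the particular vectors $(1,0,0)^{t}$ and $(0,1,1)^{t}$ that occur in \eqref{def of f and h}; making the symmetry reduction at $-rx_j$ precise, in particular keeping track of which side of the jump contour $\Phi$ is evaluated on; and attributing the correct value of $\sigma$ to each one-sided limit. I expect the last telescoping step — converting the $s_j$'s into the $\mathfrak{s}_j$'s with the coefficients $-2\mathfrak{s}_j x_j$ — to be the one most prone to sign errors.
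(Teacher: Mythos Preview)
Your proof is correct and follows essentially the same route as the paper's: both express $\mathbf F,\mathbf H$ through $\Phi_+$ via \eqref{Phi in terms of Y and Psi}, reduce $\mathrm R(v,v)$ to the $(2,1)$ and $(3,1)$ entries of $\Phi_+^{-1}\Phi_+'$ (equivalently $\widehat\Phi_j^{-1}\widehat\Phi_j'$), use the symmetry \eqref{symmetry of Phi} to identify the $\pm rx_j$ contributions, and telescope the $\sigma$-coefficients into $-2\mathfrak s_jx_j$. The paper is somewhat terser---it writes $\mathrm R(v,v)=\sum_j\mathfrak s_j\chi_{rB_j}(v)\bigl[(\Phi_+^{-1}\Phi_+')_{21}+(\Phi_+^{-1}\Phi_+')_{31}\bigr]$ and passes from the two-sided sum to the final formula in one line---whereas you make the telescoping and the removal of the logarithm explicit, but the underlying argument is identical.
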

\begin{proof}
The proof is a minor adaptation of \cite[Proposition 3.5]{DXZ2020 thinning}. For $v \in \mathbb{R}$, by \eqref{def of f and h}, \eqref{def of F and H} and \eqref{Phi in terms of Y and Psi}, we have
\begin{align}\label{F and H in terms of Phi}
\mathbf{F}(v) = \sqrt{\frac{2\pi}{3}}e^{\frac{\rho^{2}}{6}}i \Psi_{0}\Phi_{+}(v) \begin{pmatrix}
1 \\ 0 \\ 0
\end{pmatrix}, \qquad \mathbf{H}(v) = \sum_{j=1}^{m}\mathfrak{s}_{j}\chi_{rB_{j}}(v) \frac{\Psi_{0}^{-t}}{\sqrt{\frac{2\pi}{3}}e^{\frac{\rho^{2}}{6}}i}\Phi_{+}(v)^{-t} \begin{pmatrix}
0 \\ 1 \\ 1
\end{pmatrix}.
\end{align}
Using \eqref{resolvent in terms of F and H}, \eqref{Phi in terms of Y and Psi}, \eqref{symmetry of Phi} and \eqref{F and H in terms of Phi}, we find
\begin{align*}
\mathrm{R}(v,v) = \sum_{j=1}^{m}\mathfrak{s}_{j}\chi_{rB_{j}}(v) \bigg( \big[ \Phi_{+}(v)^{-1}\Phi_{+}'(v) \big]_{21} + \big[ \Phi_{+}(v)^{-1}\Phi_{+}'(v) \big]_{31} \bigg) = \mathrm{R}(-v,-v), \qquad v \in \mathbb{R},
\end{align*}
which allows us to rewrite \eqref{lol10} as
\begin{align*}
\partial_{r} \log F(r\vec{x},\vec{u}) & = - 2 \sum_{j=1}^{m} x_{j} \lim_{v \nearrow rx_{j}}\mathrm{R}(v,v) + 2\sum_{j=1}^{m-1} x_{j} \lim_{v \searrow rx_{j}}\mathrm{R}(v,v) \\
& = - 2 \sum_{j=1}^{m} x_{j} \mathfrak{s}_{j} \bigg( \big[ \Phi_{+}(x_{j})^{-1}\Phi_{+}'(x_{j}) \big]_{21} + \big[ \Phi_{+}(x_{j})^{-1}\Phi_{+}'(x_{j}) \big]_{31} \bigg).
\end{align*}
By \eqref{asymp of Phi near rxj 1} and \eqref{asymp of Phi near rxj}, $\big[ \Phi_{+}(x_{j})^{-1}\Phi_{+}'(x_{j}) \big]_{k1} = ( \Phi_{j}^{(1)}(r)  )_{k1}$, for all $j=1,\ldots,m$ and $k=2,3$, which finishes the proof.
\end{proof}

\section{Lax pair}\label{section: Lax pair}
In this section,
\begin{itemize}
\vspace{-0.2cm}\item we find an explicit solution to \eqref{big system of ODEs}--\eqref{sum relation between qj and pj} in terms of $\Phi$,
\vspace{-0.2cm}\item we prove the relation $\partial_{r} \log F(r\vec{x},\vec{u}) = 2 H(r)$,
\vspace{-0.2cm}\item we derive some further identities for $H$ which will be useful in Section \ref{section: proof of main results}.
\end{itemize}
\vspace{-0.2cm}These results generalize part of the content of \cite[Section 4]{DXZ2020 thinning} to an arbitrary $m$.
\begin{proposition}\label{prop:solution of the system}
The functions $(p_{0},q_{0},\{p_{j,1},q_{j,1},p_{j,2},q_{j,2},p_{j,3},q_{j,3}\}_{j=1}^{m})$ defined by
\begin{align}
& p_{0}(r) := \frac{1}{\sqrt{2}}\Big( \frac{\rho}{3}+\Phi_{1,23} \Big), & & q_{0}(r) := \frac{1}{\sqrt{2}}\Big( \frac{\rho}{3} - \Phi_{1,12} \Big), \label{def of p0 and q0 in proof} \\
& \begin{pmatrix}
p_{j,1}(r) \\ p_{j,2}(r) \\ p_{j,3}(r)
\end{pmatrix} := - \mathfrak{s}_{j}\Phi_{j}^{(0)}(r)^{-t}\begin{pmatrix}
0 \\ 1 \\ 1
\end{pmatrix}, & & \begin{pmatrix}
q_{j,1}(r) \\ q_{j,2}(r) \\ q_{j,3}(r)
\end{pmatrix} := \Phi_{j}^{(0)}(r)\begin{pmatrix}
1 \\ 0 \\ 0
\end{pmatrix}, & & j=1,\ldots,m, \label{def of qj and pj in terms of Phijp0p}
\end{align}
satisfy \eqref{sum relation between qj and pj} and the system of coupled equations \eqref{big system of ODEs}.
\end{proposition}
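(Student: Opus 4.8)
The plan is to construct a Lax pair for $\Phi$ and to read the system \eqref{big system of ODEs}--\eqref{sum relation between qj and pj} off its compatibility condition. Since all the jump matrices $J_{0},\dots,J_{6}$ in the RH problem for $\Phi$ are independent of both $z$ and $r$, the logarithmic derivatives $A(z):=(\partial_{z}\Phi(z;r))\Phi(z;r)^{-1}$ and $B(z):=(\partial_{r}\Phi(z;r))\Phi(z;r)^{-1}$ have no jumps and therefore extend to single-valued meromorphic functions of $z\in\mathbb{C}$. From the local description \eqref{asymp of Phi near rxj 1}--\eqref{asymp of Phi near rxj} one has, near $z=rx_{j}$, $\Phi(z)=\widehat{\Phi}_{j}(z)\bigl(I-\mathfrak{s}_{j}\log(z-rx_{j})(E_{12}+E_{13})\bigr)C_{j}$ with $C_{j}$ locally constant and $E_{kl}$ the $3\times3$ matrix unit; since $(E_{12}+E_{13})^{2}=0$, both $A$ and $B$ have at worst simple poles at the points $\pm rx_{j}$ and are analytic elsewhere. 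Using the asymptotics \eqref{asymp of Phi at infty} (equivalently, the third-order ODE $\mathcal{P}_{j}'''=z\mathcal{P}_{j}+\rho\mathcal{P}_{j}'$ satisfied by the Pearcey functions) together with the cancellation of the fractional powers $z^{\pm1/3}$, which follows from $\omega^{2}+\omega+1=0$, one finds that $A(z)$ is the rational function $zA_{1}+A_{0}+\sum_{j=1}^{m}\bigl(\tfrac{A_{j}^{+}}{z-rx_{j}}+\tfrac{A_{j}^{-}}{z+rx_{j}}\bigr)$, where $A_{1}$ equals (up to the symmetry frame) the nilpotent matrix $E_{31}$ and $A_{0}=[\Phi_{1},E_{31}]+(\text{constant matrix depending on }\rho)$, while $B(z)=\sum_{j=1}^{m}\bigl(\tfrac{B_{j}^{+}}{z-rx_{j}}+\tfrac{B_{j}^{-}}{z+rx_{j}}\bigr)$ with $B(z)\to0$ as $z\to\infty$. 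The symmetry \eqref{symmetry of Phi} relates the residues at $-rx_{j}$ to those at $rx_{j}$, schematically through conjugation by $\sigma=\diag(1,-1,1)$.

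The residues at $rx_{j}$ are computed directly from \eqref{asymp of Phi near rxj}: writing $E_{12}+E_{13}=e_{1}(e_{2}+e_{3})^{t}$ with $e_{1},e_{2},e_{3}$ the standard basis of $\mathbb{C}^{3}$,
\begin{align*}
A_{j}^{+}=-\mathfrak{s}_{j}\,\Phi_{j}^{(0)}(r)(E_{12}+E_{13})\Phi_{j}^{(0)}(r)^{-1}=q_{j}(r)\,p_{j}(r)^{t},\qquad B_{j}^{+}=-x_{j}A_{j}^{+},
\end{align*}
where $q_{j}=(q_{j,1},q_{j,2},q_{j,3})^{t}=\Phi_{j}^{(0)}e_{1}$ and $p_{j}=(p_{j,1},p_{j,2},p_{j,3})^{t}=-\mathfrak{s}_{j}\Phi_{j}^{(0)}(r)^{-t}(e_{2}+e_{3})$ are exactly the vectors of \eqref{def of qj and pj in terms of Phijp0p}. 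Two consequences are immediate: first, $p_{j}^{t}q_{j}=-\mathfrak{s}_{j}(e_{2}+e_{3})^{t}e_{1}=0$, which is precisely relation \eqref{sum relation between qj and pj}; second, comparing the $z^{-1}$-coefficients of $B$ at infinity gives $\partial_{r}\Phi_{1}=\sum_{j=1}^{m}(B_{j}^{+}+B_{j}^{-})$, and reading off the $(2,3)$ and $(1,2)$ entries together with the definitions \eqref{def of p0 and q0 in proof} of $p_{0},q_{0}$ produces the first two equations of \eqref{big system of ODEs}, namely $p_{0}'=-\sqrt{2}\sum_{j}x_{j}p_{j,3}q_{j,2}$ and $q_{0}'=\sqrt{2}\sum_{j}x_{j}p_{j,2}q_{j,1}$.

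For the remaining $6m$ equations I would use the two Lax equations near $z=rx_{j}$. With $\Phi=\widehat{\Phi}_{j}N_{j}C_{j}$ as above, the $z$-equation $\partial_{z}\Phi=A\Phi$ identifies $\widehat{\Phi}_{j}$ — in particular $\Phi_{j}^{(1)}$, hence the combination $(\Phi_{j}^{(1)})_{21}+(\Phi_{j}^{(1)})_{31}$ entering the differential identity \eqref{der of log F diff identity} — with the behavior of $A$ near $rx_{j}$; and the $r$-equation $\partial_{r}\Phi=B\Phi$, combined with $\Phi_{j}^{(0)}(r)=\widehat{\Phi}_{j}(rx_{j};r)$ so that $\tfrac{d}{dr}\Phi_{j}^{(0)}=x_{j}\,\partial_{z}\widehat{\Phi}_{j}(rx_{j})+\partial_{r}\widehat{\Phi}_{j}(rx_{j})$, yields, after cancelling the logarithm-induced simple poles, a linear ODE
\begin{align*}
\tfrac{d}{dr}\Phi_{j}^{(0)}(r)=\mathcal{L}_{j}(r)\,\Phi_{j}^{(0)}(r),
\end{align*}
in which $\mathcal{L}_{j}$ is the explicit matrix obtained by evaluating the Lax matrices $A$ and $B$ near $z=rx_{j}$ (through their residues $A_{k}^{\pm},B_{k}^{\pm}$ and their regular parts there): it is here that the factors $\tfrac{1}{r}$ arise — from evaluating the regular parts at $z=rx_{j}$ — and that summing the residues over the nodes produces the quantities $S_{kl}(r)=\sum_{\ell}p_{\ell,k}(r)q_{\ell,l}(r)$. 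Taking the first column and the $-\mathfrak{s}_{j}(e_{2}+e_{3})^{t}$-combination of this identity gives $q_{j}'=\mathcal{L}_{j}q_{j}$ and $p_{j}'=-\mathcal{L}_{j}^{t}p_{j}$, so that $(p_{j}^{t}q_{j})'=0$, consistently with \eqref{sum relation between qj and pj}. Substituting the explicit $A_{1}$ and $A_{0}$ — the latter carrying $\Phi_{1}$, hence $p_{0},q_{0}$ via \eqref{def of p0 and q0 in proof}, and the constants $\rho,\kappa_{3}(\rho),\dots$ — and simplifying the $(k,k)$-type terms with the constraint \eqref{sum relation between qj and pj} reproduces exactly the eight equations for $q_{j,k},p_{j,k}$, $k=1,2,3$, of \eqref{big system of ODEs}.

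The main obstacle is the explicit determination of the polynomial part $zA_{1}+A_{0}$ of $A$ at $z=\infty$ and the subsequent assembly of $\mathcal{L}_{j}$. One must push the expansion \eqref{asymp of Phi at infty} — keeping track of the constant factor $\Psi_{0}$, of $L_{\pm}$, of the phase $\Theta$, and of the correction $\Phi_{1}$ — far enough to read off $A_{0}$, verify that all fractional powers and the $z^{-1}$-behaviour are correctly accounted for (the latter yielding an identity expressing $\Phi_{2}$, and certain entries of $\Psi_{1},\Psi_{2}$, through $\Phi_{1}$ and $\sum_{j}(A_{j}^{+}+A_{j}^{-})$), and keep the $z\mapsto-z$ symmetry consistent across the two half-planes; the remaining work is then the bookkeeping that turns the entries of $\mathcal{L}_{j}$ into the right-hand sides of \eqref{big system of ODEs}. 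For $m=1$ this was carried out in \cite[Section 4]{DXZ2020 thinning}; for $m\ge2$ the genuinely new feature is only that the regular part of $A$ at a node $rx_{j}$ now also contains the contributions $\tfrac{1}{r}\sum_{k\ne j}\tfrac{q_{k}p_{k}^{t}}{x_{j}-x_{k}}$ of the other nodes and $\tfrac{1}{r}\sum_{k}\tfrac{\sigma q_{k}p_{k}^{t}\sigma}{x_{j}+x_{k}}$ of their symmetric partners, which get absorbed into the $S_{kl}(r)$ and the $\tfrac{2}{r}$-coefficients of \eqref{big system of ODEs}, so the extension is essentially bookkeeping. That \eqref{def of Hamiltonian} is a Hamiltonian for the resulting system is then the separate finite verification \eqref{Hamiltonian relation} already recorded in the excerpt.
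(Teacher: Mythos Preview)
Your approach is essentially the same as the paper's: the paper also introduces the Lax pair $L(z)=\partial_z\Phi\cdot\Phi^{-1}$ and $U(z)=\partial_r\Phi\cdot\Phi^{-1}$ (your $A,B$), identifies the rank-one residues $A_j=q_jp_j^t$ (whence \eqref{sum relation between qj and pj} from $\mathrm{Tr}\,A_j=0$), and derives the ODE for $\Phi_j^{(0)}$ by evaluating the regular part of $x_jL(z)+U(z)$ at $z=rx_j$. The only minor variation is that for the first two equations you read $\partial_r\Phi_1$ directly from the $z^{-1}$-term of $B$, whereas the paper obtains $A_0'$ from the compatibility condition $\partial_rL-\partial_zU=[U,L]$ as $z\to\infty$; the two are equivalent. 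One point worth sharpening: in your last paragraph you write the off-node contributions to the regular part of $A$ at $rx_j$ with the denominators $\tfrac{1}{x_j-x_k}$, but the clean form $\tfrac{2}{r}S_{kl}$ appearing in \eqref{big system of ODEs} only emerges after combining with the corresponding contributions from $B$, since in $x_jL+U$ the factor $(x_j-x_k)/(r(x_j-x_k))$ collapses to $1/r$ uniformly in $k$ --- this cancellation (equation \eqref{def of Mj} in the paper) is the small but essential computation that turns the multi-node bookkeeping into the $S_{kl}$'s.
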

\begin{remark}
Since $\Phi$ exists by \eqref{def of F and H}, \eqref{def of Y} and \eqref{Phi in terms of Y and Psi}, Proposition \ref{prop:solution of the system} implies that there exists at least one solution to \eqref{big system of ODEs}--\eqref{sum relation between qj and pj}.
\end{remark} 
\begin{proof}
Following \cite{DXZ2020 thinning}, we will proceed by analyzing the following Lax pair
\begin{align*}
L(z;r) := \partial_{z}\Phi(z;r) \cdot \Phi(z;r)^{-1}, \qquad U(z;r) := \partial_{r}\Phi(z;r) \cdot \Phi(z;r)^{-1}.
\end{align*}
Since the jump matrices of $\Phi$ are independent of $z$ and $r$, $L(z)=L(z;r)$ and $U(z)=U(z;r)$ are analytic in $\mathbb{C}\setminus \{-rx_{m},\ldots,-rx_{1},$ $0,rx_{1},\ldots,rx_{m}\}$. Furthermore, using \eqref{symmetry of Phi}, we infer that they satisfy
\begin{align}\label{symmetry of L and U}
L(z) = -\diag(1,-1,1)L(-z)\diag(1,-1,1), \qquad U(z) = \diag(1,-1,1)U(-z)\diag(1,-1,1).
\end{align}
By \eqref{asymp of Phi at infty}, \eqref{def of p0 and q0 in proof} and \eqref{symmetry of L and U}, we find
\begin{align}\label{expansion of L in proof}
L(z) = \begin{pmatrix}
0 & 0 & 0 \\
0 & 0 & 0 \\
1 & 0 & 0 
\end{pmatrix}z + A_{0}(r) + \frac{L_{1}}{z} + \bigO(z^{-2}), \qquad \mbox{as } z \to \infty,
\end{align}
where
\begin{align}
& A_{0}(r) = \begin{pmatrix}
0 & 1 & 0 \\
\frac{\rho}{3} & 0 & 1 \\
0 & \frac{\rho}{3} & 0
\end{pmatrix} + \left[ \Phi_{1},\begin{pmatrix}
0 & 0 & 0 \\ 0 & 0 & 0 \\ 1 & 0 & 0
\end{pmatrix} \right] = \begin{pmatrix}
0 & 1 & 0 \\
\sqrt{2}p_{0}(r) & 0 & 1 \\
0 & \sqrt{2}q_{0}(r) & 0
\end{pmatrix}, \label{final expr for A0} \\
& L_{1} = \begin{pmatrix}
-\frac{1}{3} & 0 & \frac{\rho}{3} \\
0 & 0 & 0 \\
0 & 0 & \frac{1}{3}
\end{pmatrix} + \left[ \Phi_{2},\begin{pmatrix}
0 & 0 & 0 \\
0 & 0 & 0 \\
1 & 0 & 0 
\end{pmatrix} \right] + \left[ \begin{pmatrix}
0 & 0 & 0 \\
0 & 0 & 0 \\
1 & 0 & 0 
\end{pmatrix}\Phi_{1},\Phi_{1} \right] + \left[ \Phi_{1},\begin{pmatrix}
0 & 1 & 0 \\
\frac{\rho}{3} & 0 & 1 \\
0 & \frac{\rho}{3} & 0
\end{pmatrix} \right], \nonumber
\end{align}
and where we have used the notation $[\mathcal{B}_{1},\mathcal{B}_{2}]:=\mathcal{B}_{1}\mathcal{B}_{2}-\mathcal{B}_{2}\mathcal{B}_{1}$.
Also, by \eqref{asymp of Phi near rxj 1}--\eqref{asymp of Phi near rxj} and \eqref{def of qj and pj in terms of Phijp0p}, we have
\begin{align}\label{expansion of L near rxj}
L(z) = \frac{A_{j}(r)}{z-rx_{j}} + \bigO(1), \qquad \mbox{as } z \to rx_{j}, \quad j=1,\ldots,m
\end{align}
with
\begin{align}\label{lol1}
A_{j}(r) = -\mathfrak{s}_{j}\Phi_{j}^{(0)}(r)\begin{pmatrix}
0 & 1 & 1 \\
0 & 0 & 0 \\
0 & 0 & 0
\end{pmatrix}\Phi_{j}^{(0)}(r)^{-1} = \begin{pmatrix}
q_{j,1} \\ q_{j,2} \\ q_{j,3}
\end{pmatrix}\begin{pmatrix}
p_{j,1} & p_{j,2} & p_{j,3}
\end{pmatrix}.
\end{align}
Since $\det \Phi(z)$ is constant, we have $\mbox{Tr} L(z) = \mbox{Tr} A_{j}(r) = \sum_{k=1}^{3} p_{j,k}(r)q_{j,k}(r) = 0$, which already proves \eqref{sum relation between qj and pj}. Combining \eqref{symmetry of L and U}, \eqref{expansion of L in proof} and \eqref{expansion of L near rxj}, we have shown that
\begin{align}\label{def of L}
L(z) = \begin{pmatrix}
0 & 0 & 0 \\
0 & 0 & 0 \\
z & 0 & 0 
\end{pmatrix}+A_{0}(r) + \sum_{j=1}^{m}\bigg( \frac{A_{j}(r)}{z-rx_{j}} + \frac{A_{-j}(r)}{z+rx_{j}}\bigg),
\end{align}
where
\begin{align}\label{def of Amj}
A_{-j}(r) = \diag(1,-1,1)A_{j}(r)\diag(1,-1,1) = \begin{pmatrix}
q_{j,1} \\ -q_{j,2} \\ q_{j,3}
\end{pmatrix}\begin{pmatrix}
p_{j,1} & -p_{j,2} & p_{j,3}
\end{pmatrix}.
\end{align}
For the computation of $U$, we use \eqref{asymp of Phi at infty} and \eqref{asymp of Phi near rxj 1}--\eqref{asymp of Phi near rxj} to obtain
\begin{align*}
& U(z) = \bigO(z^{-1}), \qquad \mbox{as } z \to \infty, & & U(z) = -x_{j} \frac{A_{j}(r)}{z-rx_{j}} + \bigO(1), \qquad \mbox{as } z \to rx_{j}.
\end{align*}
Using also \eqref{symmetry of L and U}, we conclude that
\begin{align}\label{expr for U}
U(z) = \sum_{j=1}^{m}\bigg( -x_{j}\frac{A_{j}(r)}{z-rx_{j}} + x_{j}\frac{A_{-j}(r)}{z+rx_{j}}\bigg).
\end{align}
It remains to show that the functions $(p_{0},q_{0},\{p_{j,1},q_{j,1},p_{j,2},q_{j,2},p_{j,3},q_{j,3}\}_{j=1}^{m})$ satisfy the system of equations \eqref{big system of ODEs}. For this, we note that the compatibility condition $\partial_{z}\partial_{r}\Phi(z) = \partial_{r}\partial_{z}\Phi(z)$ is equivalent to the relation
\begin{align}\label{lol2}
\partial_{r}L(z) - \partial_{z}U(z) = [U(z),L(z)].
\end{align}
On the other hand, by \eqref{def of L} and \eqref{expr for U}, we have
\begin{align}\label{lol3}
\partial_{r}L(z) - \partial_{z}U(z) = A_{0}'(r) + \sum_{j=1}^{m} \bigg(\frac{A_{j}'(r)}{z-rx_{j}} + \frac{A_{-j}'(r)}{z+rx_{j}}\bigg).
\end{align}
Substituting \eqref{def of L} and \eqref{expr for U} in the above two equations, and then taking $z \to \infty$, we get
\begin{align*}
A_{0}'(r) & = \sum_{j=1}^{m}x_{j} \left[ A_{-j}(r)-A_{j}(r),\begin{pmatrix}
0 & 0 & 0 \\
0 & 0 & 0 \\
1 & 0 & 0 
\end{pmatrix} \right]  = \begin{pmatrix}
0 & 0 & 0 \\
-2\sum_{j=1}^{m}x_{j}p_{j,3}(r)q_{j,2}(r) & 0 & 0 \\
0 & 2\sum_{j=1}^{m}x_{j}p_{j,2}(r)q_{j,1}(r) & 0
\end{pmatrix},
\end{align*}
which yields the first two equations in \eqref{big system of ODEs}. We now prove the last six equations of \eqref{big system of ODEs}. A direct computation using \eqref{asymp of Phi near rxj 1} and \eqref{asymp of Phi near rxj} shows that
\begin{align}\label{lol11}
x_{j}L(z)+U(z) = \big( x_{j}\partial_{z}\Phi(z) + \partial_{r} \Phi(z) \big) \Phi(z)^{-1} = \partial_{r} \Phi_{j}^{(0)}(r) \cdot \Phi_{j}^{(0)}(r)^{-1} + o(1) \qquad \mbox{as } z \to rx_{j},
\end{align}
and using \eqref{def of L} and \eqref{expr for U}, we get
\begin{align}\label{lol12}
x_{j}L(z)+U(z) = M_{j}(r) - \frac{1}{r}A_{j}(r) + o(1) \qquad \mbox{as } z \to rx_{j},
\end{align}
with
\begin{align}\label{def of Mj}
M_{j} = \begin{pmatrix}
0 & 0 & 0 \\
0 & 0 & 0 \\
rx_{j}^{2} & 0 & 0
\end{pmatrix} + x_{j} A_{0} + \frac{1}{r} \sum_{\substack{k=-m \\ k \neq 0}}^{m} A_{k} = \begin{pmatrix}
\frac{2}{r}S_{11} & x_{j} & \frac{2}{r}S_{31} \\
\sqrt{2}p_{0}x_{j} & \frac{2}{r}S_{22} & x_{j} \\
rx_{j}^{2} + \frac{2}{r}S_{13} & \sqrt{2}q_{0}x_{j} & \frac{2}{r}S_{33}
\end{pmatrix}.
\end{align}
In \eqref{def of Mj}, we have omitted the $r$-dependence of various functions for notational convenience. Combining \eqref{lol11} and \eqref{lol12} yields
\begin{align*}
\partial_{r} \Phi_{j}^{(0)}(r) = \bigg( M_{j}(r) - \frac{1}{r}A_{j}(r) \bigg) \Phi_{j}^{(0)}(r).
\end{align*}
Taking the first column of the above equation and using \eqref{sum relation between qj and pj}, \eqref{final expr for A0} and \eqref{def of qj and pj in terms of Phijp0p}, we get
\begin{align}\label{lol4}
\begin{pmatrix}
q_{j,1}'(r) & q_{j,2}'(r) & q_{j,3}'(r)
\end{pmatrix}^{t} = M(r) \begin{pmatrix}
q_{j,1}(r) & q_{j,2}(r) & q_{j,3} (r)
\end{pmatrix}^{t},
\end{align}
and it is a direct computation to verify that \eqref{lol4} is equivalent to the third, fourth and fifth equations of \eqref{big system of ODEs}. Finally, using \eqref{def of L}, \eqref{expr for U}, \eqref{lol2} and \eqref{lol3}, and letting $z \to rx_{j}$,
we get
\begin{align}\label{Aj' commutator}
A_{j}'(r) = -[A_{j}(r),M_{j}(r)].
\end{align}
Combining \eqref{Aj' commutator} with \eqref{final expr for A0} and \eqref{lol4}, we get
\begin{align*}
\begin{pmatrix}
p_{j,1}'(r) & p_{j,2}'(r) & p_{j,3}'(r)
\end{pmatrix} = -\begin{pmatrix}
p_{j,1}(r) & p_{j,2}(r) & p_{j,3}(r)
\end{pmatrix} M(r),
\end{align*}
which yields the last three equations in \eqref{big system of ODEs}.
\end{proof}

\medskip For later use, we also note that by taking $z \to \infty$ in \eqref{expansion of L in proof} and then by reading the $z^{-1}$ term of the (1,3) entry, we get
\begin{align}\label{useful relation}
\sum_{j=1}^{m} (A_{j,13}(r)+A_{-j,13}(r)) = 2S_{31}(r) = \frac{\rho}{3} + \Phi_{1,12}(r)-\Phi_{1,23}(r) = \rho-\sqrt{2}(p_{0}(r)+q_{0}(r)),
\end{align}
where we have also used \eqref{def of L} and \eqref{def of p0 and q0 in proof}.
In the rest of this section, we prove some identities for $H$ which will be useful in Section \ref{section: proof of main results}.
\begin{proposition}
Let $H$ be the Hamiltonian given in \eqref{def of Hamiltonian} with $(p_{0},q_{0},\{p_{j,1},q_{j,1},p_{j,2},q_{j,2},p_{j,3},q_{j,3}\}_{j=1}^{m})$ defined as in \eqref{def of p0 and q0 in proof}--\eqref{def of qj and pj in terms of Phijp0p}. We have
\begin{align}\label{der of integral representation}
\partial_{r} \log F(r\vec{x},\vec{u}) = 2 H(r).
\end{align}
\end{proposition}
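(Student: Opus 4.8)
The plan is to evaluate the right-hand side of the differential identity \eqref{der of log F diff identity} through a local analysis of the Lax matrix $L(z)=\partial_{z}\Phi(z)\cdot\Phi(z)^{-1}$ at the points $z=rx_{j}$, and then to reorganize the resulting algebraic expression --- using the rank-one structure of $A_{\pm j}$ and the relations \eqref{sum relation between qj and pj} --- until it becomes $2H(r)$ as given by \eqref{def of Hamiltonian}. Throughout, write $\mathbf{p}_{j}=(p_{j,1},p_{j,2},p_{j,3})^{t}$ and $\mathbf{q}_{j}=(q_{j,1},q_{j,2},q_{j,3})^{t}$, so that by \eqref{def of qj and pj in terms of Phijp0p} one has $\Phi_{j}^{(0)}(r)(1,0,0)^{t}=\mathbf{q}_{j}$ and $(0,1,1)\Phi_{j}^{(0)}(r)^{-1}=-\mathfrak{s}_{j}^{-1}\mathbf{p}_{j}^{t}$, while by \eqref{lol1} and \eqref{def of Amj}, $A_{j}(r)=\mathbf{q}_{j}\mathbf{p}_{j}^{t}$ and $A_{-j}(r)=(\mathcal{J}\mathbf{q}_{j})(\mathcal{J}\mathbf{p}_{j})^{t}$ with $\mathcal{J}=\diag(1,-1,1)$.

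The first step is to express $(\Phi_{j}^{(1)})_{21}+(\Phi_{j}^{(1)})_{31}$ through $L$. Near $z=rx_{j}$, \eqref{asymp of Phi near rxj 1} reads $\Phi=\widehat{\Phi}_{j}\cdot\big(I-\mathfrak{s}_{j}\log(z-rx_{j})\,E\big)\cdot C_{j}$, where $E=\begin{pmatrix}0&1&1\\0&0&0\\0&0&0\end{pmatrix}$ satisfies $E^{2}=0$ and $C_{j}$ is the $z$-independent matrix appearing in \eqref{asymp of Phi near rxj 1}; hence $L(z)=\widehat{\Phi}_{j}'\widehat{\Phi}_{j}^{-1}-\mathfrak{s}_{j}(z-rx_{j})^{-1}\widehat{\Phi}_{j}E\widehat{\Phi}_{j}^{-1}$. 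Substituting the two-term expansion \eqref{asymp of Phi near rxj} of $\widehat{\Phi}_{j}$, the residue of $L$ at $rx_{j}$ recovers $A_{j}(r)=-\mathfrak{s}_{j}\Phi_{j}^{(0)}E(\Phi_{j}^{(0)})^{-1}$ (agreeing with \eqref{lol1}), and the $\bigO(1)$ part equals $L_{j}^{\mathrm{reg}}(r):=G_{j}+[G_{j},A_{j}]$ with $G_{j}:=\Phi_{j}^{(0)}\Phi_{j}^{(1)}(\Phi_{j}^{(0)})^{-1}$. Multiplying $L_{j}^{\mathrm{reg}}(r)$ on the left by $(0,1,1)(\Phi_{j}^{(0)})^{-1}=-\mathfrak{s}_{j}^{-1}\mathbf{p}_{j}^{t}$ and on the right by $\Phi_{j}^{(0)}(1,0,0)^{t}=\mathbf{q}_{j}$, the commutator disappears because $A_{j}=\mathbf{q}_{j}\mathbf{p}_{j}^{t}$ and $\mathbf{p}_{j}^{t}\mathbf{q}_{j}=\sum_{k}p_{j,k}q_{j,k}=0$ by \eqref{sum relation between qj and pj}, leaving $(\Phi_{j}^{(1)})_{21}+(\Phi_{j}^{(1)})_{31}=-\mathfrak{s}_{j}^{-1}\mathbf{p}_{j}^{t}L_{j}^{\mathrm{reg}}(r)\mathbf{q}_{j}$. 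Plugging into \eqref{der of log F diff identity} gives the clean formula
\[ \partial_{r}\log F(r\vec{x},\vec{u})=2\sum_{j=1}^{m}x_{j}\,\mathbf{p}_{j}^{t}L_{j}^{\mathrm{reg}}(r)\,\mathbf{q}_{j}. \]

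Next I would substitute the explicit form of $L_{j}^{\mathrm{reg}}(r)$ read off from \eqref{def of L}:
\[ L_{j}^{\mathrm{reg}}(r)=\begin{pmatrix}0&0&0\\0&0&0\\rx_{j}&0&0\end{pmatrix}+A_{0}(r)+\frac{A_{-j}(r)}{2rx_{j}}+\sum_{k\neq j}\bigg(\frac{A_{k}(r)}{r(x_{j}-x_{k})}+\frac{A_{-k}(r)}{r(x_{j}+x_{k})}\bigg). \]
Using $A_{0}$ from \eqref{final expr for A0}, the contribution of the first matrix above and of $A_{0}(r)$ to $2\sum_{j}x_{j}\mathbf{p}_{j}^{t}L_{j}^{\mathrm{reg}}\mathbf{q}_{j}$ is precisely twice the first five terms of $H$ in \eqref{def of Hamiltonian}. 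For the remaining ``nonlocal'' contribution, the rank-one identities give $\mathbf{p}_{j}^{t}A_{k}\mathbf{q}_{j}=a_{jk}a_{kj}$ and $\mathbf{p}_{j}^{t}A_{-k}\mathbf{q}_{j}=b_{jk}b_{kj}$, where $a_{jk}:=\sum_{\ell}p_{j,\ell}q_{k,\ell}$ and $b_{jk}:=p_{j,1}q_{k,1}-p_{j,2}q_{k,2}+p_{j,3}q_{k,3}$ (note $a_{jj}=0$ and $b_{jj}=-2p_{j,2}q_{j,2}$ by \eqref{sum relation between qj and pj}); absorbing the $A_{-j}(r)/(2rx_{j})$ term into the $b$-sum at $k=j$, this contribution equals $\frac{2}{r}\big(\sum_{j\neq k}\frac{x_{j}}{x_{j}-x_{k}}a_{jk}a_{kj}+\sum_{j,k}\frac{x_{j}}{x_{j}+x_{k}}b_{jk}b_{kj}\big)$. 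Because $a_{jk}a_{kj}$ and $b_{jk}b_{kj}$ are symmetric in $(j,k)$ and $\frac{x_{j}}{x_{j}-x_{k}}+\frac{x_{k}}{x_{k}-x_{j}}=1=\frac{x_{j}}{x_{j}+x_{k}}+\frac{x_{k}}{x_{k}+x_{j}}$, pairing the $(j,k)$- and $(k,j)$-terms cancels all $x$-dependence and reduces the nonlocal contribution to $\frac{2}{r}\big(\sum_{j<k}(a_{jk}a_{kj}+b_{jk}b_{kj})+\frac12\sum_{j}b_{jj}^{2}\big)$.

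Finally I would prove the pointwise identity $a_{jk}a_{kj}+b_{jk}b_{kj}=b_{jj}b_{kk}-2(p_{j,1}p_{k,3}-p_{j,3}p_{k,1})(q_{j,1}q_{k,3}-q_{j,3}q_{k,1})$, valid for each $j\neq k$: expanding both sides, the difference collapses to $2(p_{j,1}q_{j,1}+p_{j,3}q_{j,3})(p_{k,1}q_{k,1}+p_{k,3}q_{k,3})-2p_{j,2}q_{j,2}\,p_{k,2}q_{k,2}$, which vanishes by \eqref{sum relation between qj and pj}. Summing over $j<k$ and noting $S_{11}-S_{22}+S_{33}=\sum_{j}b_{jj}$, the nonlocal contribution becomes exactly $\frac{1}{r}\big((S_{11}-S_{22}+S_{33})^{2}-2\sum_{k,\ell}(p_{k,1}p_{\ell,3}-p_{k,3}p_{\ell,1})(q_{k,1}q_{\ell,3}-q_{k,3}q_{\ell,1})\big)$, i.e.\ twice the last term of $H$ in \eqref{def of Hamiltonian}; adding this to twice the first five terms yields $\partial_{r}\log F(r\vec{x},\vec{u})=2H(r)$. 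I expect the main obstacle to be the bookkeeping at the two places where matrix manipulations meet the rank-one structure: cleanly isolating $L_{j}^{\mathrm{reg}}(r)$ (the logarithmic gauge factor $I-\mathfrak{s}_{j}\log(z-rx_{j})E$ and the need for the second coefficient $\Phi_{j}^{(1)}$ make this delicate), and, after the cancellation of the $x$-variables, recognizing the leftover quadratic form in the $p_{j,k},q_{j,k}$ as the $\frac{1}{2r}(\cdots)$ block of $H$ --- which hinges entirely on the trace-zero relations \eqref{sum relation between qj and pj}.
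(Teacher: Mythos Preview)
Your proof is correct and follows essentially the same route as the paper's own argument: both read off the contribution of $\Phi_{j}^{(1)}$ from the constant term of the Lax matrix $L(z)$ at $z=rx_{j}$, substitute the explicit form of that constant term coming from \eqref{def of L}, and then reduce the resulting expression to $H(r)$ using the rank-one structure of $A_{\pm j}$ and the symmetrization $\frac{x_{j}}{x_{j}\mp x_{k}}+\frac{x_{k}}{x_{k}\mp x_{j}}=1$. The paper obtains $\Phi_{j}^{(1)}$ from the $\bigO(1)$ term of $\partial_{z}\Phi=L\Phi$ (equation \eqref{lol13}) and then leaves the final match with \eqref{def of Hamiltonian} as ``a direct computation''; you instead isolate $L_{j}^{\mathrm{reg}}$ directly, use $A_{j}\mathbf{q}_{j}=0=\mathbf{p}_{j}^{t}A_{j}$ to kill the commutator, and carry out that computation explicitly via the identity $a_{jk}a_{kj}+b_{jk}b_{kj}=b_{jj}b_{kk}-2(p_{j,1}p_{k,3}-p_{j,3}p_{k,1})(q_{j,1}q_{k,3}-q_{j,3}q_{k,1})$. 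This extra step is a genuine addition of detail rather than a different idea.
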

\begin{proof}
By \eqref{der of log F diff identity}, the claim \eqref{der of integral representation} is equivalent to
\begin{align}\label{Hamiltonian as a trace}
H(r) = -\sum_{j=1}^{m}\mathfrak{s}_{j}x_{j} \mathrm{Tr}\left( \Phi_{j}^{(1)}(r)\begin{pmatrix}
0 & 1 & 1 \\
0 & 0 & 0 \\
0 & 0 & 0
\end{pmatrix} \right) = -\sum_{j=1}^{m}\mathfrak{s}_{j}x_{j} \bigg[ \Big( \Phi_{j}^{(1)}(r)\Big)_{21} + \Phi_{j}^{(1)}(r)\Big)_{31} \bigg].
\end{align}
Let $\mathcal{H}(r)$ be the right-hand side of \eqref{Hamiltonian as a trace}. We must show that $H(r) = \mathcal{H}(r)$. By reading the $\bigO(1)$ term in the expansion of $\partial_{z}\Phi(z) = L(z)\Phi(z)$ as $z \to rx_{j}$ (using \eqref{def of L} and \eqref{asymp of Phi near rxj 1}), we obtain
\begin{align}
& \Phi_{j}^{(1)}(r) = \mathfrak{s}_{j} \left[ \Phi_{j}^{(1)}(r),\begin{pmatrix}
0 & 1 & 1 \\
0 & 0 & 0 \\
0 & 0 & 0
\end{pmatrix} \right] \nonumber \\
& + \Phi_{j}^{(0)}(r)^{-1}\left( \begin{pmatrix}
0 & 0 & 0 \\
0 & 0 & 0 \\
rx_{j} & 0 & 0 
\end{pmatrix}+A_{0}(r) + \sum_{\substack{\ell=1 \\ \ell \neq j}}^{m}\bigg( \frac{A_{\ell}(r)}{rx_{j}-rx_{\ell}} + \frac{A_{-\ell}(r)}{rx_{j}+rx_{\ell}}\bigg) + \frac{A_{-j}(r)}{2rx_{j}} \right)\Phi_{j}^{(0)}(r). \label{lol13}
\end{align}
Substituting \eqref{lol13} in the right-hand side of \eqref{Hamiltonian as a trace} leads to
\begin{align}
\mathcal{H}(r) = & -\sum_{j=1}^{m}\mathfrak{s}_{j}x_{j} \begin{pmatrix}
0 & 1 & 1
\end{pmatrix} \Phi_{j}^{(0)}(r)^{-1}\left[ \begin{pmatrix}
0 & 0 & 0 \\
0 & 0 & 0 \\
rx_{j} & 0 & 0 
\end{pmatrix}+A_{0}(r) \right. \nonumber \\
& \left. + \sum_{\substack{\ell=1 \\ \ell \neq j}}^{m}\bigg( \frac{A_{\ell}(r)}{rx_{j}-rx_{\ell}} + \frac{A_{-\ell}(r)}{rx_{j}+rx_{\ell}}\bigg) + \frac{A_{-j}(r)}{2rx_{j}} \right]\Phi_{j}^{(0)}(r) \begin{pmatrix}
1 \\ 0 \\ 0
\end{pmatrix}. \label{lol14}
\end{align}
Inserting \eqref{def of qj and pj in terms of Phijp0p} and \eqref{final expr for A0} in \eqref{lol14}, we get
\begin{align*}
\mathcal{H}(r) = \sum_{j=1}^{m}x_{j} \begin{pmatrix}
p_{j,1} \\ p_{j,2} \\ p_{j,3}
\end{pmatrix}^{t}\left( \begin{pmatrix}
0 & 1 & 0 \\
\sqrt{2}p_{0} & 0 & 1 \\
rx_{j} & \sqrt{2}q_{0} & 0
\end{pmatrix} + \sum_{\substack{\ell=1 \\ \ell \neq j}}^{m}\bigg( \frac{A_{\ell}}{rx_{j}-rx_{\ell}} + \frac{A_{-\ell}}{rx_{j}+rx_{\ell}}\bigg) + \frac{A_{-j}}{2rx_{j}} \right)\begin{pmatrix}
q_{j,1} \\ q_{j,2} \\ q_{j,3}
\end{pmatrix}.
\end{align*}
The double sum in the above expression can be simplified as
\begin{align*}
& \sum_{j=1}^{m}x_{j} \begin{pmatrix}
p_{j,1} \\ p_{j,2} \\ p_{j,3}
\end{pmatrix}^{t} \sum_{\substack{\ell=1 \\ \ell \neq j}}^{m}\bigg( \frac{A_{\ell}}{rx_{j}-rx_{\ell}} + \frac{A_{-\ell}}{rx_{j}+rx_{\ell}}\bigg) \begin{pmatrix}
q_{j,1} \\ q_{j,2} \\ q_{j,3}
\end{pmatrix} = \frac{1}{2} \sum_{j=1}^{m} \begin{pmatrix}
p_{j,1} \\ p_{j,2} \\ p_{j,3}
\end{pmatrix}^{t} \sum_{\substack{\ell=1 \\ \ell \neq j}}^{m}\Big( A_{\ell} + A_{-\ell}\bigg) \begin{pmatrix}
q_{j,1} \\ q_{j,2} \\ q_{j,3}
\end{pmatrix}.
\end{align*}
Using \eqref{lol1} and \eqref{def of Amj}, it is now a direct computation to check that indeed $\mathcal{H}(r) = H(r)$, which concludes the proof. 
\end{proof}
\begin{proposition}\label{prop: diff identity for Hamiltonian}
Let $H$ be the Hamiltonian given in \eqref{def of Hamiltonian} with $(p_{0},q_{0},\{p_{j,1},q_{j,1},p_{j,2},q_{j,2},p_{j,3},q_{j,3}\}_{j=1}^{m})$ defined as in \eqref{def of p0 and q0 in proof}--\eqref{def of qj and pj in terms of Phijp0p}. We have
\begin{align}
& p_{0}(r)q_{0}'(r)+\sum_{j=1}^{m}\sum_{k=1}^{3}p_{j,k}(r)q_{j,k}'(r) - H(r) \nonumber \\
& = H(r) + \frac{1}{4}\frac{d}{dr}\Big( 2p_{0}(r)q_{0}(r) + \sum_{j=1}^{m}\big[p_{j,2}(r)q_{j,2}(r)+2p_{j,3}(r)q_{j,3}(r)\big] -3rH(r) \Big). \label{magical identity}
\end{align}
Furthermore,
\begin{align}\label{differential identity for big sum}
\partial_{\gamma}\bigg( p_{0}(r)q_{0}'(r) + \sum_{k=1}^{3}\sum_{j=1}^{m}p_{j,k}(r)q_{j,k}'(r) - H(r) \bigg) = \frac{d}{dr}\bigg( \sum_{k=1}^{3}\sum_{j=1}^{m}p_{j,k}(r)\partial_{\gamma}q_{j,k}(r)+p_{0}(r)\partial_{\gamma} q_{0}(r) \bigg)
\end{align}
where $\gamma$ is any parameter among $u_{1},\ldots,u_{m}$.
\end{proposition}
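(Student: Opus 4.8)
The plan is to treat the two identities separately. Identity \eqref{differential identity for big sum} will follow formally from the Hamiltonian structure \eqref{Hamiltonian relation}, once we use that neither the Hamiltonian \eqref{def of Hamiltonian} nor the system \eqref{big system of ODEs} depends explicitly on $u_{1},\dots,u_{m}$ (only the particular solution singled out in Proposition \ref{prop:solution of the system} does). Identity \eqref{magical identity} is more hands-on: after one structural observation it reduces to a direct, if somewhat lengthy, computation with the explicit form of $H$ and the equations of motion \eqref{big system of ODEs}, and this computation is where the only real difficulty lies.

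For \eqref{magical identity}, I would organize $H$ as $H = H_{1}+H_{2}+H_{3}+H_{4}+H_{5}+G$, where, in the order of \eqref{def of Hamiltonian}, $H_{1}=\sqrt{2}\,p_{0}\sum_{j}x_{j}p_{j,2}q_{j,1}$, $H_{2}=\sqrt{2}\,q_{0}\sum_{j}x_{j}p_{j,3}q_{j,2}$, $H_{3}=\sum_{j}x_{j}p_{j,1}q_{j,2}$, $H_{4}=\sum_{j}x_{j}p_{j,2}q_{j,3}$, $H_{5}=\sum_{j}rx_{j}^{2}p_{j,3}q_{j,1}$, and $G$ is the $\tfrac{1}{2r}(\cdots)$ term. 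By Hamilton's equations $q_{0}'=\partial H/\partial p_{0}$, $q_{j,k}'=\partial H/\partial p_{j,k}$, the left-hand side of \eqref{magical identity} equals $\sum p\,\partial_{p}H-H$, where $\sum p\,\partial_{p}H$ denotes the Euler operator in the momenta $p_{0}, p_{j,k}$ applied to $H$. Counting momentum degrees monomial by monomial, $H_{1}$ and $G$ are quadratic in $p_{0},p_{j,k}$ while $H_{2},H_{3},H_{4},H_{5}$ are linear, so $\sum p\,\partial_{p}H=2H_{1}+H_{2}+H_{3}+H_{4}+H_{5}+2G$; hence the left-hand side of \eqref{magical identity} equals $H_{1}+G = H-(H_{2}+H_{3}+H_{4}+H_{5})$. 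It therefore remains to prove that $\tfrac14\tfrac{d}{dr}\big(2p_{0}q_{0}+\sum_{j}[p_{j,2}q_{j,2}+2p_{j,3}q_{j,3}]-3rH\big)=-(H_{2}+H_{3}+H_{4}+H_{5})$.

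This last identity I would verify by differentiating term by term using \eqref{big system of ODEs}. The first two equations give $\tfrac{d}{dr}(2p_{0}q_{0})=2H_{1}-2H_{2}$; the equations for $q_{j,2}'$ and $p_{j,2}'$ give $\tfrac{d}{dr}\sum_{j}p_{j,2}q_{j,2}=H_{1}-H_{2}-H_{3}+H_{4}$, the $S_{22}$-terms cancelling; and the equations for $q_{j,3}'$ and $p_{j,3}'$ give $\tfrac{d}{dr}\sum_{j}2p_{j,3}q_{j,3}=2H_{2}-2H_{4}+2H_{5}$, where the $S_{33}$-terms cancel and, crucially, the remaining $\tfrac2r$-terms collapse to $\tfrac2r\big(S_{13}S_{31}-S_{31}S_{13}\big)=0$ only after summing over $j$. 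Finally, since $H$ is a Hamiltonian, $\tfrac{dH}{dr}=\tfrac{\partial H}{\partial r}$, and the explicit $r$-dependence in \eqref{def of Hamiltonian} gives $r\,\tfrac{\partial H}{\partial r}=H_{5}-G$, so $\tfrac{d}{dr}(rH)=H+H_{5}-G$ and $-3\tfrac{d}{dr}(rH)=-3H_{1}-3H_{2}-3H_{3}-3H_{4}-6H_{5}$. Adding the four contributions, the coefficient of $H_{1}$ is $2+1+0-3=0$, while each of $H_{2},H_{3},H_{4},H_{5}$ comes out with coefficient $-4$ and no $G$-term survives, so the sum equals $-4(H_{2}+H_{3}+H_{4}+H_{5})$ and \eqref{magical identity} follows. (A cleaner conceptual reason for this identity should come from differentiating the scaling symmetry of the RH problem for $\Phi$, but the direct route above seems shorter to write out.)

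For \eqref{differential identity for big sum}, let $i$ range over $0$ and the pairs $(j,k)$, and recall that $H$ carries no explicit $\gamma$-dependence. Differentiating $\sum_{i}p_{i}q_{i}'-H$ in $\gamma$ and applying the chain rule, $\partial_{\gamma}H=\sum_{i}(\partial_{p_{i}}H)\partial_{\gamma}p_{i}+\sum_{i}(\partial_{q_{i}}H)\partial_{\gamma}q_{i}$; substituting $\partial_{p_{i}}H=q_{i}'$ and $\partial_{q_{i}}H=-p_{i}'$ from \eqref{Hamiltonian relation}, the terms $(\partial_{\gamma}p_{i})q_{i}'$ produced by the product rule cancel against $-(\partial_{p_{i}}H)\partial_{\gamma}p_{i}$, leaving $\partial_{\gamma}\big(\sum_{i}p_{i}q_{i}'-H\big)=\sum_{i}p_{i}\,\partial_{\gamma}q_{i}'+\sum_{i}p_{i}'\,\partial_{\gamma}q_{i}$. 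Interchanging $\partial_{\gamma}$ with $\tfrac{d}{dr}$ (legitimate because the solution of Proposition \ref{prop:solution of the system} is jointly smooth in $(r,\gamma)$ through its construction from $\Phi$), this equals $\sum_{i}\tfrac{d}{dr}\big(p_{i}\partial_{\gamma}q_{i}\big)=\tfrac{d}{dr}\big(p_{0}\partial_{\gamma}q_{0}+\sum_{j,k}p_{j,k}\partial_{\gamma}q_{j,k}\big)$, which is precisely the right-hand side of \eqref{differential identity for big sum}.
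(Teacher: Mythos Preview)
Your proof is correct and follows the same route the paper indicates: the paper's own proof simply states that \eqref{magical identity} ``follows directly from \eqref{big system of ODEs} and \eqref{sum relation between qj and pj}'' and that \eqref{differential identity for big sum} follows from \eqref{Hamiltonian relation} together with the chain rule $\partial_{\gamma}H=\sum(\partial_{p}H)\partial_{\gamma}p+\sum(\partial_{q}H)\partial_{\gamma}q$. Your Euler-operator/degree-counting organization for \eqref{magical identity} is a clean way to carry out that direct computation and is entirely in the spirit of the paper's hint; just note that the Hamiltonian relations \eqref{Hamiltonian relation} you invoke hold only modulo the constraint \eqref{sum relation between qj and pj} (indeed $q_{j,k}'-\partial_{p_{j,k}}H=\tfrac{1}{r}(S_{11}+S_{22}+S_{33})q_{j,k}$), so the constraint is being used implicitly at the step $\sum p\,q'=\sum p\,\partial_p H$.
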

\begin{proof}
Formula \eqref{magical identity} follows directly from \eqref{big system of ODEs} and \eqref{sum relation between qj and pj}, and formula \eqref{differential identity for big sum} follows from \eqref{Hamiltonian relation} together with
\begin{align*}
\partial_{\gamma}H(r) = \frac{\partial H}{\partial p_{0}}(r)\partial_{\gamma}p_{0}(r) + \frac{\partial H}{\partial q_{0}}(r)\partial_{\gamma}q_{0}(r) + \sum_{j=1}^{m}\sum_{k=1}^{3} \bigg( \frac{\partial H}{\partial p_{j,k}}(r)\partial_{\gamma}p_{j,k}(r) + \frac{\partial H}{\partial q_{j,k}}(r)\partial_{\gamma}q_{j,k}(r) \bigg).
\end{align*}
\end{proof}
\vspace{-0.5cm}\section{Asymptotic analysis of $\Phi(z;r)$ as $r \to + \infty$}\label{section: steepest descent for large r}
In this section, we perform a Deift-Zhou steepest descent analysis to obtain the large $r$ asymptotics of $\Phi$. The case $m=1$ of this analysis was previously done in \cite{DXZ2020 thinning}.
\subsection{First transformation: $\Phi \to T$}
Define 
\begin{align}\label{def of T}
T(z) = \diag\big( r^{\frac{1}{3}},1,r^{-\frac{1}{3}} \big) \Phi(rz;r)e^{-\Theta(rz)}.
\end{align}
The jumps for $T$ on $(-x_{m},x_{m})$ are given by
\begin{align*}
& T_{+}(z) = T_{-}(z) \begin{pmatrix}
e^{\theta_{2}(rz)-\theta_{1}(rz)} & s_{j} & s_{j}e^{\theta_{2}(rz)-\theta_{3}(rz)} \\
0 & e^{\theta_{1}(rz)-\theta_{2}(rz)} & 0 \\
0 & 0 & 1
\end{pmatrix}, & & z \in (x_{j-1},x_{j}),  \\
& T_{+}(z) = T_{-}(z) \begin{pmatrix}
e^{\theta_{3,+}(rz)-\theta_{3,-}(rz)} & s_{j}e^{\theta_{2,-}(rz)-\theta_{2,+}(rz)} & s_{j} \\
0 & 1 & 0 \\
0 & 0 & e^{\theta_{3,-}(rz)-\theta_{3,+}(rz)}
\end{pmatrix}, & & z \in (-x_{j},-x_{j-1}),
\end{align*}
where $j=1,\ldots,m$, $x_{0}:=0$, and where we have used
\begin{align*}
\theta_{3,+}(z) = \theta_{2,-}(z), \quad \theta_{1,+}(z) = \theta_{3,-}(z), \quad \theta_{1,-}(z) = \theta_{2,+}(z), \qquad z<0.
\end{align*}
As $z \to \infty$, $\pm \im z > 0$, we have
\begin{align*}
T(z) = \bigg( I+\frac{T_{1}}{z} +\bigO(z^{-2}) \bigg) \diag\Big( z^{-\frac{1}{3}},1,z^{\frac{1}{3}} \Big)L_{\pm}
\end{align*}
where $T_{1} = \frac{1}{r} \diag\big( r^{\frac{1}{3}},1,r^{-\frac{1}{3}} \big) \Phi_{1} \diag\big( r^{-\frac{1}{3}},1,r^{\frac{1}{3}} \big)$.
\subsection{Second transformation: $T \to S$}

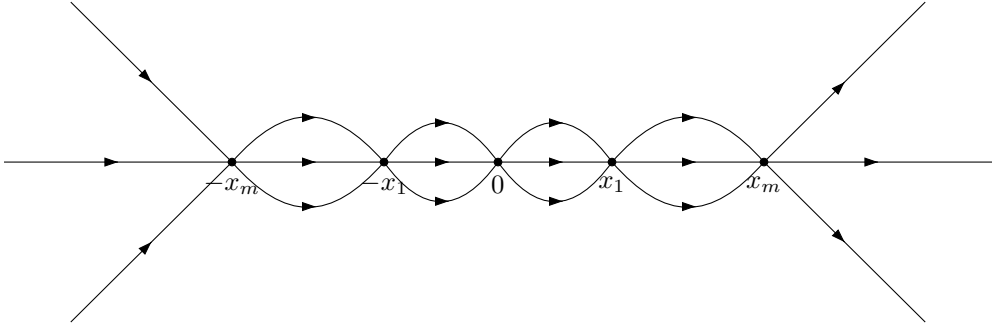
\begin{figure}
\centering
\begin{tikzpicture}
%\draw[fill] (0,0) circle (0.05);
\node at (3,0) {};
\draw (0,0) -- (13,0);
\draw (3,0) -- ($(3,0)+(135:3)$);
\draw (3,0) -- ($(3,0)+(-135:3)$);
\draw (10,0) -- ($(10,0)+(45:3)$);
\draw (10,0) -- ($(10,0)+(-45:3)$);

\draw[fill] (3,0) circle (0.05);
\draw[fill] (5,0) circle (0.05);
\draw[fill] (6.5,0) circle (0.05);
\draw[fill] (8,0) circle (0.05);
\draw[fill] (10,0) circle (0.05);

\node at (3,-0.3) {$-x_{m}$};
\node at (5,-0.3) {$-x_{1}$};
\node at (6.5,-0.3) {$0$};
\node at (8,-0.3) {$x_{1}$};
\node at (10,-0.3) {$x_{m}$};

\draw[black,arrows={-Triangle[length=0.18cm,width=0.12cm]}]
($(3,0)+(135:1.5)$) --  ++(-45:0.001);
\draw[black,arrows={-Triangle[length=0.18cm,width=0.12cm]}]
($(3,0)+(-135:1.5)$) --  ++(45:0.001);
\draw[black,arrows={-Triangle[length=0.18cm,width=0.12cm]}]
(1.5,0) --  ++(0:0.001);

\draw[black,arrows={-Triangle[length=0.18cm,width=0.12cm]}]
(5.85,0) --  ++(0:0.001);
\draw[black,arrows={-Triangle[length=0.18cm,width=0.12cm]}]
(4.1,0) --  ++(0:0.001);
\draw[black,arrows={-Triangle[length=0.18cm,width=0.12cm]}]
(7.35,0) --  ++(0:0.001);
\draw[black,arrows={-Triangle[length=0.18cm,width=0.12cm]}]
(9.1,0) --  ++(0:0.001);

\draw[black,arrows={-Triangle[length=0.18cm,width=0.12cm]}]
($(10,0)+(45:1.5)$) --  ++(45:0.001);
\draw[black,arrows={-Triangle[length=0.18cm,width=0.12cm]}]
($(10,0)+(-45:1.5)$) --  ++(-45:0.001);
\draw[black,arrows={-Triangle[length=0.18cm,width=0.12cm]}]
(11.5,0) --  ++(0:0.001);

\draw (3,0) .. controls (3.7,0.8) and (4.3,0.8) .. (5,0);
\draw (3,0) .. controls (3.7,-0.8) and (4.3,-0.8) .. (5,0);
\draw (5,0) .. controls (5.5,0.7) and (6,0.7) .. (6.5,0);
\draw (5,0) .. controls (5.5,-0.7) and (6,-0.7) .. (6.5,0);
\draw (6.5,0) .. controls (7,0.7) and (7.5,0.7) .. (8,0);
\draw (6.5,0) .. controls (7,-0.7) and (7.5,-0.7) .. (8,0);
\draw (8,0) .. controls (8.7,0.8) and (9.3,0.8) .. (10,0);
\draw (8,0) .. controls (8.7,-0.8) and (9.3,-0.8) .. (10,0);

\draw[black,arrows={-Triangle[length=0.18cm,width=0.12cm]}]
(4.1,0.59) --  ++(0:0.001);
\draw[black,arrows={-Triangle[length=0.18cm,width=0.12cm]}]
(4.1,-0.59) --  ++(0:0.001);
\draw[black,arrows={-Triangle[length=0.18cm,width=0.12cm]}]
(5.85,0.52) --  ++(0:0.001);
\draw[black,arrows={-Triangle[length=0.18cm,width=0.12cm]}]
(5.85,-0.52) --  ++(0:0.001);
\draw[black,arrows={-Triangle[length=0.18cm,width=0.12cm]}]
(7.35,0.52) --  ++(0:0.001);
\draw[black,arrows={-Triangle[length=0.18cm,width=0.12cm]}]
(7.35,-0.52) --  ++(0:0.001);
\draw[black,arrows={-Triangle[length=0.18cm,width=0.12cm]}]
(9.1,0.59) --  ++(0:0.001);
\draw[black,arrows={-Triangle[length=0.18cm,width=0.12cm]}]
(9.1,-0.59) --  ++(0:0.001);
\end{tikzpicture}
\caption{Jump contours for the RH problem for $S$ with $m=2$.}
\label{fig:contour for S s1 neq 0}
\end{figure}

For each $j=1,\ldots,m$, let $\gamma_{j,+}$ and $\gamma_{j,-}$ be open curves, lying in the upper and lower half planes respectively, starting at $x_{j-1}$ and ending at $x_{j}$. We also orient the open curves $\gamma_{-j,+}:=-\gamma_{j,-}$ and $\gamma_{-j,-}:=-\gamma_{j,+}$ from $-x_{j}$ to $-x_{j-1}$. Let
\begin{align*}
\gamma_{m+1,+}:=\Sigma_{1}^{(1)}, \quad \gamma_{m+1,-}:=\Sigma_{5}^{(1)}, \quad \gamma_{-m-1,+}:=\Sigma_{2}^{(1)}, \quad \gamma_{-m-1,-}:=\Sigma_{4}^{(1)},
\end{align*}
$s_{m+1}:=1$, $x_{m+1}:=+\infty$ and for $j=1,\ldots,m,m+1$, define
\begin{align*}
& J_{\gamma_{j,-}}(z) = \begin{pmatrix}
1 & 0 & 0 \\
\frac{e^{\theta_{1}(rz)-\theta_{2}(rz)}}{s_{j}} & 1 & -e^{\theta_{1}(rz)-\theta_{3}(rz)} \\
0 & 0 & 1
\end{pmatrix}, & & J_{\gamma_{j,+}}(z) = \begin{pmatrix}
1 & 0 & 0 \\
\frac{e^{\theta_{2}(rz)-\theta_{1}(rz)}}{s_{j}} & 1 & e^{\theta_{2}(rz)-\theta_{3}(rz)} \\
0 & 0 & 1
\end{pmatrix}, \\
& J_{\gamma_{-j,+}}(z) = \begin{pmatrix}
1 & 0 & 0 \\
0 & 1 & 0 \\
\frac{e^{\theta_{3}(rz)-\theta_{1}(rz)}}{s_{j}} & e^{\theta_{3}(rz)-\theta_{2}(rz)} & 1
\end{pmatrix}, & & J_{\gamma_{-j,-}}(z) = \begin{pmatrix}
1 & 0 & 0 \\
0 & 1 & 0 \\
\frac{e^{\theta_{3}(rz)-\theta_{2}(rz)}}{s_{j}} & -e^{\theta_{3}(rz)-\theta_{1}(rz)} & 1
\end{pmatrix}.
\end{align*}
The next transformation is defined by
\begin{align}\label{def of S}
S(z) = T(z)\begin{cases}
J_{\gamma_{j,-}}(z), & \im z < 0 \mbox{ and } z \mbox{ above } \gamma_{j,-}, \quad \hspace{0.2cm} j \in \{1,\ldots,m\}, \\
J_{\gamma_{j,+}}(z)^{-1}, & \im z > 0 \mbox{ and } z \mbox{ below } \gamma_{j,+}, \quad \hspace{0.2cm} j \in \{1,\ldots,m\}, \\
J_{\gamma_{-j,+}}(z)^{-1}, & \im z > 0 \mbox{ and } z \mbox{ below } \gamma_{-j,+}, \quad j \in \{1,\ldots,m\}, \\
J_{\gamma_{-j,-}}(z), & \im z < 0 \mbox{ and } z \mbox{ above } \gamma_{-j,-}, \quad j \in \{1,\ldots,m\}, \\
I, & \mbox{otherwise}.
\end{cases}
\end{align}
$S$ satisfies the following RH problem.
\subsubsection*{RH problem for $S$}
\begin{itemize}
\item[(a)] $S: \mathbb{C}\setminus \Sigma_{S} \to \mathbb{C}^{3\times 3}$ is analytic, where $\Sigma_{S}:=(-\infty,+\infty) \cup \bigcup_{j=1}^{m+1} \big( \gamma_{j,-} \cup \gamma_{j,+} \cup \gamma_{-j,+} \cup \gamma_{-j,-} \big)$.
\item[(b)] For $z \in \Sigma_{S}\setminus \cup_{j=0}^{m}\{-x_{j},x_{j}\}$, $S_{+}(z) = S_{-}(z)J_{S}(z)$, where
\begin{align*}
& J_{S}(z)=J_{\gamma_{j,-}}(z), & & z \in \gamma_{j,-}, & & J_{S}(z)=J_{\gamma_{j,+}}(z), & & z \in  \gamma_{j,+}, \\
& J_{S}(z)=J_{\gamma_{-j,+}}(z), & & z \in \gamma_{-j,+}, & & J_{S}(z)=J_{\gamma_{-j,-}}(z), & & z \in \gamma_{-j,-} \\
& J_{S}(z)= \begin{pmatrix}
0 & s_{j} & 0 \\
-s_{j}^{-1} & 0 & 0 \\
0 & 0 & 1
\end{pmatrix}, & & z \in (x_{j-1},x_{j}), & & J_{S}(z)= \begin{pmatrix}
0 & 0 & s_{j} \\
0 & 1 & 0 \\
-s_{j}^{-1} & 0 & 0 
\end{pmatrix}, & & z \in (-x_{j},-x_{j-1}),
\end{align*}
where $j=1,\ldots,m,m+1$.
\item[(c)] As $z \to \infty$, $\pm \im z > 0$, we have
\begin{align*}
S(z) = \bigg( I+\frac{T_{1}}{z} +\bigO(z^{-2}) \bigg) \diag\Big( z^{-\frac{1}{3}},1,z^{\frac{1}{3}} \Big)L_{\pm}.
\end{align*}
\item[(d)] As $z \to z_{\star}\in \cup_{j=1}^{m}\{-x_{j},x_{j}\}$, we have $S(z) = \bigO(\log (z-z_{\star}))$.

As $z \to 0$, $S(z) = \bigO(1)$. 
\item[(e)] $S$ satisfies the symmetry $S(z) = -\diag(1,-1,1)S(-z)\mathcal{B}$, where $\mathcal{B}$ is defined in \eqref{symmetry of Phi}.
\end{itemize}
\subsection{Global parametrix}
Using the definitions of $\theta_{1},\theta_{2},\theta_{3}$ given in \eqref{def of theta}, it is easily checked that $J_{S}(z) \to I$ as $r \to \infty$ for each $z \in \cup_{j=1}^{m+1} \big( \gamma_{j,-} \cup \gamma_{j,+} \cup \gamma_{-j,+} \cup \gamma_{-j,-} \big)$. The following RH problem, whose solution is denoted $N$ and called the global parametrix, has the same jump conditions on $(-\infty,+\infty)$ than the RH problem for $S$, and no other jumps. We will show in Subsection \ref{subsection:small norm} that $N$ is a good approximation to $S$ outside small neighborhoods of $\cup_{j=0}^{m}\{-x_{j},x_{j}\}$. The RH problem for $N$ is as follows.
\subsubsection*{RH problem for $N$}
\begin{itemize}
\item[(a)] $N:\mathbb{C}\setminus (-\infty,+\infty) \to \mathbb{C}^{3\times 3}$ is analytic.
\item[(b)] $N$ satisfies the following jump relations:
\begin{align*}
& N_{+}(z) = N_{-}(z) \begin{pmatrix}
0 & 0 & s_{j} \\
0 & 1 & 0 \\
-s_{j}^{-1} & 0 & 0 
\end{pmatrix}, & & z \in (-x_{j},-x_{j-1}), \quad j=1,\ldots,m,m+1 \\
& N_{+}(z) = N_{-}(z) \begin{pmatrix}
0 & s_{j} & 0 \\
-s_{j}^{-1} & 0 & 0 \\
0 & 0 & 1
\end{pmatrix}, & & z \in (x_{j-1},x_{j}), \hspace{0.9cm} j=1,\ldots,m,m+1.
\end{align*}
\item[(c)] As $z \to \infty$, $\pm \im z >0$, we have
\begin{align}\label{asymp of N at inf}
N(z) = \bigg( I + \frac{1}{z}N_{1} + \bigO(z^{-2}) \bigg) \diag (z^{-\frac{1}{3}}, 1, z^{\frac{1}{3}})L_{\pm},
\end{align}
for a certain matrix $N_{1}$.
\item[(d)] As $z \to z_{\star}\in \cup_{j=1}^{m}\{-x_{j},x_{j}\}$, $N(z) = \bigO(1)$.

As $z \to 0$, $N(z)=\bigO(1)\diag(z^{-\frac{1}{3}},1,z^{\frac{1}{3}})\bigO(1)$.
\item[(e)] $N$ satisfies the symmetry $N(z) = -\diag(1,-1,1)N(-z)\mathcal{B}$.
\end{itemize}
For $m=1$, the above RH problem was solved explicitly in \cite[Section 5.3]{DXZ2020 thinning}. Let us define
\begin{align}\label{betaj sj relation}
\beta_{j} := \frac{1}{2\pi i}u_{j} = \frac{1}{2\pi i}\log \frac{s_{j}}{s_{j+1}}, \qquad j=1,\ldots,m.
\end{align}
Inspired by \cite{DXZ2020 thinning}, we consider three functions $d_{1},d_{2},d_{3}$ defined by
\begin{align*}
& d_{1}(z) = \begin{cases}
\lambda(z^{\frac{1}{3}}), & \im z >0, \\
\lambda(\omega^{-1}z^{\frac{1}{3}}), & \im z <0,
\end{cases} & & d_{2}(z) = \begin{cases}
\lambda(\omega^{-1}z^{\frac{1}{3}}), & \im z >0, \\
\lambda(z^{\frac{1}{3}}), & \im z <0,
\end{cases} & & d_{3}(z) = \lambda(\omega z^{\frac{1}{3}}),
\end{align*}
where 
\begin{align}\label{def of lambda}
\lambda(z) = \prod_{j=1}^{m}\bigg( \frac{z^{2}-\omega x_{j}^{2/3}}{z^{2}-x_{j}^{2/3}} \bigg)^{\beta_{j}}, \qquad z \in \mathbb{C}\setminus \big( (-x_{m}^{\frac{1}{3}},x_{m}^{\frac{1}{3}})\cup \omega^{-1}(-x_{m}^{\frac{1}{3}},x_{m}^{\frac{1}{3}})\big).
\end{align}
The branch structure for $\lambda$ is such that $\lambda(z) = 1+\bigO(z^{-1})$ as $z \to \infty$, and
\begin{align}\label{jumps of lambda}
\lambda_{+}(z) = \lambda_{-}(z) \begin{cases}
s_{j}, & z \in (-x_{j}^{\frac{1}{3}},-x_{j-1}^{\frac{1}{3}})\cup e^{-\frac{2\pi i}{3}}(x_{j-1}^{\frac{1}{3}},x_{j}^{\frac{1}{3}}), \quad j \in \{1,\ldots,m\}, \\
s_{j}^{-1}, & z \in e^{\frac{\pi i}{3}}(x_{j-1}^{\frac{1}{3}},x_{j}^{\frac{1}{3}})\cup (x_{j-1}^{\frac{1}{3}},x_{j}^{\frac{1}{3}}), \hspace{1.25cm} j \in \{1,\ldots,m\},
\end{cases}
\end{align}
where the boundary values $\lambda_{+}$ and $\lambda_{-}$ are taken with respect to the orientation of the contour as stated in \eqref{def of lambda}. Using \eqref{def of lambda}--\eqref{jumps of lambda}, it can be verified that
\begin{align}
& N(z) := C_{N} \diag(z^{-\frac{1}{3}},1,z^{\frac{1}{3}}) L_{\pm} \diag(d_{1}(z),d_{2}(z),d_{3}(z)), \qquad \pm \im z >0, \label{def of N} 
\end{align}
is the unique solution to the RH problem for $N$, where 
\begin{align}
& C_{N} := \begin{pmatrix}
1 & 0 & 0 \\
0 & 1 & 0 \\
-i\sqrt{3} \sum_{j=1}^{m} \beta_{j}x_{j}^{2/3} & 0 & 1
\end{pmatrix}. \label{def of CN}
\end{align}
It the next subsections we compute more detailed asymptotic expansions than those stated in conditions (c) and (d) of the RH problem for $N$.
\subsubsection{Asymptotics of $N(z)$ as $z \to \infty$}
As $z \to \infty$, $\pm \im z >0$, we have
\begin{align}\label{asymp of N at inf}
N(z) = \bigg( I + \frac{1}{z}N_{1} + \bigO(z^{-2}) \bigg) \diag (z^{-\frac{1}{3}}, 1, z^{\frac{1}{3}})L_{\pm}
\end{align}
where $N_{1}$ is of the form
\begin{align}\label{def of N1}
N_{1}=\begin{pmatrix}
0 & i \sqrt{3} \sum_{j=1}^{m} \beta_{j}x_{j}^{2/3} & 0 \\
\star & 0 & i \sqrt{3} \sum_{j=1}^{m} \beta_{j}x_{j}^{2/3} \\
0 & \star & 0
\end{pmatrix}.
\end{align}
The entries $(N_{1})_{21}$ and $(N_{1})_{32}$ can also be computed explicitly, but their expressions are longer and not important for us.
\subsubsection{Asymptotics of $N(z)$ as $z \to x_{j}$, $j=1,\ldots,m$}\label{subsubsection: asymptotics of global param near xj}
As $z \to x_{j}$, $\im z > 0$, $j=1,\ldots,m$,
\begin{align*}
& d_{1}(z) = d_{1,x_{j}}^{(0)} (z-x_{j})^{-\beta_{j}}\big( 1 + d_{1,x_{j}}^{(1)}(z-x_{j}) + \bigO((z-x_{j})^{2}) \big),  \\
& d_{1,x_{j}}^{(0)} = \bigg( \frac{3\sqrt{3}x_{j}}{2} \bigg)^{\beta_{j}}e^{-\frac{\pi i \beta_{j}}{6}} \prod_{\substack{k=1\\k \neq j}}^{m} \frac{(x_{j}^{2/3}-\omega x_{k}^{2/3})^{\beta_{k}}}{(x_{j}^{2/3}-x_{k}^{2/3})_{+}^{\beta_{k}}},  \\
& d_{1,x_{j}}^{(1)} = \frac{(\omega-5)\beta_{j}}{6(\omega-1)x_{j}} + \sum_{\substack{k=1 \\ k \neq j}}^{m} \frac{2(\omega-1)x_{k}^{2/3}\beta_{k}}{3x_{j}^{1/3}(x_{j}^{2/3}-x_{k}^{2/3})(x_{j}^{2/3}-\omega x_{k}^{2/3})}, \\
& d_{2}(z) = d_{2,x_{j}}^{(0)} (z-x_{j})^{\beta_{j}}\big( 1 + d_{2,x_{j}}^{(1)}(z-x_{j}) + \bigO((z-x_{j})^{2}) \big), \\
& d_{2,x_{j}}^{(0)} = \bigg( \frac{2}{3\sqrt{3}x_{j}} \bigg)^{\beta_{j}}e^{-\frac{\pi i \beta_{j}}{6}}\prod_{\substack{k=1\\k \neq j}}^{m}\frac{(x_{j}^{2/3}-x_{k}^{2/3})_{+}^{\beta_{k}}}{(x_{j}^{2/3}-\omega^{2} x_{k}^{2/3})^{\beta_{k}}}, \\
& d_{2,x_{j}}^{(1)} = \frac{(1-5\omega)\beta_{j}}{6(\omega-1)x_{j}} + \sum_{\substack{k=1 \\ k \neq j}}^{m} \frac{2(1-\omega^{2})x_{k}^{2/3}\beta_{k}}{3x_{j}^{1/3}(x_{j}^{2/3}-x_{k}^{2/3})(x_{j}^{2/3}-\omega^{2} x_{k}^{2/3})}, \\
& d_{3}(z) = d_{3,x_{j}}^{(0)} \big( 1 + d_{3,x_{j}}^{(1)}(z-x_{j}) + \bigO((z-x_{j})^{2}) \big), \\
& d_{3,x_{j}}^{(0)} = e^{\frac{\pi i \beta_{j}}{3}} \prod_{\substack{k=1\\k \neq j}}^{m}\frac{(x_{j}^{2/3}-\omega^{2}x_{k}^{2/3})^{\beta_{k}}}{(x_{j}^{2/3}-\omega x_{k}^{2/3})^{\beta_{k}}}, \\
& d_{3,x_{j}}^{(1)} = \frac{2(\omega + 1)\beta_{j}}{3(\omega-1)x_{j}} + \sum_{\substack{k=1 \\ k \neq j}}^{m} \frac{2(\omega^{2}-\omega)x_{k}^{2/3}\beta_{k}}{3x_{j}^{1/3}(x_{j}^{2/3}-\omega x_{k}^{2/3})(x_{j}^{2/3}-\omega^{2} x_{k}^{2/3})}.
\end{align*}
In the above asymptotic expansions, all branches are the principal ones: for example, for the product appearing in $d_{1,x_{j}}^{(0)}$, we have
\begin{align*}
& \Bigg|\prod_{\substack{k=1\\k \neq j}}^{m} \frac{(x_{j}^{2/3}-\omega x_{k}^{2/3})^{\beta_{k}}}{(x_{j}^{2/3}-x_{k}^{2/3})_{+}^{\beta_{k}}} \Bigg| =  \exp \Bigg( -\sum_{\substack{k=1\\ k \neq j}}^{m} i\beta_{k} \arctan \frac{\sqrt{3}x_{k}^{2/3}}{x_{k}^{2/3}+2x_{j}^{2/3}} - \sum_{k=j+1}^{m} \pi i \beta_{k} \Bigg), \\
& \arg \prod_{\substack{k=1\\k \neq j}}^{m} \frac{(x_{j}^{2/3}-\omega x_{k}^{2/3})^{\beta_{k}}}{(x_{j}^{2/3}-x_{k}^{2/3})_{+}^{\beta_{k}}} = -\sum_{\substack{k=1\\ k \neq j}}^{m} i \beta_{k} \log \frac{|x_{j}^{2/3}-\omega x_{k}^{2/3}|}{|x_{j}^{2/3}- x_{k}^{2/3}|} \mod 2\pi.
\end{align*}
Hence, as $z \to x_{j}$, $\im z > 0$, $j=1,\ldots,m$, 
\begin{align}\label{asymptotics of N near xj}
N(z) = (N_{x_{j}}^{(0)} + (z-x_{j})N_{x_{j}}^{(1)} + \bigO\big( (z-x_{j})^{2} \big))(z-x_{j})^{-\beta_{j}\sigma_{3,1}},
\end{align}
where $\sigma_{3,1} = \diag(1,-1,0)$ and
\begin{align}
& N_{x_{j}}^{(0)} = C_{N} \diag(x_{j}^{-1/3},1,x_{j}^{1/3}) L_{+} \diag(d_{1,x_{j}}^{(0)},d_{2,x_{j}}^{(0)},d_{3,x_{j}}^{(0)}), \label{def of Nxjp0p} \\
& N_{x_{j}}^{(1)} = C_{N} \Big( \diag(x_{j}^{-1/3},1,x_{j}^{1/3}) L_{+} \diag(d_{1,x_{j}}^{(0)}d_{1,x_{j}}^{(1)},d_{2,x_{j}}^{(0)}d_{2,x_{j}}^{(1)},d_{3,x_{j}}^{(0)}d_{3,x_{j}}^{(1)}) \nonumber \\
& \hspace{1.7cm} + \diag(-\tfrac{1}{3}x_{j}^{-4/3},0,\tfrac{1}{3}x_{j}^{-2/3}) L_{+} \diag(d_{1,x_{j}}^{(0)},d_{2,x_{j}}^{(0)},d_{3,x_{j}}^{(0)}) \Big). \nonumber
\end{align}

\subsubsection{Asymptotics of $N(z)$ as $z \to 0$}
As $z \to 0$, $\im z>0$, 
\begin{align*}
& d_{\ell}(z) = d_{\ell,0}^{(0)}(1+d_{\ell,0}^{(1)}z^{\frac{2}{3}} + \bigO(z^{\frac{4}{3}})), \qquad \ell = 1,2,3, \\
& d_{1,0}^{(0)} = e^{-\pi i \sum_{k=1}^{m}\beta_{k}}(-\omega)^{\sum_{k=1}^{m}\beta_{k}} = s_{1}^{-\frac{2}{3}}, \qquad d_{2,0}^{(0)} = d_{3,0}^{(0)} = \omega^{\sum_{k=1}^{m}\beta_{k}} = s_{1}^{\frac{1}{3}}, \\
& d_{1,0}^{(1)} = (1-\omega^{2})\sum_{k=1}^{m} \beta_{k}x_{k}^{-\frac{2}{3}}, \quad d_{2,0}^{(1)} = (\omega-1)\sum_{k=1}^{m} \beta_{k}x_{k}^{-\frac{2}{3}}, \quad d_{1,0}^{(1)} = (\omega^{2}-\omega)\sum_{k=1}^{m} \beta_{k}x_{k}^{-\frac{2}{3}},
\end{align*}
where the branches are the principal ones. As $z \to 0$, $\im z>0$,
\begin{align}\label{asymptotics of N near 0}
N(z) = C_{N} \diag(z^{-\frac{1}{3}},1,z^{\frac{1}{3}}) L_{+}  \diag(d_{1,0}^{(0)},d_{2,0}^{(0)},d_{3,0}^{(0)}) \Big( I + z^{\frac{2}{3}}\diag(d_{1,0}^{(1)},d_{2,0}^{(1)},d_{3,0}^{(1)}) + \bigO(z^{\frac{4}{3}}) \Big).
\end{align}

\paragraph{Local parametrices.} For each $p \in \{-x_{m},\ldots,-x_{1},0,x_{1},\ldots,x_{m}\}$, we let $\mathcal{D}_{p}$ be a small open disk centered at $p$. The local parametrix $P^{(p)}$ is defined inside $\mathcal{D}_{p}$, has the same jumps as $S$ in $\mathcal{D}_{p}$, and satisfies $S(z)P^{(p)}(z)^{-1}=\bigO(1)$ as $z \to p$. Furthermore, we require $P^{(p)}$ to satisfy the following matching condition with $P^{(\infty)}$ on $\partial \mathcal{D}_{p}$:
\begin{align}\label{matching condition general}
P^{(p)}(z) = (I+o(1))P^{(\infty)}(z), \qquad \mbox{as } r \to + \infty,
\end{align}
uniformly for $z \in \partial \mathcal{D}_{p}$. 
%The local parametrices $P^{(x_{j})}$ will be build in terms of confluent hypergeometric functions, $P^{(-x_{j})}$ will
\subsection{Local parametrix near $x_{j}$, $j=1,\ldots,m$}\label{subsection: local param near xj}
Since the (1,2) and (2,1) entries of $J_{S}(z)$ have each a discontinuity at $z=x_{j}$, we can follow \cite{ItsKrasovsky} and build $P^{(x_{j})}$ using the model RH problem $\Phi_{\mathrm{HG}}$ which is presented in Appendix \ref{subsection: model RHP with HG functions}. We also refer to \cite[Section 5.5]{DXZ2020 thinning} for more details about this construction. The local parametrix $P^{(x_{j})}$ is of the form
\begin{align}\label{def of Pxj}
& P^{(x_{j})}(z) = E_{x_{j}}(z) \begin{pmatrix}
\Phi_{\mathrm{HG},11}(r^{\frac{4}{3}}f_{x_{j}}(z);\beta_{j}) & \Phi_{\mathrm{HG}, 12}(r^{\frac{4}{3}}f_{x_{j}}(z);\beta_{j}) & 0 \\
\Phi_{\mathrm{HG},21}(r^{\frac{4}{3}}f_{x_{j}}(z);\beta_{j}) & \Phi_{\mathrm{HG}, 22}(r^{\frac{4}{3}}f_{x_{j}}(z);\beta_{j}) & 0 \\
0 & 0 & 1
\end{pmatrix} \\
& \hspace{1.8cm} \times (s_{j}s_{j+1})^{-\frac{\sigma_{3,1}}{4}}e^{\pm \frac{1}{2}(\theta_{2}(rz)-\theta_{1}(rz))\sigma_{3,1}}A_{x_{j}}(z), \nonumber \\
& A_{x_{j}}(z)= \begin{cases}
\begin{pmatrix}
1 & 0 & 0 \\
0 & 1 & e^{\theta_{2}(rz)-\theta_{3}(rz)} \\
0 & 0 & 1
\end{pmatrix}, & z \in \{z:\im z > 0\}\cap \mathcal{D}_{x_{j}} \setminus (\Omega_{j,+}\cup \Omega_{j+1,+}), \\
\begin{pmatrix}
1 & 0 & 0 \\
0 & 1 & e^{\theta_{1}(rz)-\theta_{3}(rz)} \\
0 & 0 & 1
\end{pmatrix}, & z \in \{z:\im z < 0\}\cap \mathcal{D}_{x_{j}} \setminus (\Omega_{j,-}\cup \Omega_{j+1,-}), \\
I, & \mbox{otherwise},
\end{cases} \nonumber
\end{align}
where $\sigma_{3,1}=\diag(1,-1,0)$, $\pm$ stands for $\pm \im z > 0$, $E_{x_{j}}$ is analytic in $\mathcal{D}_{x_{j}}$ and given by
\begin{align*}
E_{x_{j}}(z) = N(z) (s_{j}s_{j+1})^{\frac{\sigma_{3,1}}{4}} \left\{ \begin{array}{l l}
\sqrt{\frac{s_{j+1}}{s_{j}}}^{\sigma_{3,1}}, & \im z > 0 \\
\begin{pmatrix}
0 & 1 & 0 \\
-1 & 0 & 0 \\
0 & 0 & 1
\end{pmatrix}, & \im z <0
\end{array} \right\} 
e^{-\frac{1}{2}(\theta_{2}(rx_{j})-\theta_{1}(rx_{j}))\sigma_{3,1}} (r^{\frac{4}{3}}f_{x_{j}}(z))^{\beta_{j}\sigma_{3,1}},
\end{align*}
and $f_{x_{j}}$ is given by
\begin{align*}
f_{x_{j}}(z) & = r^{-\frac{4}{3}} \big[ (\theta_{2}(rz)-\theta_{1}(rz)) - (\theta_{2}(rx_{j})-\theta_{1}(rx_{j})) \big] = \frac{i \sqrt{3}}{4} \Big( 3(z^{\frac{4}{3}}-x_{j}^{\frac{4}{3}}) - \frac{2\rho}{r^{\frac{2}{3}}}(z^{\frac{2}{3}} - x_{j}^{\frac{2}{3}}) \Big).
\end{align*}
It is easily checked that $A_{x_{j}}(z)$ is exponentially small as $r \to + \infty$ uniformly for $z \in \mathcal{D}_{x_{j}}$, and that
\begin{align}\label{def of coeff expansion fxj}
f_{x_{j}}(x_{j}) = 0, \quad f_{x_{j}}'(x_{j}) = i \bigg( \sqrt{3}x_{j}^{1/3} - \frac{\rho}{\sqrt{3} x_{j}^{1/3} r^{2/3}} \bigg), \quad f_{x_{j}}''(x_{j}) = i \bigg( \frac{1}{\sqrt{3}x_{j}^{2/3}} + \frac{\rho}{3\sqrt{3} x_{j}^{4/3}r^{2/3}} \bigg).
\end{align}
Using \eqref{asymptotics of N near xj}, we obtain 
\begin{align}
& E_{x_{j}}(x_{j}) = N_{x_{j}}^{(0)} \sqrt{s_{j+1}}^{\sigma_{3,1}} e^{-\frac{1}{2}(\theta_{2}(rx_{j})-\theta_{1}(rx_{j}))\sigma_{3,1}} (r^{\frac{4}{3}}|f'(x_{j})|)^{\beta_{j}\sigma_{3,1}}, \label{Exjxj} \\
& E_{x_{j}}(x_{j})^{-1}E_{x_{j}}'(x_{j}) = \begin{pmatrix}
\frac{f_{x_{j}}''(x_{j})}{2f_{x_{j}}'(x_{j})}\beta_{j} + d_{1,x_{j}}^{(1)} & \frac{i\mathfrak{c}_{j}^{-2}}{3\sqrt{3}x_{j}}\frac{d_{2,x_{j}}^{(0)}}{d_{1,x_{j}}^{(0)}} & -\frac{i\mathfrak{c}_{j}^{-1}}{3\sqrt{3}x_{j}}\frac{d_{3,x_{j}}^{(0)}}{d_{1,x_{j}}^{(0)}} \\
-\frac{i\mathfrak{c}_{j}^{2}}{3\sqrt{3}x_{j}}\frac{d_{1,x_{j}}^{(0)}}{d_{2,x_{j}}^{(0)}} & -\frac{f_{x_{j}}''(x_{j})}{2f_{x_{j}}'(x_{j})}\beta_{j} + d_{2,x_{j}}^{(1)} & -\frac{i\mathfrak{c}_{j}}{3\sqrt{3}x_{j}}\frac{d_{3,x_{j}}^{(0)}}{d_{2,x_{j}}^{(0)}} \\
\frac{i\mathfrak{c}_{j}}{3\sqrt{3}x_{j}}\frac{d_{1,x_{j}}^{(0)}}{d_{3,x_{j}}^{(0)}} & \frac{i\mathfrak{c}_{j}^{-1}}{3\sqrt{3}x_{j}}\frac{d_{2,x_{j}}^{(0)}}{d_{3,x_{j}}^{(0)}} & d_{3,x_{j}}^{(1)}
\end{pmatrix}, \label{Exjxj der} \\
& \mathfrak{c}_{j} = \sqrt{s_{j+1}}e^{-\frac{1}{2}(\theta_{2}(rx_{j})-\theta_{1}(rx_{j}))}\big( r^{\frac{4}{3}}|f_{x_{j}}'(x_{j})| \big)^{\beta_{j}}. \label{def of cfrak j}
\end{align}
Using \eqref{Asymptotics HG}, we obtain
\begin{align}\label{matching of Pxj}
P^{(x_{j})}(z)N(z)^{-1} = I + \frac{1}{r^{\frac{4}{3}}f_{x_{j}}(z)} E_{x_{j}}(z)\begin{pmatrix}
\Phi_{\mathrm{HG},1}(\beta_{j})_{11} & \Phi_{\mathrm{HG},1}(\beta_{j})_{12}  & 0 \\
\Phi_{\mathrm{HG},1}(\beta_{j})_{21}  & \Phi_{\mathrm{HG},1}(\beta_{j})_{22}  & 0 \\
0 & 0 & 1
\end{pmatrix}E_{x_{j}}(z)^{-1} + \bigO(r^{-\frac{8}{3}}),
\end{align}
as $r \to + \infty$ uniformly for $z \in \partial \mathcal{D}_{x_{j}}$.

\subsection{Local parametrix near $-x_{j}$, $j=1,\ldots,m$}\label{subsection: local param near -xj}
$P^{(-x_{j})}$ can be constructed in terms of $\Phi_{\mathrm{HG}}$ in a similar way as $P^{(x_{j})}$.    Alternatively, we can use the symmetry stated in condition (e) of the RH problem for $S$. This observation saves us some effort and allows us to see immediately that
\begin{align}\label{symmetry of local param}
P^{(-x_{j})}(z) = -\diag(1,-1,1)P^{(x_{j})}(z)
\mathcal{B}, \qquad z \in \mathcal{D}_{-x_{j}}.
\end{align}

\subsection{Local parametrix near $0$}\label{subsection: local param near 0}
For the local parametrix $P^{(0)}$, we need to use the model RH problem for $\Psi$ from \cite{BleherKuijlaarsIII} (and which is recalled in Subsection \ref{subsection: model RH problem for Psi} for the convenience of the reader). Define
\begin{align}\label{def of Pp0p}
P^{(0)}(z) = E_{0}(z)\Psi(rz) e^{-\Theta(rz)}\diag(s_{1}^{-\frac{2}{3}},s_{1}^{\frac{1}{3}},s_{1}^{\frac{1}{3}}),
\end{align}
where $E_{0}$ is analytic inside $\mathcal{D}_{0}$ and given by
\begin{align*}
E_{0}(z) = -\sqrt{\frac{3}{2\pi}}e^{-\frac{\rho^{2}}{6}}i N(z) \diag(s_{1}^{\frac{2}{3}},s_{1}^{-\frac{1}{3}},s_{1}^{-\frac{1}{3}}) L_{\pm}^{-1}\diag((rz)^{\frac{1}{3}},1,(rz)^{-\frac{1}{3}})\Psi_{0}^{-1}, \quad \pm \im z > 0,
\end{align*}
and 
\begin{align*}
\Psi_{0} = \begin{pmatrix}
1 & 0 & 0 \\
0 & 1 & 0 \\
\kappa_{3}+\frac{2\rho}{3} & 0 & 1
\end{pmatrix}, \qquad \kappa_{3} = \frac{\rho^{3}}{54}-\frac{\rho}{6}.
\end{align*}
In a similar way as in \cite[Proposition 5.13]{DXZ2020 thinning}, we verify that $P^{(0)}$ has the same jumps as $S$ inside $\mathcal{D}_{0}$. It is also clear from condition (d) of the RH problem for $\Psi$ that $P^{(0)}(z)$ remains bounded as $z \to 0$. Finally, using \eqref{def of Pp0p} and \eqref{asymp of Psi at infty}, we obtain
\begin{align}\label{matching of Pp0p}
P^{(0)}(z)N(z)^{-1} = I + \frac{1}{r^{\frac{2}{3}}z} \widehat{E}_{0}(z)\widehat{\Psi}_{1}\widehat{E}_{0}(z)^{-1} + \bigO(r^{-\frac{4}{3}}), \qquad \widehat{\Psi}_{1} = \begin{pmatrix}
0 & \kappa_{3} & 0 \\
0 & 0 & \kappa_{3} + \frac{\rho}{3} \\
0 & 0 & 0
\end{pmatrix},
\end{align}
as $r \to + \infty$ uniformly for $z \in \partial \mathcal{D}_{0}$, where 
\begin{align*}
\widehat{E}_{0}(z) := N(z) \diag(s_{1}^{\frac{2}{3}},s_{1}^{-\frac{1}{3}},s_{1}^{-\frac{1}{3}}) L_{\pm}^{-1}\diag(z^{\frac{1}{3}},1,z^{-\frac{1}{3}}).
\end{align*}
For future reference, using \eqref{asymptotics of N near 0} we note that $\widehat{E}_{0}(0) = C_{N}$.

\subsection{Final transformation}\label{subsection:small norm}
Define
\begin{align}\label{def of R}
R(z) = \begin{cases}
S(z)N(z)^{-1}, & z \in \mathbb{C}\setminus \big( \cup_{j=1}^{m}(\mathcal{D}_{x_{j}}\cup \mathcal{D}_{-x_{j}}) \cup \mathcal{D}_{0}  \big), \\
S(z)P^{(p)}(z)^{-1}, & z \in \mathcal{D}_{p}, \; p \in \{-x_{m},\ldots,-x_{1},0,x_{1},\ldots,x_{m}\}.
\end{cases}
\end{align}
Using the analysis of Subsections \ref{subsection: local param near xj}--\ref{subsection: local param near 0}, we conclude that $R$ is analytic in $\cup_{j=1}^{m}(\mathcal{D}_{x_{j}}\cup \mathcal{D}_{-x_{j}}) \cup \mathcal{D}_{0}$, and therefore $R$ is analytic in $\mathbb{C}\setminus \Sigma_{R}$, where 
\begin{align*}
\Sigma_{R}:= \cup_{j=1}^{m}(\partial \mathcal{D}_{x_{j}}\cup \partial \mathcal{D}_{-x_{j}})\cup \partial \mathcal{D}_{0} \cup \Sigma_{S} \setminus \big( \cup_{j=1}^{m}(\mathcal{D}_{x_{j}}\cup \mathcal{D}_{-x_{j}}) \cup \mathcal{D}_{0}\cup \mathbb{R} \big).
\end{align*}
For convenience, we orient the boundaries of the $2m+1$ disks in the \textit{clockwise} direction, and for $z \in \Sigma_{R}$, we define $J_{R}(z):=R_{-}^{-1}(z)R_{+}(z)$. Using the definitions \eqref{def of theta} of $\theta_{1},\theta_{2},\theta_{3}$, condition (b) of the RH problem for $S$ and \eqref{def of R}, we verify that 
\begin{align*}
J_{R}(z) = I + \bigO(e^{-cr}), \qquad \mbox{as } r \to + \infty \mbox{ uniformly for } z \in \Sigma_{R}\setminus \big(  \cup_{j=1}^{m}(\mathcal{D}_{x_{j}}\cup \mathcal{D}_{-x_{j}}) \cup \mathcal{D}_{0} \big)
\end{align*}
for a certain $c>0$. Also, by \eqref{matching of Pxj}, \eqref{symmetry of local param} and \eqref{matching of Pp0p}, we have 
\begin{align*}
& J_{R}(z) = I + \bigO(r^{-\frac{4}{3}}), & & \mbox{as } r \to + \infty, \mbox{ uniformly for } z \in \cup_{j=1}^{m}(\partial \mathcal{D}_{x_{j}}\cup \partial \mathcal{D}_{-x_{j}}), \\
& J_{R}(z) = I + \tfrac{J^{(1)}(z)}{r^{2/3}} + \bigO(r^{-\frac{4}{3}}), & & \mbox{as } r \to + \infty, \mbox{ uniformly for } z \in \partial \mathcal{D}_{0},
\end{align*}
where $J^{(1)}(z) = \frac{1}{z} \widehat{E}_{0}(z)\widehat{\Psi}_{1}\widehat{E}_{0}(z)^{-1}$. Hence, $R$ satisfies a small norm RH problem \cite{DeiftZhou}, and we have
\begin{align}\label{asymp of R as r to infty}
R(z) = I + R^{(1)}(z)r^{-\frac{2}{3}} + \bigO(r^{-\frac{4}{3}}), \qquad \mbox{as } r \to + \infty,
\end{align}
uniformly for $z \in \mathbb{C}\setminus \Sigma_{R}$, with
\begin{align}\label{Rp1p def}
R^{(1)}(z) = \frac{1}{2\pi i}\oint_{\partial \mathcal{D}_{0}} \frac{J^{(1)}(x)dx}{x-z} = \begin{cases}
\frac{C_{N}\widehat{\Psi}_{1}C_{N}^{-1}}{z}, & z \notin \mathcal{D}_{0}, \\
\frac{C_{N}\widehat{\Psi}_{1}C_{N}^{-1}}{z}-J^{(1)}(z), & z \in \mathcal{D}_{0},
\end{cases}
\end{align}
and where we recall that $\partial \mathcal{D}_{0}$ is oriented in the clockwise direction. (the method of \cite{DeiftZhou} also implies that $R$ exists for all sufficiently large $r$; note however that here we already know from \eqref{def of Y} and from $\det (1-\widetilde{\mathcal{K}}^{\mathrm{Pe}}_{\rho})>0$ that $Y$ exists for \textit{all} $r>0$, which implies by the transformations $Y\to \Phi \to T \to S \to R$ that $R$ also exists for \textit{all} $r>0$). Finally, the same analysis as in \cite[Section 3.5]{ChLen Bessel} shows that for any $k_{1},\ldots,k_{m} \in \mathbb{N}_{\geq 0}$ with $k_{1}+\ldots+k_{m}\geq 1$, we have
\begin{align}\label{der of asymp of R as r to infty}
\partial_{u_{1}}^{k_{1}}\ldots \partial_{u_{m}}^{k_{m}}R(z) = \partial_{u_{1}}^{k_{1}}\ldots \partial_{u_{m}}^{k_{m}}R^{(1)}(z)r^{-\frac{2}{3}} + \bigO((\log r)^{k_{1}+\ldots+k_{m}}r^{-\frac{4}{3}}), \qquad \mbox{as } r \to + \infty,
\end{align}
uniformly for $z \in \mathbb{C}\setminus \Sigma_{R}$.

\section{Asymptotic analysis of $\Phi$ as $r \to 0$}\label{section: steepest descent for small r}
The analysis of $\Phi$ as $r \to 0$ is much simpler than the large $r$ analysis of Section \ref{section: steepest descent for large r}. Here we generalize \cite[Section 6]{DXZ2020 thinning} to an arbitrary $m$. Let $\delta>0$ be fixed. Define
\begin{align}\label{def of N tilde}
\widetilde{N}(z) := -\sqrt{\tfrac{3}{2\pi}}e^{-\frac{\rho^{2}}{6}}i \Psi_{0}^{-1} \Psi(z) \times \begin{cases}
J_{1}, & \mbox{for } \arg z < \frac{\pi}{4} \mbox{ and } \arg (z-rx_{m})>\frac{\pi}{4}, \\
J_{2}, & \mbox{for } \arg z > \frac{3\pi}{4} \mbox{ and } \arg (z+rx_{m})<\frac{3\pi}{4}, \\
J_{4}^{-1}, & \mbox{for } \arg z < -\frac{3\pi}{4} \mbox{ and } \arg (z+rx_{m})>-\frac{3\pi}{4}, \\
J_{5}^{-1}, & \mbox{for } \arg z > -\frac{\pi}{4} \mbox{ and } \arg (z-rx_{m})<-\frac{\pi}{4}, 
\end{cases}
\end{align}
and for $|z|<\delta$, define
\begin{align}
& \widetilde{P}^{(0)}(z) := -\sqrt{\tfrac{3}{2\pi}}e^{-\frac{\rho^{2}}{6}}i \Psi_{0}^{-1} \widetilde{\Psi}(z) \nonumber \\
& \times \left( I + \sum_{j=1}^{m} \frac{s_{j}-s_{j+1}}{2\pi i} \Big( \log(z-rx_{j})-\log(z+rx_{j}) \Big) \begin{pmatrix}
0 & 1 & 1 \\
0 & 0 & 0 \\
0 & 0 & 0
\end{pmatrix} \right) \begin{cases}
J_{1}^{-1}, & z \in \mathrm{I}, \\
I, & z \in \mathrm{II}, \\
J_{2}^{-1}, & z \in \mathrm{III}, \\
J_{2}^{-1}J_{3}^{-1}, & z \in \mathrm{IV}, \\
J_{1}^{-1}J_{0}^{-1}J_{5}^{-1}, & z \in \mathrm{V}, \\
J_{1}^{-1}J_{0}^{-1}, & z \in \mathrm{VI},
\end{cases} \label{def of Ptp0p}
\end{align}
where the principal branches are chosen for the log, and we recall that $s_{m+1}:=1$, the regions $\mathrm{I}$, $\mathrm{II}$, $\mathrm{III}$, $\mathrm{IV}$, $\mathrm{V}$ and $\mathrm{VI}$ are shown in Figure \ref{fig:contour for Phi Sine}, $\widetilde{\Psi}$ is defined in \eqref{def of Psi tilde}, and the matrices $J_{j}$, $j=0,\ldots,5$ are defined in \eqref{def of Jj}. Let
\begin{align}\label{def of Rtilde}
\widetilde{R}(z) := \begin{cases}
\Phi(z)\widetilde{N}(z)^{-1}, & |z| > \delta, \\
\Phi(z)\widetilde{P}^{(0)}(z)^{-1}, & |z|<\delta.
\end{cases}
\end{align}
The definitions \eqref{def of N tilde} and \eqref{def of Ptp0p} also ensure that $\widetilde{R}$ has no jumps on $\cup_{j=0}^{5}\Sigma_{j}^{(r)}$. Since 
\begin{align}\label{lol15}
\widetilde{P}_{-}^{(0)}(z)^{-1}\widetilde{P}_{+}^{(0)}(z) = J_{5}J_{0}J_{1} \begin{pmatrix}
1 & s_{j}-1 & s_{j}-1 \\
0 & 1 & 0 \\
0 & 0 & 1
\end{pmatrix} = \begin{pmatrix}
1 & s_{j} & s_{j} \\
0 & 1 & 0 \\
0 & 0 & 1
\end{pmatrix} = \Phi_{-}(z)^{-1}\Phi_{+}(z)
\end{align}
holds for all $z \in (-rx_{j},-rx_{j-1})\cup (rx_{j-1},rx_{j})$, it follows that $\widetilde{R}(z)$ is also analytic in $(-rx_{m},rx_{m})\setminus \cup_{j=1}^{m-1}\{-rx_{j},rx_{j}\}$. Furthermore, from a direct inspection of \eqref{def of Ptp0p} and \eqref{def of Rtilde}, we see that the singularities of $\widetilde{R}(z)$ at $z=-rx_{m},\ldots,-rx_{1},0,rx_{1},\ldots,rx_{m}$ are removable. Hence, $\widetilde{R}(z)$ is analytic for $z \in \mathbb{C}\setminus \{z:|z|=\delta\}$. Let us orient the circle $|z|=\delta$ in the clockwise direction, and define $J_{\widetilde{R}}:=\widetilde{R}_{-}^{-1}\widetilde{R}_{+}$. By \eqref{def of Rtilde}, $J_{\widetilde{R}}(z)=\widetilde{P}^{(0)}(z)\widetilde{N}(z)^{-1}$, and by \eqref{asymp of Psi at infty}, \eqref{asymp of Phi at infty} and \eqref{def of Rtilde}, $\widetilde{R}(z) = I + \bigO(z^{-1})$ as $z \to \infty$. We also check using \eqref{def of N tilde} and \eqref{def of Ptp0p} that $J_{\widetilde{R}}(z)=I+\bigO(r)$ as $r \to 0$ uniformly for $|z|=\delta$. Thus
\begin{align}\label{asymp of Rtilde}
\widetilde{R}(z) = I + \bigO(r), \qquad \mbox{as } r \to 0 \mbox{ uniformly for } z \in \mathbb{C}\setminus \{z:|z|=\delta\}.
\end{align}

\section{Proof of the main results}\label{section: proof of main results}
%In this section we prove Theorems \ref{thm:asymp of pkj as r to inf}, \ref{thm: hamiltonian representation of the Pearcey determinant} and \ref{thm: asymp of fredholm determinant}. 
\subsection{Proof of Theorem \ref{thm:asymp of pkj as r to inf}}
We already proved in Section \ref{section: Lax pair} that the functions $(p_{0},q_{0},\{p_{j,1},q_{j,1},p_{j,2},q_{j,2},p_{j,3},q_{j,3}\}_{j=1}^{m})$ defined by \eqref{def of p0 and q0 in proof}--\eqref{def of qj and pj in terms of Phijp0p} exist and satisfy \eqref{big system of ODEs}--\eqref{sum relation between qj and pj}. In this subsection we complete the proof of Theorem \ref{thm:asymp of pkj as r to inf} by obtaining the asymptotic formulas \eqref{all asymp as r to inf} and \eqref{all asymp as r to 0}.
\subsubsection*{Asymptotics of $p_{0}$, $q_{0}$, $p_{j,k}$ and $q_{j,k}$ as $r \to +\infty$}

We first compute the asymptotics of $p_{0}$ and $q_{0}$ using \eqref{def of p0 and q0 in proof}. Using \eqref{asymp of R as r to infty} and \eqref{Rp1p def}, we see that $R(z) = I + \frac{R_{1}}{z}+\bigO(z^{-2})$ as $z \to \infty$, where
\begin{align}\label{expression for R1}
R_{1} = C_{N}\widehat{\Psi}_{1}C_{N}^{-1}r^{-\frac{2}{3}} + \bigO(r^{-\frac{4}{3}}), \qquad \mbox{as } r \to +\infty.
\end{align}
Inverting the transformations $\Phi \mapsto T \mapsto S \mapsto R$ in the region outside the disks using \eqref{def of T}, \eqref{def of S} and \eqref{def of R}, we get
\begin{align*}
\Phi(rz) = \diag(r^{-\frac{1}{3}}, 1, r^{\frac{1}{3}})R(z)N(z)e^{\Theta(rz)}, \qquad z \in \mathbb{C}\setminus \big( \cup_{j=1}^{m}(\mathcal{D}_{x_{j}}\cup \mathcal{D}_{-x_{j}}) \cup \mathcal{D}_{0}  \big).
\end{align*}
From this expression, \eqref{asymp of Phi at infty} and \eqref{asymp of N at inf}, we deduce that
\begin{align*}
\Phi_{1} = r \; \diag(r^{-\frac{1}{3}}, 1, r^{\frac{1}{3}}) (R_{1}+N_{1}) \diag(r^{\frac{1}{3}}, 1, r^{-\frac{1}{3}}),
\end{align*}
which implies by \eqref{def of N1} and \eqref{expression for R1} that
\begin{align*}
& \Phi_{1,12} = i \sqrt{3} r^{\frac{2}{3}}\sum_{j=1}^{m}\beta_{j}x_{j}^{\frac{2}{3}} + \frac{\rho^{3}}{54}- \frac{\rho}{6} + \bigO(r^{-\frac{2}{3}}), & & \Phi_{1,23} = i \sqrt{3} r^{\frac{2}{3}}\sum_{j=1}^{m}\beta_{j}x_{j}^{\frac{2}{3}} + \frac{\rho^{3}}{54}+ \frac{\rho}{6} + \bigO(r^{-\frac{2}{3}}),
\end{align*}
as $r \to + \infty$. Substituting these asymptotics in \eqref{def of p0 and q0 in proof} gives the large $r$ asymptotics of $p_{0}$ and $q_{0}$ stated in \eqref{p0 large r in thm} and \eqref{q0 large r in thm}.

\medskip We now compute the large $r$ asymptotics of $p_{j,k}$, $j=1,\ldots,m$, $k=1,2,3$ using the definitions \eqref{def of qj and pj in terms of Phijp0p}. It follows from \eqref{asymp of Phi near rxj 1} and \eqref{asymp of Phi near rxj} that
\begin{align}\label{Phip0p as a limit}
\Phi_{j}^{(0)}(r) = \lim_{\substack{z \to x_{j}\\ z \in \mathrm{II}}} \Phi(rz)\begin{pmatrix}
1 & \mathfrak{s}_{j}\log(rz-rx_{j}) & \mathfrak{s}_{j}\log(rz-rx_{j}) \\
0 & 1 & 0 \\
0 & 0 & 1
\end{pmatrix}.
\end{align}
For $z \in \mathcal{D}_{x_{j}}$ and $z$ outside the lenses, by \eqref{def of T}, \eqref{def of S} and \eqref{def of R} we have
\begin{align}\label{Phi in Dxj in terms of R}
\Phi(rz) = \diag(r^{-\frac{1}{3}},1,r^{\frac{1}{3}}) R(z) P^{(x_{j})}(z)e^{\Theta(rz)}.
\end{align}
Hence, using also \eqref{def of Pxj}, we get
\begin{align}
& \Phi_{j}^{(0)}(r) = \diag(r^{-\frac{1}{3}},1,r^{\frac{1}{3}})R(x_{j})E_{x_{j}}(x_{j}) \lim_{\substack{z \to x_{j}\\ z \in \mathrm{II}}} \Bigg[\begin{pmatrix}
\Phi_{\mathrm{HG},11}(r^{\frac{4}{3}}f_{x_{j}}(z);\beta_{j}) & \Phi_{\mathrm{HG}, 12}(r^{\frac{4}{3}}f_{x_{j}}(z);\beta_{j}) & 0 \\
\Phi_{\mathrm{HG},21}(r^{\frac{4}{3}}f_{x_{j}}(z);\beta_{j}) & \Phi_{\mathrm{HG}, 22}(r^{\frac{4}{3}}f_{x_{j}}(z);\beta_{j}) & 0 \\
0 & 0 & 1
\end{pmatrix} \nonumber \\
& \times  (s_{j}s_{j+1})^{-\frac{\sigma_{3,1}}{4}}\widetilde{\Theta}(z) \begin{pmatrix}
1 & \mathfrak{s}_{j}\log(rz-rx_{j}) & \mathfrak{s}_{j}\log(rz-rx_{j}) \\
0 & 1 & 0 \\
0 & 0 & 1
\end{pmatrix} \Bigg] \label{lol16}
\end{align}
where
\begin{align}\label{def of Theta tilde}
\widetilde{\Theta}(z) = e^{\frac{1}{2}(\theta_{2}(rz)-\theta_{1}(rz))\sigma_{3,1}}\begin{pmatrix}
1 & 0 & 0 \\
0 & 1 & e^{\theta_{2}(rz)-\theta_{3}(rz)} \\
0 & 0 & 1
\end{pmatrix}e^{\Theta(rz)} = e^{-\frac{\theta_{3}(rz)}{2}}\begin{pmatrix}
1 & 0 & 0 \\
0 & 1 & 1 \\
0 & 0 & e^{\frac{3}{2}\theta_{3}(rz)}
\end{pmatrix}.
\end{align}
Using \eqref{betaj sj relation}, we note that
\begin{align*}
\frac{\sin(\pi\beta_{j})}{\pi} = \frac{1}{\Gamma(\beta_{j})\Gamma(1-\beta_{j})} = - \frac{\mathfrak{s}_{j}}{\sqrt{s_{j}s_{j+1}}}.
\end{align*}
Therefore, using \eqref{precise asymptotics of Phi HG near 0}, we can rewrite \eqref{lol16} as
\begin{align}
& \Phi_{j}^{(0)}(r) = e^{-\frac{\theta_{3}(rx_{j})}{2}}\diag(r^{-\frac{1}{3}},1,r^{\frac{1}{3}})R(x_{j})E_{x_{j}}(x_{j}) \nonumber \\
& \times \diag(\Upsilon_{j}^{(0)},1)\begin{pmatrix}
1 & \mathfrak{s}_{j}(\log r - \log(r^{\frac{4}{3}}|f_{x_{j}}'(x_{j})|i)) & \mathfrak{s}_{j}(\log r - \log(r^{\frac{4}{3}}|f_{x_{j}}'(x_{j})|i)) \\
0 & 1 & 1 \\
0 & 0 & e^{\frac{3}{2}\theta_{3}(rx_{j})}
\end{pmatrix}, \label{explicit expression for Phijp0p in large r analysis}
\end{align}
where
\begin{align}\label{def of Upsilonj p0p}
\Upsilon_{j}^{(0)} = \begin{pmatrix}
\frac{\Gamma(1-\beta_{j})}{(s_{j}s_{j+1})^{1/4}} & \frac{(s_{j}s_{j+1})^{1/4}}{\Gamma(\beta_{j})} \left( \frac{\Gamma^{\prime}(1-\beta_{j})}{\Gamma(1-\beta_{j})}+2\gamma_{\mathrm{E}} - i \pi \right) \\
\frac{\Gamma(1+\beta_{j})}{(s_{j}s_{j+1})^{1/4}} & \frac{-(s_{j}s_{j+1})^{1/4}}{\Gamma(-\beta_{j})} \left( \frac{\Gamma^{\prime}(-\beta_{j})}{\Gamma(-\beta_{j})} + 2\gamma_{\mathrm{E}} - i \pi \right)
\end{pmatrix},
\end{align}
and $\gamma_{\mathrm{E}}$ is Euler's gamma constant. Combining \eqref{explicit expression for Phijp0p in large r analysis} with \eqref{def of qj and pj in terms of Phijp0p}, we get
\begin{align*}
& \begin{pmatrix}
q_{j,1}(r) \;\; q_{j,2}(r) \;\; q_{j,3}(r)
\end{pmatrix}^{t} = e^{-\frac{\theta_{3}(rx_{j})}{2}}\diag(r^{-\frac{1}{3}},1,r^{\frac{1}{3}})R(x_{j})E_{x_{j}}(x_{j})\begin{pmatrix}
\Upsilon_{j,11}^{(0)} \;\; \Upsilon_{j,21}^{(0)} \;\; 0 
\end{pmatrix}^{t}, \\
& \begin{pmatrix}
p_{j,1}(r) \;\; p_{j,2}(r) \;\; p_{j,3}(r)
\end{pmatrix}^{t} = - e^{\frac{\theta_{3}(rx_{j})}{2}}\mathfrak{s}_{j}\diag(r^{\frac{1}{3}},1,r^{-\frac{1}{3}})R(x_{j})^{-t}E_{x_{j}}(x_{j})^{-t}   \diag((\Upsilon_{j}^{(0)})^{-t},1) \begin{pmatrix}
0 \;\; 1 \;\; 0
\end{pmatrix}^{t}.
\end{align*}
We then obtain \eqref{p1 large r in thm}, \eqref{p2 large r in thm}, \eqref{p3 large r in thm}, \eqref{q1 large r in thm}, \eqref{q2 large r in thm} and \eqref{q3 large r in thm} after a long computation. We omit the details.
\subsubsection*{Asymptotics of $p_{0}$, $q_{0}$, $p_{j,k}$ and $q_{j,k}$ as $r \to 0$}
Using \eqref{def of Rtilde} and \eqref{asymp of Rtilde}, we have, for $|z|>\delta$,
\begin{align*}
\Phi(z) = \big(\tfrac{\sqrt{2\pi}}{\sqrt{3}}e^{\frac{\rho^{2}}{6}}i\big)^{-1}(I+\bigO(r))\Psi_{0}^{-1}\Psi(z),  \qquad \mbox{as } r \to 0.
\end{align*}
Using also \eqref{asymp of Psi at infty} and \eqref{def of p0 and q0 in proof}, we obtain \eqref{asymp of p0 and q0 as r to 0}.

We now compute the asymptotics of $p_{j,k}$ and $q_{j,k}$, $j=1,\ldots,m$, $k=1,2,3$ as $r \to 0$. Recall that $p_{j,k}$ and $q_{j,k}$ are defined in \eqref{def of qj and pj in terms of Phijp0p}. Using \eqref{def of Ptp0p} and \eqref{def of Rtilde}, for $z \in \mathrm{II}\cap \{z:|z|<\delta \, r^{-1}\}$ we get
\begin{align*}
\Phi(rz) = \widetilde{R}(rz) \frac{\Psi_{0}^{-1}}{\frac{\sqrt{2\pi}}{\sqrt{3}}e^{\frac{\rho^{2}}{6}}i} \widetilde{\Psi}(rz) \left( I - \sum_{j=1}^{m} \mathfrak{s}_{j} \Big( \log(rz-rx_{j})-\log(rz+rx_{j}) \Big) \begin{pmatrix}
0 & 1 & 1 \\
0 & 0 & 0 \\
0 & 0 & 0
\end{pmatrix} \right).
\end{align*}
It then follows from \eqref{Phip0p as a limit} that
\begin{align*}
\Phi_{j}^{(0)}(r) = \widetilde{R}(rx_{j}) \frac{\Psi_{0}^{-1}}{\frac{\sqrt{2\pi}}{\sqrt{3}} e^{\frac{\rho^{2}}{6}}i} \widetilde{\Psi}(rx_{j}) \left( I + \sum_{j=1}^{m} \mathfrak{s}_{j} \log(2rx_{j})  \begin{pmatrix}
0 & 1 & 1 \\
0 & 0 & 0 \\
0 & 0 & 0
\end{pmatrix} \right),
\end{align*}
and then by \eqref{def of Psi tilde} and \eqref{asymp of Rtilde}, we get
\begin{align*}
\Phi_{j}^{(0)}(r) = \frac{\Psi_{0}^{-1}}{\frac{\sqrt{2\pi}}{\sqrt{3}}e^{\frac{\rho^{2}}{6}}i} (\widetilde{\Psi}(0)+\bigO(r)) \left( I + \sum_{j=1}^{m} \mathfrak{s}_{j} \log(2rx_{j})  \begin{pmatrix}
0 & 1 & 1 \\
0 & 0 & 0 \\
0 & 0 & 0
\end{pmatrix} \right).
\end{align*}
On the other hand, by \cite[equation (7.22)]{DXZ2020 thinning}, we have
\begin{align*}
\widetilde{\Psi}(0)\begin{pmatrix}
1 \;\; 0 \;\; 0
\end{pmatrix}^{t} = \begin{pmatrix}
\mathcal{P}_{0}(0) \;\; 0 \;\; \mathcal{P}_{0}''(0)
\end{pmatrix}^{t}, \qquad \widetilde{\Psi}(0)^{-t}\begin{pmatrix}
0 \;\; 1 \;\; 1
\end{pmatrix}^{t} = \begin{pmatrix}
0 \;\; \frac{1}{\mathcal{P}_{1}'(0)} \;\; 0
\end{pmatrix}^{t},
\end{align*}
where $\mathcal{P}_{1}'(0)\neq 0$. The asymptotics of \eqref{asymp of pjk as r to 0} and \eqref{asymp of qjk as r to 0} now directly follows from \eqref{def of qj and pj in terms of Phijp0p}. 

\subsection{Proof of Theorem \ref{thm: hamiltonian representation of the Pearcey determinant}}
The asymptotics of $H(r)$ as $r \to 0$ given by \eqref{asymp of H as r to 0 in thm} are directly obtained from \eqref{def of Hamiltonian} and \eqref{asymp of p0 and q0 as r to 0}--\eqref{asymp of qjk as r to 0}. Since $F(0\vec{x},\vec{u})=1$ by \eqref{F as expectation intro}, the integral representation \eqref{integral representation of F} follows by integrating \eqref{der of integral representation} from $0$ to an arbitrary $r>0$. To compute the asymptotics of $H(r)$ as $r \to + \infty$, we follow the method of \cite{DXZ2020 thinning} and rely on \eqref{Hamiltonian as a trace}. Using \eqref{asymp of Phi near rxj 1} and \eqref{asymp of Phi near rxj}, we obtain
\begin{align}\label{Phip1p as a limit}
\Phi_{j}^{(1)}(r) =\frac{1}{r}\Phi_{j}^{(0)}(r)^{-1} \lim_{\substack{z \to x_{j}\\ z \in \mathrm{II}}} \left[ \Phi(rz)\begin{pmatrix}
1 & \mathfrak{s}_{j}\log(rz-rx_{j}) & \mathfrak{s}_{j}\log(rz-rx_{j}) \\
0 & 1 & 0 \\
0 & 0 & 1
\end{pmatrix}\right]',
\end{align}
where $'$ denotes the derivative with respect to $z$. We see from \eqref{explicit expression for Phijp0p in large r analysis} that
\begin{align*}
\begin{pmatrix}
0 & 1 & 1
\end{pmatrix}\Phi_{j}^{(0)}(r)^{-1} = e^{\frac{1}{2}\theta_{3}(rx_{j})} \begin{pmatrix}
0 & 1 & 0
\end{pmatrix} \diag((\Upsilon_{j}^{(0)})^{-1},1) E_{x_{j}}(x_{j})^{-1}R(x_{j})^{-1}\diag(r^{\frac{1}{3}},1,r^{-\frac{1}{3}}).
\end{align*}
Also, by \eqref{Phi in Dxj in terms of R} and \eqref{def of Pxj}, we have
\begin{align*}
& \lim_{\substack{z \to x_{j}\\ z \in \mathrm{II}}} \left[ \Phi(rz)\begin{pmatrix}
1 & \mathfrak{s}_{j}\log(rz-rx_{j}) & \mathfrak{s}_{j}\log(rz-rx_{j}) \\
0 & 1 & 0 \\
0 & 0 & 1
\end{pmatrix}\right]'\begin{pmatrix}
1 \\ 0 \\ 0
\end{pmatrix} = \diag(r^{-\frac{1}{3}},1,r^{\frac{1}{3}}) \\
& \times \lim_{\substack{z \to x_{j}\\ z \in \mathrm{II}}} \left[  R(z) E_{x_{j}}(z) \begin{pmatrix}
\Phi_{\mathrm{HG},11}(r^{\frac{4}{3}}f_{x_{j}}(z);\beta_{j}) & \Phi_{\mathrm{HG}, 12}(r^{\frac{4}{3}}f_{x_{j}}(z);\beta_{j}) & 0 \\
\Phi_{\mathrm{HG},21}(r^{\frac{4}{3}}f_{x_{j}}(z);\beta_{j}) & \Phi_{\mathrm{HG}, 22}(r^{\frac{4}{3}}f_{x_{j}}(z);\beta_{j}) & 0 \\
0 & 0 & 1
\end{pmatrix} (s_{j}s_{j+1})^{-\frac{\sigma_{3,1}}{4}} \widetilde{\Theta}(z)\right]'\begin{pmatrix}
1 \\ 0 \\ 0
\end{pmatrix}
\end{align*}
where $\widetilde{\Theta}$ is defined in \eqref{def of Theta tilde}. A direct computation using \eqref{precise asymptotics of Phi HG near 0} shows that this last limit is given by
\begin{align}
& e^{-\frac{\theta_{3}(rx_{j})}{2}} \diag(r^{-\frac{1}{3}},1,r^{\frac{1}{3}}) \big[ R'(x_{j})E_{x_{j}}(x_{j})\diag(\Upsilon_{j}^{(0)},1) + R(x_{j})E_{x_{j}}'(x_{j})\diag(\Upsilon_{j}^{(0)},1)  \\
& +r^{\frac{4}{3}}f_{x_{j}}'(x_{j}) R(x_{j})E_{x_{j}}(x_{j}) \diag(\Upsilon_{j}^{(0)}\Upsilon_{j}^{(1)},0) - \tfrac{r\theta_{3}'(rx_{j})}{2}R(x_{j})E_{x_{j}}(x_{j})\diag(\Upsilon_{j}^{(0)},1) \big] \begin{pmatrix}
1 & 0 & 0
\end{pmatrix}^{t} \nonumber
\end{align}
where $\Upsilon^{(1)}_{j,21}= \frac{\beta_{j} \pi}{\sqrt{s_{j}s_{j+1}}\sin (\pi \beta_{j})}$. Combining the above equations with \eqref{Hamiltonian as a trace}, we then find
\begin{align}
& H(r) = -\frac{1}{r}\sum_{j=1}^{m}\mathfrak{s}_{j}x_{j} \Big[ \diag((\Upsilon_{j}^{(0)})^{-1},1) E_{x_{j}}(x_{j})^{-1}R(x_{j})^{-1}R'(x_{j})E_{x_{j}}(x_{j})\diag(\Upsilon_{j}^{(0)},1) \nonumber \\
& + \diag((\Upsilon_{j}^{(0)})^{-1},1) E_{x_{j}}(x_{j})^{-1}E_{x_{j}}'(x_{j})\diag(\Upsilon_{j}^{(0)},1)+ r^{\frac{4}{3}}f_{x_{j}}'(x_{j})  \diag(\Upsilon_{j}^{(1)},0) \Big]_{21}. \label{lol17}
\end{align}
The first part in the sum in \eqref{lol17} decays as $r \to + \infty$ by \eqref{asymp of R as r to infty}; more precisely
\begin{align*}
\diag((\Upsilon_{j}^{(0)})^{-1},1) E_{x_{j}}(x_{j})^{-1}R(x_{j})^{-1}R'(x_{j})E_{x_{j}}(x_{j})\diag(\Upsilon_{j}^{(0)},1) = \bigO(r^{-\frac{2}{3}}), \qquad \mbox{as } r \to + \infty.
\end{align*}
For the second part in \eqref{lol17}, we use \eqref{def of Upsilonj p0p} and get
\begin{align*}
& \big[\diag((\Upsilon_{j}^{(0)})^{-1},1) E_{x_{j}}(x_{j})^{-1}E_{x_{j}}'(x_{j})\diag(\Upsilon_{j}^{(0)},1)\big]_{21} = \tfrac{\Gamma(1-\beta_{j})^{2}}{\sqrt{s_{j}s_{j+1}}} \big[E_{x_{j}}(x_{j})^{-1}E_{x_{j}}'(x_{j})\big]_{21} \\
& - \tfrac{\Gamma(1+\beta_{j})^{2}}{\sqrt{s_{j}s_{j+1}}} \big[E_{x_{j}}(x_{j})^{-1}E_{x_{j}}'(x_{j})\big]_{12} + \tfrac{\Gamma(1-\beta_{j})\Gamma(1+\beta_{j})}{\sqrt{s_{j}s_{j+1}}}\Big( \big[E_{x_{j}}(x_{j})^{-1}E_{x_{j}}'(x_{j})\big]_{22}-\big[E_{x_{j}}(x_{j})^{-1}E_{x_{j}}'(x_{j})\big]_{11} \Big).
\end{align*}
This expression can be further simplified using \eqref{Exjxj der} and \eqref{def of cfrak j}. After a rather long computation, as $r \to + \infty$ we obtain
\begin{align*}
& \tfrac{\Gamma(1-\beta_{j})\Gamma(1+\beta_{j})}{\sqrt{s_{j}s_{j+1}}}\Big( \big[E_{x_{j}}(x_{j})^{-1}E_{x_{j}}'(x_{j})\big]_{22}-\big[E_{x_{j}}(x_{j})^{-1}E_{x_{j}}'(x_{j})\big]_{11} \Big) = \frac{1}{\mathfrak{s}_{j}}\bigg( \frac{\beta_{j}^{2}}{3x_{j}} + 2i \beta_{j} \im(d_{1,x_{j}}^{(1)}) \bigg) + \bigO(r^{-\frac{2}{3}}), \\
& \tfrac{\Gamma(1-\beta_{j})^{2}}{\sqrt{s_{j}s_{j+1}}} \big[E_{x_{j}}(x_{j})^{-1}E_{x_{j}}'(x_{j})\big]_{21} - \tfrac{\Gamma(1+\beta_{j})^{2}}{\sqrt{s_{j}s_{j+1}}} \big[E_{x_{j}}(x_{j})^{-1}E_{x_{j}}'(x_{j})\big]_{12} = \frac{2}{3\sqrt{3}x_{j}}\frac{i\beta_{j}}{\mathfrak{s}_{j}} \cos(2\vartheta_{j}(r)) + \bigO(r^{-\frac{2}{3}}),
\end{align*}
where we recall that $\vartheta_{j}(r)$ is defined in \eqref{def of vartheta j}. For the third and last term in \eqref{lol17}, we use \eqref{def of UpGamma 0 and 1} and \eqref{def of coeff expansion fxj} to write
\begin{align*}
\big[r^{\frac{4}{3}}f_{x_{j}}'(x_{j})  \diag(\Upsilon_{j}^{(1)},0)\big]_{21} = r^{\frac{4}{3}}f_{x_{j}}'(x_{j})\Upsilon_{j,21}^{(1)} = \frac{i\beta_{j}}{\mathfrak{s}_{j}}\bigg( -\sqrt{3}x_{j}^{\frac{1}{3}}r^{\frac{4}{3}}+\frac{\rho}{\sqrt{3}x_{j}^{1/3}}r^{\frac{2}{3}} \bigg).
\end{align*}
Substituting the above formulas in \eqref{lol17}, and noting the remarkable simplification
\begin{align*}
\sum_{j=1}^{m}2i \beta_{j}x_{j}\im (d_{1,x_{j}}^{(1)}) & =  \sum_{j=1}^{m} 2i\beta_{j}x_{j}\im \bigg( \frac{(\omega-5)\beta_{j}}{6(\omega-1)x_{j}} \bigg) + 
\sum_{1\leq j \neq k \leq m} \re \bigg( \frac{4(\omega-1)x_{k}^{2/3}\beta_{k}x_{j}^{2/3}\beta_{j}}{3(x_{j}^{2/3}-x_{k}^{2/3})(x_{j}^{2/3}-\omega x_{k}^{2/3})} \bigg) \\
& = \sum_{j=1}^{m} 2i\beta_{j}x_{j}\im \bigg( \frac{(\omega-5)\beta_{j}}{6(\omega-1)x_{j}} \bigg) = \sum_{j=1}^{m}\beta_{j}^{2},
\end{align*}
we obtain \eqref{asymp of H as r to inf in thm}. This finishes the proof of Theorem \ref{thm: hamiltonian representation of the Pearcey determinant}.

\subsection{Proof of Theorem \ref{thm: asymp of fredholm determinant}}
Integrating \eqref{magical identity} from $0$ to an arbitrary $r>0$, we get
\begin{align}
\int_{0}^{r}H(\tau)d\tau = & \int_{0}^{r} \bigg( p_{0}(\tau)q_{0}'(\tau)+\sum_{j=1}^{m}\sum_{k=1}^{3}p_{j,k}(\tau)q_{j,k}'(\tau) - H(\tau) \bigg) d\tau \nonumber \\
& -\frac{1}{4}\bigg[ 2p_{0}(\tau)q_{0}(\tau) + \sum_{j=1}^{m}\Big(p_{j,2}(\tau)q_{j,2}(\tau)+2p_{j,3}(\tau)q_{j,3}(\tau)\Big) -3\tau H(\tau) \bigg]_{\tau=0}^{r}. \label{lol7}
\end{align}
For convenience, we write $\vec{0}=(0,\ldots,0) \in \mathbb{R}^{m}$,
\begin{align*}
\vec{\beta}:=(\beta_{1},\ldots,\beta_{m}), \qquad \vec{\beta}_{\ell}:=(\beta_{1},\ldots,\beta_{\ell},0,\ldots,0), \qquad \vec{\beta}_{\ell}':=(\beta_{1},\ldots,\beta_{\ell-1},\beta_{\ell}',0,\ldots,0).
\end{align*}
We also write explicitly the dependence of $p_{j,k}$, $q_{j,k}$, $p_{0}$, $q_{0}$ and $H$ in $\beta_{1},\ldots,\beta_{m}$ using the notation $p_{j,k}(r;\vec{\beta})$, $q_{j,k}(r;\vec{\beta})$, $p_{0}(r;\vec{\beta})$, $q_{0}(r;\vec{\beta})$ and $H(r;\vec{\beta})$. By \eqref{betaj sj relation}, the parameter $\gamma$ in \eqref{differential identity for big sum} can also be chosen to be any parameter among $\beta_{1},\ldots,\beta_{m}$. Let $\ell \in \{1,\ldots,m\}$. Using \eqref{differential identity for big sum} with $\vec{\beta}= \vec{\beta}_{\ell-1}$ and $\gamma=\beta_{\ell}$, and integrating in $r$ from $0$ to $r$ and integrating in $\beta_{\ell}$ from $0$ to $\beta_{\ell}$, we get
\begin{align}
& \int_{0}^{r} \bigg( p_{0}(\tau;\vec{\beta})q_{0}'(\tau;\vec{\beta}_{\ell})+\sum_{j=1}^{m}\sum_{k=1}^{3}p_{j,k}(\tau;\vec{\beta}_{\ell})q_{j,k}'(\tau;\vec{\beta}_{\ell}) - H(\tau;\vec{\beta}_{\ell}) \bigg) d\tau \nonumber \\
& - \int_{0}^{r} \bigg( p_{0}(\tau;\vec{\beta}_{\ell-1})q_{0}'(\tau;\vec{\beta}_{\ell-1})+\sum_{j=1}^{m}\sum_{k=1}^{3}p_{j,k}(\tau;\vec{\beta}_{\ell-1})q_{j,k}'(\tau;\vec{\beta}_{\ell-1}) - H(\tau;\vec{\beta}_{\ell-1}) \bigg) d\tau \nonumber \\
& = \int_{0}^{\beta_{\ell}} \bigg( \sum_{k=1}^{3}\sum_{j=1}^{m}p_{j,k}(r;\vec{\beta}_{\ell}')\partial_{\beta_{\ell}'}q_{j,k}(r;\vec{\beta}_{\ell}')+p_{0}(r;\vec{\beta}_{\ell}')\partial_{\beta_{\ell}'} q_{0}(r;\vec{\beta}_{\ell}') \bigg)d\beta_{\ell}', \label{lol18} 
\end{align}
where we have used \eqref{asymp of p0 and q0 as r to 0}--\eqref{asymp of qjk as r to 0} to conclude that
\begin{align*}
\int_{0}^{\beta_{\ell}}\bigg( \sum_{k=1}^{3}\sum_{j=1}^{m}p_{j,k}(0;\vec{\beta}_{\ell}')\partial_{\beta_{\ell}'}q_{j,k}(0;\vec{\beta}_{\ell}')+p_{0}(0;\vec{\beta}_{\ell}')\partial_{\beta_{\ell}'} q_{0}(0;\vec{\beta}_{\ell}') \bigg)d\beta_{\ell}' = 0.
\end{align*}
Summing \eqref{lol18} over $\ell=1,\ldots,m$, we then obtain
\begin{align}
& \int_{0}^{r} \bigg( p_{0}(\tau;\vec{\beta})q_{0}'(\tau;\vec{\beta})+\sum_{j=1}^{m}\sum_{k=1}^{3}p_{j,k}(\tau;\vec{\beta})q_{j,k}'(\tau;\vec{\beta}) - H(\tau;\vec{\beta}) \bigg) d\tau \nonumber \\
& - \int_{0}^{r} \bigg( p_{0}(\tau;\vec{0})q_{0}'(\tau;\vec{0})+\sum_{j=1}^{m}\sum_{k=1}^{3}p_{j,k}(\tau;\vec{0})q_{j,k}'(\tau;\vec{0}) - H(\tau;\vec{0}) \bigg) d\tau \nonumber \\
& = \sum_{\ell=1}^{m}\int_{0}^{\beta_{\ell}} \bigg( \sum_{k=1}^{3}\sum_{j=1}^{m}p_{j,k}(r;\vec{\beta}_{\ell}')\partial_{\beta_{\ell}'}q_{j,k}(r;\vec{\beta}_{\ell}')+p_{0}(r;\vec{\beta}_{\ell}')\partial_{\beta_{\ell}'} q_{0}(r;\vec{\beta}_{\ell}') \bigg)d\beta_{\ell}'. \label{lol6}
\end{align}
If $\beta_{\ell}=0$ (or equivalently, if $s_{\ell}=s_{\ell+1}$), it follows from \eqref{def of qj and pj in terms of Phijp0p} that $p_{\ell,1}(r)=p_{\ell,2}(r)=p_{\ell,3}(r)=0$ for all $r> 0$. Also, by \eqref{Hamiltonian as a trace}, we have $H(r;\vec{0})=0$, and by \eqref{def of f and h} and \eqref{jumps of Y}, we have $Y|_{\vec{\beta}=\vec{0}} \equiv I$. Hence, by \eqref{def of p0 and q0 in proof} and the relations $\Phi_{1}=\Psi_{1}+\Psi_{0}^{-1}Y_{1}\Psi_{0}$ and \eqref{def of Psi0 and Psi1}, we have $p_{0}(r;\vec{0})=\frac{1}{\sqrt{2}}(\frac{\rho^{3}}{54}+\frac{\rho}{2})$ and $q_{0}(r;\vec{0})=\frac{1}{\sqrt{2}}(-\frac{\rho^{3}}{54}+\frac{\rho}{2})$. Thus,
\begin{align*}
\int_{0}^{r} \bigg( p_{0}(\tau;\vec{0})q_{0}'(\tau;\vec{0})+\sum_{j=1}^{m}\sum_{k=1}^{3}p_{j,k}(\tau;\vec{0})q_{j,k}'(\tau;\vec{0}) - H(\tau;\vec{0}) \bigg) d\tau = 0
\end{align*}
and \eqref{lol6} can be simplified as
\begin{multline}
\int_{0}^{r} \bigg( p_{0}(\tau;\vec{\beta})q_{0}'(\tau;\vec{\beta})+\sum_{j=1}^{m}\sum_{k=1}^{3}p_{j,k}(\tau;\vec{\beta})q_{j,k}'(\tau;\vec{\beta}) - H(\tau;\vec{\beta}) \bigg) d\tau  \\
 = \sum_{\ell=1}^{m}\int_{0}^{\beta_{\ell}} \bigg( \sum_{k=1}^{3}\sum_{j=1}^{\ell}p_{j,k}(r;\vec{\beta}_{\ell}')\partial_{\beta_{\ell}'}q_{j,k}(r;\vec{\beta}_{\ell}')+p_{0}(r;\vec{\beta}_{\ell}')\partial_{\beta_{\ell}'} q_{0}(r;\vec{\beta}_{\ell}') \bigg)d\beta_{\ell}'. \label{lol8}
\end{multline}
Substituting \eqref{lol8} in \eqref{lol7}, we obtain
\begin{align}
& \int_{0}^{r}H(\tau;\vec{\beta})d\tau =  \sum_{\ell=1}^{m}\int_{0}^{\beta_{\ell}}\bigg(\sum_{k=1}^{3}\sum_{j=1}^{\ell}p_{j,k}(r;\vec{\beta}_{\ell}')\partial_{\beta_{\ell}'}q_{j,k}(r;\vec{\beta}_{\ell}')+p_{0}(r;\vec{\beta}_{\ell}')\partial_{\beta_{\ell}'} q_{0}(r;\vec{\beta}_{\ell}') \bigg)d\beta_{\ell}' \label{lol9} \\
& -\frac{1}{4}\bigg( 2p_{0}(r;\vec{\beta})q_{0}(r;\vec{\beta}) - \sum_{j=1}^{m}\Big( 2p_{j,1}(r;\vec{\beta})q_{j,1}(r;\vec{\beta})+  p_{j,2}(r;\vec{\beta})q_{j,2}(r;\vec{\beta})\Big) -3rH(r;\vec{\beta}) \bigg)+ \frac{1}{2}p_{0}(0;\vec{\beta})q_{0}(0;\vec{\beta}), \nonumber
\end{align}
where we have also used \eqref{sum relation between qj and pj}. Using \eqref{p1 large r in thm}--\eqref{p3 large r in thm}, \eqref{q1 large r in thm}--\eqref{q3 large r in thm} and \eqref{der of asymp of R as r to infty}, we get
\begin{align}
& p_{j,1}(r;\vec{\beta}_{\ell})\partial_{\beta_{\ell}}q_{j,1}(r;\vec{\beta}_{\ell}) = p_{j,1}(r;\vec{\beta}_{\ell})q_{j,1}(r;\vec{\beta}_{\ell}) \partial_{\beta_{\ell}}\log q_{j,1}(r;\vec{\beta}_{\ell}) \nonumber \\
& = -\frac{2i\beta_{j}}{3}\bigg( \sin \big( 2\vartheta_{j}(r;\vec{\beta}_{\ell}) -\tfrac{2\pi}{3} \big) + \Big( \sin (2\vartheta_{j}(r;\vec{\beta}_{\ell})) - \tfrac{\sqrt{3}}{2} \Big) \sum_{n=1}^{\ell}\sqrt{3}i\beta_{n}\frac{x_{n}^{2/3}}{x_{j}^{2/3}} \bigg) \nonumber \\
& \quad \times \bigg( \partial_{\beta_{\ell}}\log \mathcal{A}_{j}(\vec{\beta}_{\ell}) +\cot (\vartheta_{j}(r;\vec{\beta}_{\ell})-\tfrac{\pi}{3}) \partial_{\beta_{\ell}}\vartheta_{j} \bigg)\bigg( 1+\bigO\Big(\frac{\log r}{r^{2/3}}\Big)  \bigg), \qquad \mbox{as } r \to + \infty, \label{lol19}
\end{align}
where we have explicitly written the dependence of $\vartheta_{j}$ and $\mathcal{A}_{j}$ in $r$ and $\vec{\beta}_{\ell}$. For $p_{j,2}(r;\vec{\beta}_{\ell})\partial_{\beta_{\ell}}q_{j,2}(r;\vec{\beta}_{\ell})$ and $p_{j,3}(r;\vec{\beta}_{\ell})\partial_{\beta_{\ell}}q_{j,3}(r;\vec{\beta}_{\ell})$, we obtain after another computation (using again \eqref{p1 large r in thm}--\eqref{p3 large r in thm}, \eqref{q1 large r in thm}--\eqref{q3 large r in thm} and \eqref{der of asymp of R as r to infty}) that
\begin{align}
& p_{j,2}(r;\vec{\beta}_{\ell})\partial_{\beta_{\ell}}q_{j,2}(r;\vec{\beta}_{\ell}) = -\frac{2i\beta_{j}}{3}\sin \big( 2\vartheta_{j}\big)  \bigg( \partial_{\beta_{\ell}}\log \mathcal{A}_{j}(\vec{\beta}_{\ell}) +\cot (\vartheta_{j}) \partial_{\beta_{\ell}}\vartheta_{j} \bigg)\bigg( 1+\bigO\Big(\frac{\log r}{r^{2/3}}\Big)  \bigg), \label{lol20} \\
& p_{j,3}(r;\vec{\beta}_{\ell})\partial_{\beta_{\ell}}q_{j,3}(r;\vec{\beta}_{\ell}) = \bigg[ \tfrac{-2i\beta_{j}}{3}\sin \big( 2\vartheta_{j}+\tfrac{2\pi}{3} \big)\bigg( \partial_{\beta_{\ell}}\log \mathcal{A}_{j}(\vec{\beta}_{\ell}) +\cot (\vartheta_{j}+\tfrac{\pi}{3}) \partial_{\beta_{\ell}}\vartheta_{j} \bigg) \nonumber \\
& + \tfrac{2i\beta_{j}}{3}\Big( \sin(2\vartheta_{j})-\tfrac{\sqrt{3}}{2} \Big) \bigg( \bigg\{\partial_{\beta_{\ell}}\log \mathcal{A}_{j}(\vec{\beta}_{\ell})+\cot(\vartheta_{j}-\tfrac{\pi}{3})\partial_{\beta_{\ell}}\vartheta_{j}\bigg\} \sum_{n=1}^{\ell}\sqrt{3}i\beta_{n}\tfrac{x_{n}^{2/3}}{x_{j}^{2/3}} + \sqrt{3}i\tfrac{x_{\ell}^{2/3}}{x_{j}^{2/3}} \bigg)\bigg] \nonumber \\
& \times \bigg( 1+\bigO\Big(\frac{\log r}{r^{2/3}}\Big) \bigg) \label{lol21}
\end{align}
as $r \to + \infty$, where $\vartheta_{j}=\vartheta_{j}(r;\vec{\beta}_{\ell})$. Furthermore, using \eqref{def of mathcal A j} and \eqref{betaj sj relation}, we find
\begin{align*}
& \partial_{\beta_{\ell}}\log \mathcal{A}_{j}(\vec{\beta}_{\ell}) = \begin{cases}
  -\frac{2\pi i}{3} + \partial_{\beta_{\ell}} \log |\Gamma (1-\beta_{\ell})|, & \mbox{if } j=\ell, \\[0.1cm]
\log \frac{|x_{j}^{2/3}-\omega x_{\ell}^{2/3}|}{|x_{j}^{2/3}- x_{\ell}^{2/3}|}, & \mbox{if } j<\ell.
\end{cases}
\end{align*}
Combining the asymptotics \eqref{lol19}--\eqref{lol21}, we get
\begin{align*}
\sum_{k=1}^{3}\sum_{j=1}^{\ell}p_{j,k}(r;\vec{\beta}_{\ell})\partial_{\beta_{\ell}}q_{j,k}(r;\vec{\beta}_{\ell}) = \sum_{j=1}^{\ell} \beta_{j}\bigg( \frac{x_{\ell}^{2/3}}{x_{j}^{2/3}}-\frac{2}{\sqrt{3}}\sin(2\vartheta_{j})\frac{x_{\ell}^{2/3}}{x_{j}^{2/3}} - 2 i \partial_{\beta_{\ell}}\vartheta_{j} \bigg) + \bigO\bigg( \frac{\log r}{r^{2/3}} \bigg)
\end{align*}
as $r \to + \infty$, where again $\vartheta_{j}=\vartheta_{j}(r;\vec{\beta}_{\ell})$. Using \eqref{useful relation} and the fact that $p_{j,1}(r)=p_{j,2}(r)=p_{j,3}(r)=0$ if $\beta_{j}=0$, we note that
\begin{align*}
p_{0}(r;\vec{\beta}_{\ell})\partial_{\beta_{\ell}}q_{0}(r;\vec{\beta}_{\ell}) = - \sqrt{2}\sum_{j=1}^{\ell}p_{j,3}(r;\vec{\beta}_{\ell})q_{j,1}(r;\vec{\beta}_{\ell})\partial_{\beta_{\ell}}q_{0}(r;\vec{\beta}_{\ell}) - \bigg( q_{0}(r;\vec{\beta}_{\ell})-\frac{\rho}{\sqrt{2}} \bigg)\partial_{\beta_{\ell}}q_{0}(r;\vec{\beta}_{\ell}).
\end{align*}
Integrating this identity in $\beta_{\ell}$ from $0$ to an arbitrary $\beta_{\ell}\in i \mathbb{R}$, we get
\begin{align}
\int_{0}^{\beta_{\ell}}p_{0}(r;\vec{\beta}_{\ell}')\partial_{\beta_{\ell}'}q_{0}(r;\vec{\beta}_{\ell}')d\beta_{\ell}' = & \; - \sqrt{2}\sum_{j=1}^{\ell} \int_{0}^{\beta_{\ell}} p_{j,3}(r;\vec{\beta}_{\ell}')q_{j,1}(r;\vec{\beta}_{\ell}')\partial_{\beta_{\ell}'}q_{0}(r;\vec{\beta}_{\ell}')d\beta_{\ell}' \nonumber \\
& - \bigg[ \frac{1}{2} q_{0}(r;\vec{\beta}_{\ell}')^{2}-\frac{\rho}{\sqrt{2}}q_{0}(r;\vec{\beta}_{\ell}') \bigg]_{\beta_{\ell}'=0}^{\beta_{\ell}}. \label{lol22}
\end{align}
Using \eqref{p3 large r in thm}, \eqref{q0 large r in thm}, \eqref{q1 large r in thm} and \eqref{der of asymp of R as r to infty}, as $r \to + \infty$ we get
\begin{align}\label{lol28}
& - \sqrt{2}\sum_{j=1}^{\ell} p_{j,3}(r;\vec{\beta}_{\ell})q_{j,1}(r;\vec{\beta}_{\ell})\partial_{\beta_{\ell}}q_{0}(r;\vec{\beta}_{\ell}) = \sum_{j=1}^{\ell}  \frac{x_{\ell}^{2/3}}{x_{j}^{2/3}}\beta_{j}\bigg( \frac{2}{\sqrt{3}}\sin(2\vartheta_{j}(r;\vec{\beta}_{\ell}))-1 \bigg)+ \bigO\bigg( \frac{\log r}{r^{2/3}} \bigg).
\end{align}
Hence, substituting \eqref{lol19}--\eqref{lol28} in \eqref{lol9}, we obtain
\begin{align}
& \int_{0}^{r}H(\tau;\vec{\beta})d\tau =  - 2 \sum_{\ell=1}^{m}\int_{0}^{\beta_{\ell}}\bigg\{\sum_{j=1}^{\ell-1} i\beta_{j} \partial_{\beta_{\ell}'}\vartheta_{j}(r;\vec{\beta}_{\ell}')+  i\beta_{\ell}' \partial_{\beta_{\ell}'}\vartheta_{\ell}(r;\vec{\beta}_{\ell}')   \bigg\}d\beta_{\ell}' \nonumber \\
& -\bigg( \frac{1}{2} q_{0}(r;\vec{\beta})^{2}-\frac{\rho}{\sqrt{2}}q_{0}(r;\vec{\beta}) \bigg) + \bigg( \frac{1}{2} q_{0}(r;\vec{0})^{2}-\frac{\rho}{\sqrt{2}}q_{0}(r;\vec{0}) \bigg) + \frac{1}{2}p_{0}(0;\vec{\beta})q_{0}(0;\vec{\beta}) \nonumber \\
& -\frac{1}{4}\bigg( 2p_{0}(r;\vec{\beta})q_{0}(r;\vec{\beta}) - \sum_{j=1}^{m}\Big( 2p_{j,1}(r;\vec{\beta})q_{j,1}(r;\vec{\beta})+  p_{j,2}(r;\vec{\beta})q_{j,2}(r;\vec{\beta})\Big) -3rH(r;\vec{\beta}) \bigg) \nonumber \\
& + \bigO\Big( \frac{\log r}{r^{2/3}} \Big), \qquad \mbox{as } r \to + \infty. \label{lol23}
\end{align}
Since $p_{0}(r;\vec{0})=\frac{1}{\sqrt{2}}(\frac{\rho^{3}}{54}+\frac{\rho}{2})$, $q_{0}(r;\vec{0})=\frac{1}{\sqrt{2}}(-\frac{\rho^{3}}{54}+\frac{\rho}{2})=q_{0}(0;\vec{\beta})$, we see that
\begin{align}
\bigg( \frac{1}{2} q_{0}(r;\vec{0})^{2}-\frac{\rho}{\sqrt{2}}q_{0}(r;\vec{0}) \bigg) + \frac{1}{2}p_{0}(0;\vec{\beta})q_{0}(0;\vec{\beta}) = -\frac{\rho}{2 \sqrt{2}}q_{0}(0;\vec{\beta}) = \frac{\rho^{4}}{216}-\frac{\rho^{2}}{8}. \label{lol24}
\end{align}
Also, using \eqref{all asymp as r to inf}, \eqref{asymp of H as r to inf in thm} and \eqref{useful relation}, we obtain
\begin{align}
& -\bigg( \frac{1}{2} q_{0}(r)^{2}-\frac{\rho}{\sqrt{2}}q_{0}(r) \bigg) -\frac{1}{4}\bigg( 2p_{0}(r)q_{0}(r) - \sum_{j=1}^{m}\Big( 2p_{j,1}(r)q_{j,1}(r)+  p_{j,2}(r)q_{j,2}(r)\Big) -3rH(r) \bigg) \nonumber \\
& = \frac{q_{0}(r)}{\sqrt{2}}\bigg( \frac{\rho}{2}+\sum_{j=1}^{m}p_{j,3}(r)q_{j,1}(r) \bigg) +\frac{3}{4}rH(r) + \frac{1}{4}\sum_{j=1}^{m}\Big( 2p_{j,1}(r)q_{j,1}(r)+ p_{j,2}(r)q_{j,2}(r)\Big) \nonumber \\
& = \sum_{j=1}^{m}\bigg( \frac{3\sqrt{3}}{4}i\beta_{j} (rx_{j})^{\frac{4}{3}} - \frac{\sqrt{3}\rho}{2}i\beta_{j} (rx_{j})^{\frac{2}{3}} - \beta_{j}^{2} \bigg) - \frac{\rho^{4}}{216} + \frac{\rho^{2}}{8} + \bigO (r^{-\frac{2}{3}}), \qquad \mbox{as } r \to + \infty. \label{lol25}
\end{align}
It follows from \cite[equation (7.51)]{DXZ2020 thinning} that
\begin{align}\label{lol26}
& -2\sum_{j=1}^{m}\int_{0}^{\beta_{\ell}} i\beta_{\ell}' \partial_{\beta_{\ell}'}\vartheta_{\ell}(r;\vec{\beta}_{\ell}') d\beta_{\ell}' = \sum_{j=1}^{m} \log\big( G(1+\beta_{j})G(1-\beta_{j}) \big) + \sum_{j=1}^{m}\beta_{j}^{2}\bigg( 1-\frac{4}{3}\log(rx_{j}) - \log\bigg( \frac{9}{2} \bigg) \bigg).
\end{align}
For $m \geq 2$, we also need the relation
\begin{align}\label{lol27}
& - 2 \sum_{\ell=1}^{m}\int_{0}^{\beta_{\ell}}\sum_{j=1}^{\ell-1} i\beta_{j} \partial_{\beta_{\ell}'}\vartheta_{j}(r;\vec{\beta}_{\ell}') d\beta_{\ell}' = -2 \sum_{\ell=1}^{m}\sum_{j=1}^{\ell-1} \beta_{j}\beta_{\ell} \log \frac{|x_{j}^{2/3}-\omega x_{\ell}^{2/3}|}{|x_{j}^{2/3}- x_{\ell}^{2/3}|},
\end{align}
which can be proved directly from \eqref{def of vartheta j} and a direct computation. The asymptotic formula \eqref{asymptotics in main thm} (without the error term) now follows after substituting \eqref{lol24} and \eqref{lol25} in \eqref{lol23} and performing a rather long calculation which uses \eqref{lol26} and \eqref{lol27}. The fact the the error term in \eqref{asymptotics in main thm} is $\bigO(r^{-\frac{2}{3}})$ and not $\bigO(r^{-\frac{2}{3}}\log r)$ follows directly from \eqref{integral representation of F} and \eqref{asymp of H as r to inf in thm}, and \eqref{derivative of error term} follows from \eqref{der of asymp of R as r to infty}. This finishes the proof of Theorem \ref{thm: asymp of fredholm determinant}.

\appendix

\section{Confluent hypergeometric model RH problem}\label{subsection: model RHP with HG functions}
In this appendix we recall a well-known model RH problem, whose solution depends on a parameter $\beta \in i \mathbb{R}$ and is denoted $\Phi_{\mathrm{HG}}(\cdot) = \Phi_{\mathrm{HG}}(\cdot ; \beta)$.
\begin{itemize}
\item[(a)] $\Phi_{\mathrm{HG}} : \mathbb{C} \setminus \Sigma_{\mathrm{HG}} \rightarrow \mathbb{C}^{2 \times 2}$ is analytic, where $\Sigma_{\mathrm{HG}}=e^{\frac{\pi i}{4}}(-\infty,\infty)\cup e^{\frac{\pi i}{2}}(-\infty,\infty) \cup e^{\frac{3\pi i}{4}}(-\infty,\infty)$.
\item[(b)] $\Phi_{\mathrm{HG}}$ satisfies the jump relations
\begin{equation}\label{jumps PHG3}
\Phi_{\mathrm{HG},+}(z) = \Phi_{\mathrm{HG},-}(z)\widetilde{J}_{k}, \qquad z \in \Gamma_{k}^{\mathrm{HG}}, \quad k=1,\ldots,6,
\end{equation}
where 
\begin{align*}
& \Gamma_{1}^{\mathrm{HG}} = (0,i\infty), & & \Gamma_{2}^{\mathrm{HG}} = (0,e^{\frac{3\pi i}{4}}\infty), & & \Gamma_{3}^{\mathrm{HG}} = (e^{-\frac{3\pi i}{4}}\infty,0), \\
& \Gamma_{4}^{\mathrm{HG}} = (-i\infty,0), & & \Gamma_{5}^{\mathrm{HG}} = (e^{-\frac{\pi i}{4}}\infty,0), & & \Gamma_{6}^{\mathrm{HG}} = (0,e^{\frac{\pi i}{4}}\infty),
\end{align*}
and
\begin{align*}
& \widetilde{J}_{1} = \begin{pmatrix}
0 & e^{-i\pi \beta} \\ -e^{i\pi\beta} & 0
\end{pmatrix}, \;\; \widetilde{J}_{4} = \begin{pmatrix}
0 & e^{i\pi\beta} \\ -e^{-i\pi\beta} & 0
\end{pmatrix}, \;\; \widetilde{J}_{2} = \widetilde{J}_{6} = \begin{pmatrix}
1 & 0 \\ e^{i\pi\beta} & 1
\end{pmatrix}, \;\; \widetilde{J}_{3} = \widetilde{J}_{5} = \begin{pmatrix}
1 & 0 \\ e^{-i\pi\beta} & 1
\end{pmatrix}.
\end{align*}
\item[(c)] As $z \to \infty$, $z \notin \Sigma_{\mathrm{HG}}$, we have
\begin{equation}\label{Asymptotics HG}
\Phi_{\mathrm{HG}}(z) = \left( I +  \frac{\Phi_{\mathrm{HG},1}(\beta)}{z} + \bigO(z^{-2}) \right) z^{-\beta\sigma_{3}}e^{-\frac{z}{2}\sigma_{3}}\left\{ \begin{array}{l l}
\displaystyle e^{i\pi\beta  \sigma_{3}}, & \displaystyle \tfrac{\pi}{2} < \arg z <  \tfrac{3\pi}{2}, \\
\begin{pmatrix}
0 & -1 \\ 1 & 0
\end{pmatrix}, & \displaystyle -\tfrac{\pi}{2} < \arg z < \tfrac{\pi}{2},
\end{array} \right.
\end{equation}
where $z^{\beta} = |z|^{\beta}e^{i\beta \arg z}$ with $\arg z \in (-\frac{\pi}{2},\frac{3\pi}{2})$ and
\begin{equation}\label{def of tau}
\Phi_{\mathrm{HG},1}(\beta) = \beta^{2} \begin{pmatrix}
-1 & \tau(\beta) \\ - \tau(-\beta) & 1
\end{pmatrix}, \qquad \tau(\beta) = \frac{- \Gamma\left( -\beta \right)}{\Gamma\left( \beta + 1 \right)}.
\end{equation}
\item[(d)] $\Phi_{\mathrm{HG}}(z) = \bigO(\log z)$ as $z \to 0$.
\end{itemize}
This model RH problem can be solved explicitly using confluent hypergeometric functions \cite{ItsKrasovsky}. By a computation similar to \cite[equation (A.9)]{ChSine} and \cite[equation (A.10)]{DXZ2020 thinning}, we have
\begin{equation}\label{precise asymptotics of Phi HG near 0}
\Phi_{\mathrm{HG}}(z) = \Upsilon^{(0)} (I + \Upsilon^{(1)}z+\bigO(z^{2})) \begin{pmatrix}
1 & \frac{\sin (\pi \beta)}{\pi} \log z \\
0 & 1
\end{pmatrix}, \qquad \mbox{as } z \to 0, \, \arg z \in (\tfrac{3\pi}{4},\tfrac{5\pi}{4}),
\end{equation}
where
\begin{align}\label{def of UpGamma 0 and 1}
\Upsilon^{(0)} = \begin{pmatrix}
\Gamma(1-\beta) & \frac{1}{\Gamma(\beta)} \left( \frac{\Gamma^{\prime}(1-\beta)}{\Gamma(1-\beta)}+2\gamma_{\mathrm{E}} - i \pi \right) \\
\Gamma(1+\beta) & \frac{-1}{\Gamma(-\beta)} \left( \frac{\Gamma^{\prime}(-\beta)}{\Gamma(-\beta)} + 2\gamma_{\mathrm{E}} - i \pi \right)
\end{pmatrix}, \qquad \Upsilon^{(1)}_{21}= \frac{\beta \pi}{\sin (\pi \beta)},
\end{align}
$\gamma_{\mathrm{E}}$ is Euler's gamma constant and
\begin{equation*}
\log z = \log |z| + i \arg z, \qquad \arg z \in \big(-\tfrac{\pi}{2},\tfrac{3\pi}{2}\big),
\end{equation*}

\paragraph{Acknowledgements.} C.C. acknowledges support from the European Research Council, Grant Agreement No. 682537, the Swedish Research Council, Grant No. 2015-05430, and the Ruth and Nils-Erik Stenb\"ack Foundation. P.M. acknowledges support from the Swedish Research Council, Grant No. 2017-05195.

\end{document}